\newif\ifcav{}
\newcommand\aref[1]{\cite{porse-arxiv}}
\newcommand\aref[1]{\cref{#1}}
\definecolor{rowShade}{gray}{0.9}
\definecolor{colShade}{gray}{0.9}
\bfseries\color{black!50!blue},
\bfseries\color{black!50!blue},
\crefname{definition}{Def.}{Defs.}
\crefname{equation}{}{}
\crefname{proposition}{Prop.}{Props.}
\Crefname{proposition}{Proposition}{Propsitions}
\crefname{section}{Sec.}{Secs.}
\Crefname{section}{Section}{Sections}
\crefname{page}{p.}{pp.}
\Crefname{page}{Page}{Pages}
\crefname{chapter}{Ch.}{Ch.}
\Crefname{chapter}{Chapter}{Chapters}
\crefname{remark}{Rmk.}{Rmks.}
\Crefname{remark}{Remark}{Remarks}
\crefname{appendix}{App.}{App.}
\Crefname{appendix}{Appendix}{Appendix}
\crefname{algorithm}{Alg.}{Alg.}
\Crefname{algorithm}{Algorithm}{Algorithms}
\crefname{line}{line}{line}
\Crefname{line}{Line}{Line}
\let\old@algocf@pre@ruled\@algocf@pre@ruled
\renewcommand{\@algocf@pre@ruled}{\Hy@raisedlink{\hyper@anchorstart{algocf.\thealgocf}\hyper@anchorend}\old@algocf@pre@ruled}
 \newcommand{\defn}{\ensuremath{\mathrel{\hat{=}}}}
\newlength{\defwidth}
\newcommand\cfl      {\mathrel{\#}}
\newcommand\indep    {\mathrel{\text{\rotatebox[origin=c]{45}{\smaller$\Box$}}}}
\newcommand\depen    {\mathrel{\text{\rotatebox[origin=c]{45}{\smaller$\boxplus$}}}}
\newcommand\eqdef    {\mathrel{:=}}
\newcommand\fire[1]  {\mathrel{\raisebox{-1.9pt}{$\xrightarrow{#1}$}}}
\newcommand\conf[1]     {\mathop{\mathit{conf}} (#1)}
\newcommand\reach[1]    {\mathop{\mathit{reach}} (#1)}
\newcommand\runs[1]     {\mathop{\mathit{runs}} (#1)}
\newcommand\state[1]    {\mathop{\mathit{state}} (#1)}
\newcommand\inter[1]    {\mathop{\mathit{inter}} (#1)}
\newcommand\enabl[1]    {\mathop{\mathit{enabl}} (#1)}
\newcommand\en[1]       {\mathop{\mathit{en}} (#1)}
\newcommand\ex[1]       {\mathop{\mathit{ex}} (#1)}
\newcommand\cex[1]      {\mathop{\mathit{cex}} (#1)}
\newcommand\effect[1]   {\mathop{\mathit{effect}} (#1)}
\newcommand\tid[1]      {\mathop{\mathit{tid}} (#1)}
\newcommand\cn[1]      {\mathop{\mathit{cn}} (#1)}
\newcommand\causes[1]   {\left \lceil #1 \right \rceil}
\newcommand\future[1]   {\left \lfloor #1 \right \rfloor}
\newcommand\po        {\mathcal{E}}
\newcommand\les         {\mathcal{E}}
\newcommand\unf[1]      {\mathcal{U}_{#1}}
\newcommand\unfpindep   {\unf{P,\indep}\!}
\newcommand\mem      {\mathcal{M}}
\newcommand\locks    {\mathcal{L}}
\newcommand\conds    {\mathcal{C}}
\newcommand\N        {\mathbb{N}}
\newcommand\Np       {{\N^*}}
\newcommand\Z        {\mathbb{Z}}
\newcommand\local    {\texttt{local}}
\newcommand\lock     {\texttt{lock}}
\newcommand\unlock   {\texttt{unlock}}
\newcommand\wait     {\texttt{wait}}
\newcommand\signal   {\texttt{signal}}
\newcommand\bcast    {\texttt{bcast}}
\newcommand\loc      {\textsf{loc}}
\newcommand\acq      {\textsf{acq}}
\newcommand\rel      {\textsf{rel}}
\newcommand\wa       {\textsf{w$_1$}}
\newcommand\waa      {\textsf{w$_2$}}
\newcommand\sigg     {\textsf{sig}}
\newcommand\bro      {\textsf{bro}}
\newcommand\ruleloc  {\textsc{Loc}}
\newcommand\ruleacq  {\textsc{Acq}}
\newcommand\rulerel  {\textsc{Rel}}
\newcommand\rulewa   {\textsc{W1}}
\newcommand\rulewaa  {\textsc{W2}}
\newcommand\rulesigg {\textsc{Sig}}
\newcommand\rulesiggl{\textsc{Sig'}}
\newcommand\rulebro  {\textsc{Bro}}
\newcommand\vs       {vs.\@\xspace}
\newcommand\aka      {a.k.a.\xspace}
\newcommand\wrt      {w.r.t.\@\xspace}
\newcommand\eg       {e.g.\@\xspace}
\newcommand\etc      {etc.\@\xspace}
\newcommand\ie       {i.e.\@\xspace}
\newcommand\st       {s.t.\@\xspace}
\newcommand\resp     {resp.\@\xspace}
\newcommand\tup[1]   {\langle#1\rangle}
\newcommand\set[1]   {{\{ #1 \mathclose \}}}
\newcommand\eqtag[1] {\hfill{} \refstepcounter{equation} \label{#1} (\arabic{equation})}
\begin{document}

\title{Symbolic Partial-Order Execution\\ for Testing Multi-Threaded Programs}

\ifcav{}
\else
\subtitle{Extended Version}
\fi

\ifcav{}
\author{Daniel Schemmel\inst{1}\orcidID{0000-0001-8769-7813} \and
Julian Büning\inst{1}\orcidID{0000-0003-3917-6858} \and
César Rodríguez\inst{2,3} \and
David Laprell\inst{1}\orcidID{0000-0002-2219-0867} \and
Klaus Wehrle\inst{1}\orcidID{0000-0001-7252-4186}}
\else
\author{Daniel Schemmel\inst{1} \and
Julian Büning\inst{1} \and
César Rodríguez\inst{2,3} \and
David Laprell\inst{1} \and
Klaus Wehrle\inst{1}}
\fi
\authorrunning{D. Schemmel et al.}
\institute{RWTH Aachen University, Aachen, Germany
\email{\{daniel.schemmel,julian.buening,david.laprell,wehrle\}@comsys.rwth-aachen.de}
\and Diffblue Ltd., Oxford, UK
\and Université Paris 13, Sorbonne Paris Cité, CNRS, France
\email{cesar.rodriguez@lipn.fr}
}
 \maketitle

\begin{abstract}

We describe a technique for systematic testing of multi-threaded programs.
We combine Quasi-Optimal Partial-Order Reduction, a state-of-the-art technique
that tackles path explosion due to
interleaving non-determinism, with symbolic execution to handle data non-determinism.
Our technique iteratively and exhaustively finds all executions of the program.
It represents program executions using partial orders and finds the next execution using an underlying unfolding semantics.
We avoid the exploration of redundant program traces using cutoff events.
We implemented our technique as an extension of KLEE and evaluated it on a set of large multi-threaded C programs.
Our experiments found several previously undiscovered bugs and undefined behaviors in memcached and GNU sort, showing that the new method is capable of finding bugs in industrial-size benchmarks.

\end{abstract}

\keywords{Software Testing, Symbolic Execution, Partial-Order Reduction}   
\mathligsoff{}

\setcounter{footnote}{0}

\section{Introduction}\label{sec:intro}

Advances in formal testing and the increased
availability of affordable concurrency have spawned two opposing trends:
While it has become possible to analyze increasingly complex sequential
programs in new and powerful ways, many projects are now embracing parallel
processing to fully exploit modern hardware, thus raising the bar for
practically useful formal testing.
In order to make formal testing accessible to software developers working on
parallel programs, two main problems need to be solved.  Firstly, a significant
portion of the API in concurrency libraries such as
\mbox{\texttt{libpthread}} must be supported.  Secondly, the analysis must be
accessible to non-experts in formal verification.  Currently, this niche is
mostly occupied by manual and fuzz testing, oftentimes combined with dynamic
concurrency checkers such as ThreadSanitizer~\cite{tsan} or
Helgrind~\cite{helgrind}.

Data non-determinism in sequential and concurrent programs,
and scheduling non-determinism are two major sources of path explosion in
program analysis.
\emph{Symbolic
execution}~\cite{kingSymbolicExecutionProgram1976,KLEE,threedecades,sage,symbolicpathfinder}
is a technique to reason about input data in sequential
programs. It is capable of dealing with real-world programs.
Partial-Order Reductions (PORs)~\cite{AAJS14,RSSK15,God96,FG05}
are a large family of techniques to explore a reduced number of
thread interleavings without missing any relevant behavior.

In this paper we propose a technique that combines symbolic execution and
a Quasi-Optimal POR~\cite{NRSCP18}. In essence, our approach
(1) runs the program using a symbolic executor,
(2) builds a partial order representing the occurrence of
POSIX threading synchronization primitives
(library functions \mbox{\texttt{pthread\_*}})
seen during that execution,
(3) adds the partial order to an underlying tree-like, unfolding~\cite{RSSK15,Mcm93} data
structure,
(4) computes the first events of the next partial orders to explore,
and
(5) selects a new partial order to explore and starts again.
We use cutoff events~\cite{Mcm93}
to prune the exploration of different traces that reach the
same state, thus natively dealing with non-terminating executions.

We implemented our technique as an extension of KLEE.\@
During the evaluation of this prototype we found nine bugs (that we attribute to
four root causes) in the production version of memcached.
All of these bugs have since been confirmed by the memcached maintainers and are
fixed as of version 1.5.21.
Our tool handles
a significant portion of the POSIX threading API~\cite{POSIX},
including barriers, mutexes and condition variables without being significantly
harder to use than common fuzz testing tools.

The main challenge that our approach needs to address is that of scalability in
the face of an enormous state space.
We tackle this challenge by detecting whenever any two Mazurkiewicz traces reach
the same program state to only further explore one of them.
Additionally, we exploit the fact that data races on non-atomic variables cause
undefined behavior in C~\cite[§~5.1.2.4/35]{C18}, which means that any
unsynchronized memory access is, strictly speaking, a bug.
By adding a data race detection algorithm, we can thereby restrict thread
scheduling decisions to synchronization primitives, such as operations on
mutexes and condition variables, which significantly reduces the state space.

This work has three core contributions, the combination of which enables the
analysis of real-world multi-threaded programs (see also \cref{sec:related} for
related work):

\begin{enumerate}
   \item A partial-order reduction algorithm capable of handling real-world POSIX
      programs that use an arbitrary amount of threads,
      mutexes and condition variables. Our algorithm continues
      analysis in the face of deadlocks.
   \item A cutoff algorithm that recognizes whenever two Mazurkiewicz traces
      reach the same program state, as identified by its actual memory contents.
      This significantly prunes the search space and even enables the
      partial-order reduction to deal with non-terminating executions.
	\item An implementation that finds real-world bugs.
\end{enumerate}

\ifcav{}
We also present an extended, more in-depth version of this paper~\cite{porse-arxiv}.
\else
This paper is an extended version of a paper~\cite{porse-cav} that we presented at CAV 2020 (32nd International Conference on Computer Aided Verification).
\fi \section{Overview}\label{sec:overview}

The technique proposed in this paper can be described as a
process of 5 conceptual steps, each of which we describe in a section below:

\subsection{Sequential Executions}

Consider the program shown in \cref{fig:overview:program}.
Assume that all variables are initially set to zero.
The statement \texttt{a = in()} initializes variable \texttt{a} non-deterministically.
A \emph{run} of the program is a sequence of \emph{actions}, \ie, pairs
\(\tup{i,s}\) where~\( i \in \N \) identifies a thread that executes a statement~\(s\).
For instance, the sequence
\[
\sigma_1 \eqdef
\tup{1, \text{\texttt{a=in()}}},
\tup{1, \text{\texttt{c=3}}},
\tup{2, \text{\texttt{b=c}}},
\tup{2, \text{\texttt{a<0}}},
\tup{2, \text{\texttt{puts("n")}}}
\]
is a run of \cref{fig:overview:program}.
This run represents all program paths where both statements of thread~1
run before the statements of thread~2, and where the statement
\texttt{a = in()} initializes variable~\texttt{a} to a negative number.
In our notion of run, concurrency is represented explicitly (via thread
identifiers) and data non-determinism is represented symbolically (via
constraints on program variables).
To keep things simple the example only has atomic integers (implicitly guarded
by locks) instead of POSIX synchronization primitives.

\begin{figure*}[t]
\centering
\begin{tikzpicture}[
	font=\fontsize{6}{6.25}\selectfont{},
	event/.style={
		draw,
		line width=0.015cm,
		text width=0.3cm,
		align=center,
		inner xsep=0,
		inner ysep=0.05cm,
	},
	lbl/.style={
		inner xsep=0,
		inner ysep=0.04cm,
},
	succ/.style={
		draw,
		->,
		>={Triangle[length=0.1cm,width=0.1cm]},
		line width=0.02cm,
	},
	conflict/.style={
		draw,
		-,
		color=red!80!black,
		>={Triangle[length=0.1cm,width=0.1cm]},
		densely dashed,
		line width=0.02cm,
	},
	sep/.style={
		draw,
		line width=0.01cm,
		dash pattern={on 0.3cm off 0.2cm},
	},
	epast/.style={
		draw,
		line width=0.015cm,
		text width=0.3cm,
		align=center,
		inner xsep=0,
		inner ysep=0.05cm,
	},
	epresent/.style={
		draw,
		color=black,
		fill=black,
		text=white,
		line width=0.015cm,
		text width=0.3cm,
		align=center,
		inner xsep=0,
		inner ysep=0.05cm,
	},
	efuture/.style={
		draw,
		circle,
		line width=0.015cm,
		text width=0.27cm,
		align=center,
		inner sep=0,
	},
	subcap/.style={
		anchor=north,
		inner sep=0,
},
]
	\begin{scope}[shift={(3.0,0)}]
		\def\xsep{0.90cm}
		\def\ysep{0.90cm}
		\def\yseps{0.65cm}
		\def\arrowstub{0.035cm}
		\def\xspace{0.515cm}
		\def\subcapy{-2.4cm}
		\def\subcapw{1cm}
		\def\sepstart{0cm}
		\def\sepend{-2.9cm}

		\def\txtaa{\(\tup{1, \text{\texttt{a=in()}}}\)}
		\def\txtab{\(\tup{1, \text{\texttt{c=3}}}\)}
		\def\txtba{\(\tup{2, \text{\texttt{b=c}}}\)}
		\def\txtbbt{\(\tup{2, \text{\texttt{a>=0}}}\)}
		\def\txtbbf{\(\tup{2, \text{\texttt{a<0}}}\)}
		\def\txtbct{\(\tup{2, \text{\texttt{"y"}}}\)}
		\def\txtbcf{\(\tup{2, \text{\texttt{"n"}}}\)}

		\begin{scope}[shift={(-3.0,0)}]
			\node[anchor=base west,inner sep=0,font=\scshape{}\fontsize{8}{8.5}\selectfont{}] at (-0.15,-0.1) {Thread 1};
			\node[anchor=north west] at (0,0) {
				\begin{lstlisting}[
					basicstyle=\ttfamily{}\fontsize{6}{7}\selectfont{},
					numberstyle=\fontsize{5}{5}\selectfont{}\color{black!60},
					numbersep=3pt,
				]
atomic_int a = in();
atomic_int c = 3;
				\end{lstlisting}
			};

			\node[anchor=base west,inner sep=0,font=\scshape{}\fontsize{8}{8.5}\selectfont{}] at (-0.15,-1.1) {Thread 2};
			\node[anchor=north west] at (0,-1.0) {
				\begin{lstlisting}[
					basicstyle=\ttfamily{}\fontsize{6}{7}\selectfont{},
					numberstyle=\fontsize{5}{5}\selectfont{}\color{black!60},
					numbersep=3pt,
				]
atomic_int b = c;
if(a >= 0)
	puts("y");
else
	puts("n");
				\end{lstlisting}
			};
			\node[subcap] at (1cm,\subcapy) {\begin{minipage}{1cm}\subcaption{}\label{fig:overview:program}\end{minipage}};
		\end{scope}
\begin{scope}[shift={(0,0)}]
			\node[event] (e1) at (0*\xsep,-0*\yseps) {1};
			\node[lbl,anchor=west,inner ysep=0] (l1) at (e1.east) {\txtaa};

			\node[event] (e2) at (0*\xsep,-1*\yseps) {2};
			\node[lbl,anchor=west] (l2) at (e2.east) {\txtab};
			\path[succ] (e1.south) -- (e2.north);

			\node[event] (e3) at (1*\xsep,-2*\yseps) {3};
			\node[lbl,anchor=east] (l3) at (e3.west) {\txtba};
			\path[succ] (e2.south) -- (e3.north west);

			\node[event] (e4) at (1*\xsep,-3*\yseps) {4};
			\node[lbl,anchor=east] (l4) at (e4.west) {\txtbbt};
			\path[succ] (e3.south) -- (e4.north);

			\node[event] (e5) at (1*\xsep,-4*\yseps) {5};
			\node[lbl,anchor=east] (l5) at (e5.west) {\txtbct};
			\path[succ] (e4.south) -- (e5.north);

			\node[subcap] at (1.3cm,\subcapy) {\begin{minipage}{\subcapw}\subcaption{}\label{fig:overview:config1}\end{minipage}};
		\end{scope}
\begin{scope}[shift={(1*\xsep + 2*\xspace,0)}]
			\node[event] (e1) at (0*\xsep,-0*\yseps) {1};
			\node[lbl,anchor=west,inner ysep=0] (l1) at (e1.east) {\txtaa};

			\node[event] (e2) at (0*\xsep,-1*\yseps) {2};
			\node[lbl,anchor=west] (l2) at (e2.east) {\txtab};
			\path[succ] (e1.south) -- (e2.north);

			\node[event] (e3) at (1*\xsep,-2*\yseps) {3};
			\node[lbl,anchor=east] (l3) at (e3.west) {\txtba};
			\path[succ] (e2.south) -- (e3.north west);

			\node[event] (e6) at (1*\xsep,-3*\yseps) {6};
			\node[lbl,anchor=east] (l6) at (e6.west) {\txtbbf};
			\path[succ] (e3.south) -- (e6.north);

			\node[event] (e7) at (1*\xsep,-4*\yseps) {7};
			\node[lbl,anchor=east] (l7) at (e7.west) {\txtbcf};
			\path[succ] (e6.south) -- (e7.north);

			\node[subcap] at (1.3cm,\subcapy) {\begin{minipage}{\subcapw}\subcaption{}\label{fig:overview:config2}\end{minipage}};
		\end{scope}
\begin{scope}[shift={(2*\xsep + 4*\xspace,0)}]
			\node[event] (e1) at (0*\xsep,-0*\ysep) {1};
			\node[lbl,anchor=north] (l1) at (e1.south) {\txtaa};

			\path let \p1 = (e1.north) in
				node[lbl,anchor=north] (l8) at (1*\xsep,\y1) {\txtba};
			\node[event,anchor=north] (e8) at ($(l8.south)+(0,\arrowstub)$) {8};

			\node[event] (e9) at (0*\xsep,-1*\ysep) {9};
			\node[lbl,anchor=north] (l9) at (e9.south) {\txtab};
			\path[succ] ($(l1.south)+(0,\arrowstub)$) -- (e9.north);
			\path[succ] (e8.south) -- (e9.north east);

			\path let \p1 = (e9.north) in
				node[lbl,anchor=north] (l10) at (1*\xsep,\y1+\arrowstub) {\txtbbt};
			\node[event,anchor=north] (e10) at ($(l10.south)+(0,\arrowstub)$) {10};
			\path[succ] let \p1 = (e9.north), \p2 = (e9.north east) in
				($(l1.south)+(0,\arrowstub)$) -- ($(l10.north)-(\x2-\x1,\arrowstub)$);
			\path[succ] (e8.south) -- ($(l10.north)-(0,\arrowstub)$);

			\path let \p1 = (e1.north), \p2 = (e1.center) in
				node[lbl,anchor=north] (l11) at (1*\xsep, -2*\ysep + \y1 - \y2 + \arrowstub) {\txtbct};
			\node[event,anchor=north] (e11) at ($(l11.south)+(0,\arrowstub)$) {11};
			\path[succ] (e10.south) -- ($(l11.north)-(0,\arrowstub)$);

			\node[subcap] at (0.5*\xsep,\subcapy) {\begin{minipage}{\subcapw}\subcaption{}\label{fig:overview:config3}\end{minipage}};
		\end{scope}
\begin{scope}[shift={(3*\xsep + 6*\xspace,0)}]
			\node[event] (e1) at (0*\xsep,-0*\ysep) {1};
			\node[lbl,anchor=north] (l1) at (e1.south) {\txtaa};

			\path let \p1 = (e1.north) in
				node[lbl,anchor=north] (l8) at (1*\xsep,\y1) {\txtba};
			\node[event,anchor=north] (e8) at ($(l8.south)+(0,\arrowstub)$) {8};

			\node[event] (e9) at (0*\xsep,-1*\ysep) {9};
			\node[lbl,anchor=north] (l9) at (e9.south) {\txtab};
			\path[succ] ($(l1.south)+(0,\arrowstub)$) -- (e9.north);
			\path[succ] (e8.south) -- (e9.north east);

			\path let \p1 = (e9.north) in
				node[lbl,anchor=north] (l12) at (1*\xsep,\y1+\arrowstub) {\txtbbf};
			\node[event,anchor=north] (e12) at ($(l12.south)+(0,\arrowstub)$) {12};
			\path[succ] let \p1 = (e9.north), \p2 = (e9.north east) in
				($(l1.south)+(0,\arrowstub)$) -- ($(l12.north)-(\x2-\x1,\arrowstub)$);
			\path[succ] (e8.south) -- ($(l12.north)-(0,\arrowstub)$);

			\path let \p1 = (e1.north), \p2 = (e1.center) in
				node[lbl,anchor=north] (l13) at (1*\xsep, -2*\ysep + \y1 - \y2 + \arrowstub) {\txtbcf};
			\node[event,anchor=north] (e13) at ($(l13.south)+(0,\arrowstub)$) {13};
			\path[succ] (e12.south) -- ($(l13.north)-(0,\arrowstub)$);

			\node[subcap] at (0.5*\xsep,\subcapy) {\begin{minipage}{\subcapw}\subcaption{}\label{fig:overview:config4}\end{minipage}};
		\end{scope}
\begin{scope}[shift={(4*\xsep + 8*\xspace,0)}]
			\path let \p1 = (e1.north) in
				node[lbl,anchor=north] (l8) at (1*\xsep,\y1) {\txtba};
			\node[event,anchor=north] (e8) at ($(l8.south)+(0,\arrowstub)$) {8};

			\path let \p1 = (e9.north) in
				node[lbl,anchor=north] (l14) at (1*\xsep,\y1+\arrowstub) {\txtbbt};
			\node[event,anchor=north] (e14) at ($(l14.south)+(0,\arrowstub)$) {14};
			\path[succ] (e8.south) -- ($(l14.north)-(0,\arrowstub)$);

			\node[event] (e15) at (0*\xsep,-2*\ysep) {15};
			\node[lbl,anchor=north] (l15) at (e15.south) {\txtaa};
			\path[succ] (e14.south) -- (e15.north east);

			\node[event] (e16) at (0*\xsep,-3*\ysep) {16};
			\node[lbl,anchor=east] (l16) at (e16.west) {\txtab};
			\path[succ] ($(l15.south)+(0,\arrowstub)$) -- (e16.north);

			\path let \p1 = (e8.north), \p2 = (e8.center) in
				node[lbl,anchor=north,inner xsep=0] (l17) at (1*\xsep, -2*\ysep + \y1 - \y2 + \arrowstub) {\txtbct};
			\node[event,anchor=north] (e17) at ($(l17.south)+(0,\arrowstub)$) {17};
			\path[succ] (e14.south) -- ($(l17.north)-(0,\arrowstub)$);

			\node[subcap] at (1*\xsep,\subcapy) {\begin{minipage}[t][0cm]{0.4cm}\subcaption{}\label{fig:overview:config5}\end{minipage}};
		\end{scope}
	\end{scope}

	\begin{scope}[shift={(0,-3.3)}]
		\def\xsep{0.5655cm}
		\def\ysep{0.60cm}
		\def\xspace{0.75cm}
		\def\subcapy{-2.2cm}
		\def\subcapspace{0.25cm}
		\def\sepstart{0cm}
		\def\sepend{-2.9cm}

		\begin{scope}[shift={(0*\xsep,0)}]
			\node[event] (e1) at (1*\xsep,-0*\ysep) {1};
			
			\node[event] (e2) at (0*\xsep,-1*\ysep) {2};
			\path[succ] (e1.south) -- (e2.north);

			\node[event] (e3) at (0*\xsep,-2*\ysep) {3};
			\path[succ] (e2.south) -- (e3.north);

			\node[event] (e4) at (0*\xsep,-3*\ysep) {4};
			\path[succ] (e3.south) -- (e4.north);

			\node[event] (e5) at (0*\xsep,-4*\ysep) {5};
			\path[succ] (e4.south) -- (e5.north);

			\node[event] (e6) at (1*\xsep,-3*\ysep) {6};
			\path[succ] (e3.south) -- (e6.north);
			\path[conflict] (e4.east) -- (e6.west);

			\node[event] (e7) at (1*\xsep,-4*\ysep) {7};
			\path[succ] (e6.south) -- (e7.north);

			\node[event] (e8) at (3*\xsep,-0*\ysep) {8};
			\path[conflict] (e2.east) -- (e8.west);
			
			\node[event] (e9) at (1*\xsep,-1*\ysep) {9};
			\path[succ] (e1.south) -- (e9.north);
			\path[succ] (e8.south) -- (e9.north);
			
			\node[event] (e10) at (2*\xsep,-1*\ysep) {10};
			\path[succ] (e1.south) -- (e10.north);
			\path[succ] (e8.south) -- (e10.north);
			
			\node[event] (e11) at (2*\xsep,-2*\ysep) {11};
			\path[succ] (e10.south) -- (e11.north);
			
			\node[event] (e12) at (3*\xsep,-1*\ysep) {12};
			\path[succ] (e1.south) -- (e12.north);
			\path[succ] (e8.south) -- (e12.north);
			\path[conflict] (e10.east) -- (e12.west);
			
			\node[event] (e13) at (3*\xsep,-2*\ysep) {13};
			\path[succ] (e12.south) -- (e13.north);
			
			\node[event] (e14) at (4*\xsep,-1*\ysep) {14};
			\path[conflict] (e1.east) -- (e14.west);
			\path[succ] (e8.south) -- (e14.north);
			
			\node[event] (e15) at (3*\xsep,-3*\ysep) {15};
			\path[succ] (e14.south) -- (e15.north);
			
			\node[event] (e16) at (3*\xsep,-4*\ysep) {16};
			\path[succ] (e15.south) -- (e16.north);
			
			\node[event] (e17) at (4*\xsep,-3*\ysep) {17};
			\path[succ] (e14.south) -- (e17.north);

			\node[subcap] at (2*\xsep,\subcapy) {\begin{minipage}[t][0cm]{1cm}\subcaption{}\label{fig:overview:unfolding}\end{minipage}};
		\end{scope}
\begin{scope}[shift={(5*\xsep,0)}]
			\node[epresent] (e1) at (0.75*\xsep,-0*\ysep) {1};
			
			\node[epresent] (e2) at (0*\xsep,-1*\ysep) {2};
			\path[succ] (e1.south) -- (e2.north);

			\node[epresent] (e3) at (0*\xsep,-2*\ysep) {3};
			\path[succ] (e2.south) -- (e3.north);

			\node[epresent] (e4) at (0*\xsep,-3*\ysep) {4};
			\path[succ] (e3.south) -- (e4.north);

			\node[epresent] (e5) at (0*\xsep,-4*\ysep) {5};
			\path[succ] (e4.south) -- (e5.north);

			\node[efuture] (e6) at (1*\xsep,-3*\ysep) {6};
			\path[succ] (e3.south) -- (e6.north);
			\path[conflict] (e4.east) -- (e6.west);

			\node[efuture] (e8) at (2*\xsep,-0*\ysep) {8};
			\path[conflict] (e2.east) -- (e8.west);
			
			\node[subcap] at (0.875*\xsep,\subcapy) {\begin{minipage}[t][0cm]{1cm}\subcaption{}\label{fig:overview:algo1}\end{minipage}};
		\end{scope}
\begin{scope}[shift={(7.25*\xsep,0)}]
			\node[epresent] (e1) at (0.75*\xsep,-0*\ysep) {1};
			
			\node[epresent] (e2) at (0*\xsep,-1*\ysep) {2};
			\path[succ] (e1.south) -- (e2.north);

			\node[epresent] (e3) at (0*\xsep,-2*\ysep) {3};
			\path[succ] (e2.south) -- (e3.north);

			\node[epast] (e4) at (0*\xsep,-3*\ysep) {4};
			\path[succ] (e3.south) -- (e4.north);

			\node[epresent] (e6) at (1*\xsep,-3*\ysep) {6};
			\path[succ] (e3.south) -- (e6.north);
			\path[conflict] (e4.east) -- (e6.west);

			\node[epresent] (e7) at (1*\xsep,-4*\ysep) {7};
			\path[succ] (e6.south) -- (e7.north);

			\node[efuture] (e8) at (2*\xsep,-0*\ysep) {8};
			\path[conflict] (e2.east) -- (e8.west);

			\node[subcap] at (0.125*\xsep,\subcapy) {\begin{minipage}[t][0cm]{1cm}\subcaption{}\label{fig:overview:algo2}\end{minipage}};
		\end{scope}
\begin{scope}[shift={(9.625*\xsep,0)}]
			\node[epresent] (e1) at (1*\xsep,-0*\ysep) {1};
			
			\node[epast] (e2) at (0*\xsep,-1*\ysep) {2};
			\path[succ] (e1.south) -- (e2.north);

			\node[epresent] (e8) at (3*\xsep,-0*\ysep) {8};
			\path[conflict] (e2.east) -- (e8.west);
			
			\node[epresent] (e9) at (1*\xsep,-1*\ysep) {9};
			\path[succ] (e1.south) -- (e9.north);
			\path[succ] (e8.south) -- (e9.north);
			
			\node[epresent] (e10) at (2*\xsep,-1*\ysep) {10};
			\path[succ] (e1.south) -- (e10.north);
			\path[succ] (e8.south) -- (e10.north);
			
			\node[epresent] (e11) at (2*\xsep,-2*\ysep) {11};
			\path[succ] (e10.south) -- (e11.north);
			
			\node[efuture] (e12) at (3*\xsep,-1*\ysep) {12};
			\path[succ] (e1.south) -- (e12.north);
			\path[succ] (e8.south) -- (e12.north);
			\path[conflict] (e10.east) -- (e12.west);
			
			\node[efuture] (e14) at (4*\xsep,-1*\ysep) {14};
			\path[conflict] (e1.east) -- (e14.west);
			\path[succ] (e8.south) -- (e14.north);

			\node[subcap] at (2*\xsep,\subcapy) {\begin{minipage}[t][0cm]{1cm}\subcaption{}\label{fig:overview:algo3}\end{minipage}};
		\end{scope}
\begin{scope}[shift={(14.625*\xsep,0)}]
			\node[epresent] (e1) at (1*\xsep,-0*\ysep) {1};
			
			\node[epast] (e2) at (0*\xsep,-1*\ysep) {2};
			\path[succ] (e1.south) -- (e2.north);

			\node[epresent] (e8) at (3*\xsep,-0*\ysep) {8};
			\path[conflict] (e2.east) -- (e8.west);
			
			\node[epresent] (e9) at (1*\xsep,-1*\ysep) {9};
			\path[succ] (e1.south) -- (e9.north);
			\path[succ] (e8.south) -- (e9.north);
			
			\node[epast] (e10) at (2*\xsep,-1*\ysep) {10};
			\path[succ] (e1.south) -- (e10.north);
			\path[succ] (e8.south) -- (e10.north);
			
			\node[epresent] (e12) at (3*\xsep,-1*\ysep) {12};
			\path[succ] (e1.south) -- (e12.north);
			\path[succ] (e8.south) -- (e12.north);
			\path[conflict] (e10.east) -- (e12.west);
			
			\node[epresent] (e13) at (3*\xsep,-2*\ysep) {13};
			\path[succ] (e12.south) -- (e13.north);
			
			\node[efuture] (e14) at (4*\xsep,-1*\ysep) {14};
			\path[conflict] (e1.east) -- (e14.west);
			\path[succ] (e8.south) -- (e14.north);

			\node[subcap] at (2*\xsep,\subcapy) {\begin{minipage}[t][0cm]{1cm}\subcaption{}\label{fig:overview:algo4}\end{minipage}};
		\end{scope}
\begin{scope}[shift={(19*\xsep,0)}]
			\node[epast] (e1) at (0*\xsep,-0*\ysep) {1};

			\node[epresent] (e8) at (1.25*\xsep,-0*\ysep) {8};
			
			\node[epresent] (e14) at (2*\xsep,-1*\ysep) {14};
			\path[conflict] (e1.east) -- (e14.west);
			\path[succ] (e8.south) -- (e14.north);
			
			\node[epresent] (e15) at (1*\xsep,-3*\ysep) {15};
			\path[succ] (e14.south) -- (e15.north);
			
			\node[epresent] (e16) at (1*\xsep,-4*\ysep) {16};
			\path[succ] (e15.south) -- (e16.north);
			
			\node[epresent] (e17) at (2*\xsep,-3*\ysep) {17};
			\path[succ] (e14.south) -- (e17.north);

			\node[subcap] at (1.875*\xsep,\subcapy) {\begin{minipage}[t][0cm]{0.4cm}\subcaption{}\label{fig:overview:algo5}\end{minipage}};
		\end{scope}
	\end{scope}

\end{tikzpicture}    \caption{A program (a) with its 5 partial-order runs (b-f), its
   unfolding (g) and the 5 steps used by our algorithm to visit the unfolding
   (h-l).}\label{fig:overview.atomic}
\end{figure*}

\subsection{Independence between Actions and Partial-Order Runs}\label{sec:overview.independence}

Many POR techniques use a notion called
\emph{independence}~\cite{God96} to avoid
exploring concurrent interleavings that lead to the same state.
An independence relation associates pairs of actions that commute (running them
in either order results in the same state).
For illustration purposes, in \cref{fig:overview.atomic} let us
consider two actions as \emph{dependent} iff either both of them belong to the same
thread or one of them writes into a variable which is read/written by the other.
Furthermore, two actions will be \emph{independent} iff they are not dependent.

A sequential run of the program can be viewed as a partial order
when we take into account the independence of actions.
These partial orders are
known as \emph{dependency graphs} in Mazurkiewicz trace theory~\cite{Maz87}
and as \emph{partial-order runs} in this paper.
\Cref{fig:overview:config1,fig:overview:config2,fig:overview:config3,fig:overview:config4,fig:overview:config5}
show all the partial-order runs of \cref{fig:overview:program}.
The partial-order run associated to the run \( \sigma_1 \) above
is~\cref{fig:overview:config2}.
For
\[
\sigma_2 \eqdef
\tup{2, \text{\texttt{b=c}}},
\tup{2, \text{\texttt{a>=0}}},
\tup{1, \text{\texttt{a=in()}}},
\tup{2, \text{\texttt{puts("y")}}},
\tup{1, \text{\texttt{c=3}}},
\]
we get the partial order shown in~\cref{fig:overview:config5}.

\subsection{Unfolding: Merging the Partial Orders}\label{sec:overview.unfolding}

An unfolding~\cite{Mcm93,ERV02,NPW81} is a tree-like structure that uses partial
orders to represent concurrent executions and conflict relations to represent
thread interference and data non-determinism.
We can define unfolding semantics for programs in two conceptual steps:
(1) identify isomorphic events that occur in different partial-order runs;
(2) bind the partial orders together using a conflict relation.

Two events are \emph{isomorphic} when they are structurally equivalent,
\ie, they have the same label (run the same action) and their causal (\ie,
happens-before) predecessors are (transitively) isomorphic.
The number within every event in
\cref{fig:overview:config1,fig:overview:config2,fig:overview:config3,fig:overview:config4,fig:overview:config5}
identifies isomorphic events.

Isomorphic events from different partial orders can be merged together
using a conflict relation for the un-merged parts of those partial orders.
To understand why conflict is necessary, consider the set of events \(C \eqdef \set{1, 2}\).
It obviously represents part of a partial-order run (\cref{fig:overview:config2}, for instance).
Similarly, events \(C' \eqdef \set{1, 8, 9}\) represent (part of) a run.
However, their union \(C \cup C'\) does not represent any run,
because (1) it does not describe what happens-before relation exists between
the dependent actions of events~2 and~8, and (2) it executes the
statement \texttt{c=3} twice.
Unfoldings fix this problem by introducing a \emph{conflict} relation between
events.
Conflicts are to unfoldings what branches are to trees.
If we declare that events~2 and~8 are in conflict, then any conflict-free (and
causally-closed) subset of~\(C \cup C'\) is exactly one of the original partial
orders.
This lets us merge the common parts of multiple partial orders without losing track of the original partial orders.

\Cref{fig:overview:unfolding} represents the unfolding of the program (after merging all 5
partial-order runs).
Conflicts between events are represented by dashed red lines.
Each original partial order can be retrieved by taking a (\ensuremath{\subseteq}-maximal)
set of events which is conflict-free (no two events in conflict are in the set)
and causally closed (if you take some event, then also take all its causal
predecessors).

For instance, the partial order in \cref{fig:overview:config3} can be retrieved by
resolving the conflicts between events 1 \vs{} 14, 2 \vs{} 8, 10 \vs{} 12
in favor of, \resp, 1, 8, 10.
Resolving in favor of 1 means that
events~14 to~17 cannot be selected, because they causally succeed 14.
Similarly, resolving in favor of 8 and 10 means that only events 9 and 11 remain eligible, which hold no conflicts among them---all other events are causal successors of either~2 or~12.

\subsection{Exploring the Unfolding}

Since the unfolding represents all runs of the program via a set of
compactly-merged, prefix-sharing partial orders,
enumerating all the behaviors of the program reduces to
exploring all partial-order runs represented in its unfolding.
Our algorithm iteratively enumerates all \ensuremath{\subseteq}-maximal partial-order runs.

In simplified terms, it proceeds as follows.
Initially we explore the black events shown in \cref{fig:overview:algo1},
therefore exploring the run shown in~\cref{fig:overview:config1}.
We discover the next partial order by computing the so-called \emph{conflicting
extensions} of the current partial order.
These are, intuitively, events in conflict with some event in our current
partial order but such that all its causal predecessors are in our current
partial order.
In \cref{fig:overview:algo1} these are shown in circles, events~8 and~6.

We now find the next partial order by
(1) selecting a conflicting extension, say event~6,
(2) removing all events in conflict with the selected extension and their causal
successors, in this case events~4 and~5, and (3) expanding the partial order
until it becomes maximal, thus exploring the
partial order~\cref{fig:overview:config2}, shown as the black events
of~\cref{fig:overview:algo2}.
Next we select event~8 (removing 2 and its causal successors) and explore
the partial order \cref{fig:overview:config3}, shown as the black events
of~\cref{fig:overview:algo3}.
Note that this reveals two new conflicting extensions that were hidden until
now, events~12 and~14 (hidden because 8 is a causal predecessor of them, but
was not in our partial order). Selecting either of the two extensions makes the
algorithm explore the last two partial orders.

\subsection{Cutoff Events: Pruning the Unfolding}

When the program has non-terminating runs, its unfolding will contain infinite
partial orders and the algorithm above will not finish.
To analyze non-terminating programs we use \emph{cutoff events}~\cite{Mcm93}.
In short, certain events do not need to be explored because they reach the same
state as another event that has been already explored using a shorter
(partial-order) run.
Our algorithm prunes the unfolding at these cutoff events, thus handling
terminating and non-terminating programs that repeatedly reach the same state.

\section{Main Algorithm}\label{sec:algo}

This section formally describes the approach presented in this paper.

\subsection{Programs, Actions, and Runs}\label{sec:programs.actions}

Let \(P \eqdef \tup{T, \locks, \conds}\) represent a (possibly non-terminating)
multi-threaded POSIX C program, where
\(T\) is the set of statements,
\( \locks \) is the set of POSIX mutexes used in the program, and
\( \conds \) is the set of condition variables.
This is a deliberately simplified presentation of our program syntax,
see~\aref{app:model} for full details.
We represent the behavior of each statement in~$P$ by an
\emph{action},
\ie, a pair \(\tup{i, b}\) in $A \subseteq \N \times B$, where $i \ge 1$ identifies
the thread executing the statement and~$b$ is the \emph{effect} of the
statement.
We consider the following effects:
\begin{align*}
 B \eqdef& ~(\set{\loc} \times T)
            \cup (\set{\acq,\rel} \times \locks)
            \cup (\set{\sigg} \times \conds \times \N) \\
     \cup& ~(\set{\bro} \times \conds \times 2^\N)
            \cup (\set{\wa,\waa} \times \conds \times \locks)
\end{align*}
Below we informally explain the intent of an effect and how
actions of different effects interleave with each other.
In~\aref{app:lts} we use \emph{actions} and \emph{effects} to
define labeled transition system semantics to~$P$.
Below we also (informally) define an independence relation
(see \cref{sec:overview.independence}) between actions.

\paragraph{Local actions.}
An action \(\tup{i, \tup{\loc, t}}\) represents the execution of a
\emph{local} statement~\(t\) from thread~\(i\), \ie, a statement which
manipulates local variables.
For instance,
the actions labeling events~1 and~3 in~\cref{fig:3:b} are local actions.
Note that local actions do not interfere with actions of other threads.
Consequently, they are only dependent on actions of the same thread.

\paragraph{Mutex lock/unlock.}
Actions
\(\tup{i, \tup{\acq, l}}\) and
\(\tup{i, \tup{\rel, l}}\)
respectively represent that thread~\(i\) locks or unlocks mutex
\( l \in \locks \).
The semantics of these actions correspond to the so-called \verb!NORMAL!
mutexes in the POSIX standard~\cite{POSIX}.
Actions of~\(\tup{\acq, l}\) or~\(\tup{\rel, l}\) effect are only dependent
on actions whose effect is an operation on the same mutex~\(l\)
($\acq$, $\rel$, $\wa$ or $\waa$, see below).
For instance the action of event~4~(\( \rel \)) in~\cref{fig:3:b} depends on the
action of event~6~(\( \acq \)).

\paragraph{Wait on condition variables.}
The occurrence of a \mbox{\texttt{pthread\_cond\_wait(c, l)}} statement is represented by
two separate actions of effect $\tup{\wa, c, l}$ and $\tup{\waa, c, l}$.
An action $\tup{i, \tup{\wa, c, l}}$
represents that thread~$i$ has atomically released the lock~$l$ \emph{and}
started waiting on condition variable~$c$.
An action $\tup{i, \tup{\waa, c, l}}$
indicates that thread~$i$ has been woken up by a \emph{signal} or
\emph{broadcast} operation on~$c$ \emph{and} that it successfully re-acquired
mutex~$l$.
For instance the action $\tup{1, \tup{\wa, c, m}}$ of event~10
in~\cref{fig:3:c} represents that thread~1 has released mutex~$m$
and is waiting for~$c$ to be signaled. After the signal happens (event~12)
the action $\tup{1, \tup{\waa, c, m}}$ of event~14 represents
that thread~1 wakes up and re-acquires mutex~$m$.
An action $\tup{i, \tup{\wa, c, l}}$
is dependent on any action whose effect operates on mutex~$l$
($\acq$, $\rel$, $\wa$ or $\waa$) as well as
signals  directed to thread~$i$ ($\tup{\sigg, c, i}$, see below),
lost signals ($\tup{\sigg, c, 0}$, see below),
and any broadcast ($\tup{\bro, c, W}$ for any~$W \subseteq \N$, see below).
Similarly, an action $\tup{i, \tup{\waa, c,l}}$
is dependent on any action whose effect operates on lock~$l$
as well as signals and broadcasts directed to thread~$i$
(that is, either $\tup{\sigg, c, i}$ or $\tup{\bro, c, W}$ when~$i \in W$).

\paragraph{Signal/broadcast on condition variables.}
An action $\tup{i, \tup{\sigg, c, j}}$, with $j \ge 0$ indicates that
thread~$i$ executed a \mbox{\texttt{pthread\_cond\_signal(c)}} statement. If $j = 0$
then no thread was waiting on condition variable~$c$, and the
\emph{signal} had no effect, as per the POSIX semantics.
We refer to these as \emph{lost signals}.
Example: events~7 and~17 in~\cref{fig:3:b,fig:3:d} are labeled by lost signals.
In both cases thread~1 was not waiting on the condition variable when the
signal happened.
However, when $j \ge 1$ the action represents that thread~$j$ wakes up
by this signal.
Whenever a signal wakes up a thread $j \ge 1$, we can always find a (unique)
$\wa$ action of thread~$j$ that happened before the signal and a unique $\waa$
action in thread~$j$ that happens after the signal.
For instance, event~12 in~\cref{fig:3:c} signals thread~1, which
went sleeping in the $\wa$ event~10 and wakes up in the $\waa$ event~14.
Similarly, an action $\tup{i, \tup{\bro, c, W}}$, with $W \subseteq \N$
indicates that thread~$i$ executed a \mbox{\texttt{pthread\_cond\_broadcast(c)}}
statement and any thread~$j$ such that $j \in W$ was woken up.
If $W = \emptyset$,
then no thread was waiting on condition variable~$c$
(\emph{lost broadcast}).
Lost signals and broadcasts on~$c$ depend on any
action of~$\tup{\wa, c, \cdot}$ effect as well as any non-lost
signal/broadcast on~$c$.
Non-lost signals and broadcasts on~$c$ that wake up thread~$j$
depend\footnote{The formal definition is slightly more complex,
see~\aref{app:indep.programs} for the details.}
on $\wa$ and $\waa$ actions of thread~$j$ as well as any signal/broadcast
(lost or not) on the same condition variable.

A \emph{run} of~$P$ is a sequence of actions in~$A^*$ which respects the
constraints stated above for actions.
For instance, a run for the program shown in~\cref{fig:3:a} is the sequence of
actions which labels \emph{any} topological order of the events shown
in any partial order
in~\cref{fig:3:b,fig:3:c,fig:3:d,fig:3:e}.
The sequence below,
\begin{multline*}
\tup{1, \tup{\loc, \text{\texttt{x=in()}}}},
\tup{2, \tup{\loc, \text{\texttt{y=1}}}},
\tup{1, \tup{\acq, m}}, \\
\tup{1, \tup{\loc, \text{\texttt{x>=0}}}},
\tup{1, \tup{\rel, m}},
\tup{2, \tup{\acq, m}}
\end{multline*}
is a run of~\cref{fig:3:a}.
Naturally, if $\sigma \in A^*$ is a run, any prefix of~$\sigma$ is also a run.
Runs explicitly represent concurrency, using thread identifiers,
and symbolically represent data non-determinism,
using constraints, as illustrated by the 1st and 4th actions of the run above.
We let $\runs P$ denote the set of all runs of~$P$.

A \emph{concrete state of~$P$} is a tuple that represents, intuitively, the program
counters of each thread, the values of all memory locations, the mutexes locked
by each thread, and, for each condition variable, the set of threads waiting
for it
(see~\aref{app:lts} for a formal definition).
Since runs represent operations on symbolic data, they reach a symbolic
state, which conceptually corresponds to a set of concrete states of~$P$.

The \emph{state of a run} $\sigma$, written $\state \sigma$,
is the set of all concrete states of~$P$ that are reachable
when the program executes the run~$\sigma$.
For instance, the run~$\sigma'$ given above reaches a state consisting on all
program states where
\mbox{\texttt{y}} is~1,
\mbox{\texttt{x}} is a non-negative number,
thread~2 owns mutex~\mbox{\texttt{m}} and its instruction pointer is at line~3,
and thread~1 has finished.
We let $\reach P \eqdef \bigcup_{\sigma \in \runs P} \state \sigma$
denote the set of all \emph{reachable states} of~$P$.

\begin{figure*}[t]
\centering
\begin{tikzpicture}[
	font=\fontsize{6}{6.25}\selectfont{},
	event/.style={
		draw,
		font=\fontsize{7}{8.5}\selectfont{},
		line width=0.015cm,
		text width=0.3cm,
		align=center,
		inner xsep=0.03cm,
		inner ysep=0.05cm,
	},
	lbl/.style={
			inner xsep=0.02cm,
			inner ysep=0.03cm,
	},
	succ/.style={
		draw,
		->,
		>={Triangle[length=0.1cm,width=0.1cm]},
		line width=0.02cm,
	},
	conflict/.style={
		draw,
		-,
		color=red!80!black,
		>={Triangle[length=0.1cm,width=0.1cm]},
		densely dashed,
		line width=0.02cm,
	},
	deadlock/.style={
		font=\fontsize{8}{8.5}\selectfont{},
	},
	subcap/.style={
		anchor=north,
		inner sep=0,
},
]
	\begin{scope}[shift={(4.0,0)}]
		\def\xsep{1.08cm}
		\def\ysep{0.75cm}
		\def\yseps{0.50cm}
		\def\ysepsd{0.65cm}
		\def\arrowstub{0.035cm}
		\def\xspace{0.56cm}
		\def\subcapy{-4.3cm}
		\def\subcapw{1cm}
		\def\sepstart{0cm}
		\def\sepend{-2.9cm}

		\def\txtaa{\(\tup{\loc, \text{\texttt{x=in()}}}\)}
		\def\txtab{\(\tup{\acq, m}\)}
		\def\txtact{\(\tup{\loc, \text{\texttt{x<0}}}\)}
		\def\txtacf{\(\tup{\loc, \text{\texttt{x>=0}}}\)}
		\def\txtad{\(\tup{\wa, c, m}\)}
		\def\txtae{\(\tup{\waa, c, m}\)}
		\def\txtaf{\(\tup{\rel, m}\)}
		\def\txtba{\(\tup{\loc, \text{\texttt{y=1}}}\)}
		\def\txtbb{\(\tup{\acq, m}\)}
		\def\txtbct{\(\tup{\sigg, c, 1}\)}
		\def\txtbcf{\(\tup{\sigg, c, 0}\)}
		\def\txtbd{\(\tup{\rel, m}\)}

		\begin{scope}[shift={(-3.9,0)}]
			\node[anchor=base west,inner sep=0,font=\scshape{}\fontsize{8}{8.5}\selectfont{}] at (-0.15,-0.1) {Thread 1};
			\node[anchor=north west] at (0,0) {
				\begin{lstlisting}[
					basicstyle=\ttfamily{}\fontsize{6}{7}\selectfont{},
					numberstyle=\fontsize{5}{5}\selectfont{}\color{black!60},
					numbersep=3pt,
				]
x = in();
pthread_mutex_lock(m);
if(x < 0)
  pthread_cond_wait(c, m);
pthread_mutex_unlock(m);
				\end{lstlisting}
			};

			\node[anchor=base west,inner sep=0,font=\scshape{}\fontsize{8}{8.5}\selectfont{}] at (-0.15,-2.20) {Thread 2};
			\node[anchor=north west] at (0,-2.10) {
				\begin{lstlisting}[
					basicstyle=\ttfamily{}\fontsize{6}{7}\selectfont{},
					numberstyle=\fontsize{5}{5}\selectfont{}\color{black!60},
					numbersep=3pt,
				]
y = 1;
pthread_mutex_lock(m);
pthread_cond_signal(c, m);
pthread_mutex_unlock(m);
				\end{lstlisting}
			};
			\node[subcap] at (1cm,\subcapy) {\begin{minipage}[t][0cm]{1cm}\subcaption{}\label{fig:3:a}\end{minipage}};
		\end{scope}
		\begin{scope}[shift={(0,0)}]
			\node[event] (e1) at (0*\xsep,-0*\ysep-0*\yseps) {1};
			\node[lbl,anchor=north] (l1) at (e1.south) {\txtaa};

			\node[event] (e2) at (0*\xsep,-1*\ysep-0*\yseps) {2};
			\node[lbl,anchor=west] (l2) at (e2.east) {\txtab};
			\path[succ] ($(l1.south)+(0,\arrowstub)$) -- (e2.north);

			\node[event] (e3) at (0*\xsep,-1*\ysep-1*\yseps) {3};
			\node[lbl,anchor=west] (l3) at (e3.east) {\txtacf};
			\path[succ] (e2.south) -- (e3.north);

			\node[event] (e4) at (0*\xsep,-1*\ysep-2*\yseps) {4};
			\node[lbl,anchor=west] (l4) at (e4.east) {\txtaf};
			\path[succ] (e3.south) -- (e4.north);

			\path let \p1 = (e1.north) in
				node[lbl,anchor=north,inner ysep=0] (l5) at (1*\xsep,\y1) {\txtba};
			\node[event,anchor=north] (e5) at ($(l5.south)+(0,-0.025)$) {5};

			\node[event] (e6) at (1*\xsep,-1*\ysep-2*\yseps-1*\ysepsd) {6};
			\node[lbl,anchor=east] (l6) at (e6.west) {\txtbb};
			\path[succ] (e4.south) -- (e6.north);
			\path[succ] (e5.south) to[bend left=20] (e6.north);

			\node[event] (e7) at (1*\xsep,-1*\ysep-3*\yseps-1*\ysepsd) {7};
			\node[lbl,anchor=east] (l7) at (e7.west) {\txtbcf};
			\path[succ] (e6.south) -- (e7.north);

			\node[event] (e8) at (1*\xsep,-1*\ysep-4*\yseps-1*\ysepsd) {8};
			\node[lbl,anchor=east] (l8) at (e8.west) {\txtbd};
			\path[succ] (e7.south) -- (e8.north);

			\node[subcap] at (0.5*\xsep,\subcapy) {\begin{minipage}[t][0cm]{1cm}\subcaption{}\label{fig:3:b}\end{minipage}};
		\end{scope}
		\begin{scope}[shift={(1*\xsep + 2*\xspace,0)}]
			\node[event] (e1) at (0*\xsep,-0*\ysep-0*\yseps) {1};
			\node[lbl,anchor=north] (l1) at (e1.south) {\txtaa};

			\node[event] (e2) at (0*\xsep,-1*\ysep-0*\yseps) {2};
			\node[lbl,anchor=west] (l2) at (e2.east) {\txtab};
			\path[succ] ($(l1.south)+(0,\arrowstub)$) -- (e2.north);

			\node[event] (e9) at (0*\xsep,-1*\ysep-1*\yseps) {9};
			\node[lbl,anchor=west] (l9) at (e9.east) {\txtact};
			\path[succ] (e2.south) -- (e9.north);

			\node[event] (e10) at (0*\xsep,-1*\ysep-2*\yseps) {10};
			\node[lbl,anchor=west] (l10) at (e10.east) {\txtad};
			\path[succ] (e9.south) -- (e10.north);

			\path let \p1 = (e1.north) in
				node[lbl,anchor=north,inner ysep=0] (l5) at (1*\xsep,\y1) {\txtba};
			\node[event,anchor=north] (e5) at ($(l5.south)+(0,-0.025)$) {5};

			\node[event] (e11) at (1*\xsep,-1*\ysep-2*\yseps-1*\ysepsd) {11};
			\node[lbl,anchor=east] (l11) at (e11.west) {\txtbb};
			\path[succ] (e10.south) -- (e11.north);
			\path[succ] (e5.south) to[out=280,in=80] (l10.east) to[out=260,in=60] (e11.north);

			\node[event] (e12) at (1*\xsep,-1*\ysep-3*\yseps-1*\ysepsd) {12};
			\node[lbl,anchor=east] (l12) at (e12.west) {\txtbct};
			\path[succ] (e11.south) -- (e12.north);

			\node[event] (e13) at (1*\xsep,-1*\ysep-4*\yseps-1*\ysepsd) {13};
			\node[lbl,anchor=east] (l13) at (e13.west) {\txtbd};
			\path[succ] (e12.south) -- (e13.north);

			\node[event] (e14) at (0*\xsep,-1*\ysep-4*\yseps-2*\ysepsd) {14};
			\node[lbl,anchor=west] (l14) at (e14.east) {\txtae};
			\path[succ] (e10.south) to[bend right=8] (e14.north);
			\path[succ] (e13.south) -- (e14.north);

			\node[event] (e15) at (0*\xsep,-1*\ysep-5*\yseps-2*\ysepsd) {15};
			\node[lbl,anchor=west] (l15) at (e15.east) {\txtaf};
			\path[succ] (e14.south) -- (e15.north);

			\node[subcap] at (1.45cm,\subcapy) {\begin{minipage}[t][0cm]{1cm}\subcaption{}\label{fig:3:c}\end{minipage}};
		\end{scope}
		\begin{scope}[shift={(2*\xsep + 4*\xspace,0)}]
			\node[event] (e1) at (0*\xsep,-0*\ysep-0*\yseps) {1};
			\node[lbl,anchor=north] (l1) at (e1.south) {\txtaa};

			\path let \p1 = (e1.north) in
				node[lbl,anchor=north,inner ysep=0] (l5) at (1*\xsep,\y1) {\txtba};
			\node[event,anchor=north] (e5) at ($(l5.south)+(0,-0.025)$) {5};

			\node[event] (e16) at (1*\xsep,-1*\ysep-0*\yseps) {16};
			\node[lbl,anchor=east] (l16) at (e16.west) {\txtbb};
			\path[succ] (e5.south) to (e16.north);

			\node[event] (e17) at (1*\xsep,-1*\ysep-1*\yseps) {17};
			\node[lbl,anchor=east] (l17) at (e17.west) {\txtbcf};
			\path[succ] (e16.south) -- (e17.north);

			\node[event] (e18) at (1*\xsep,-1*\ysep-2*\yseps) {18};
			\node[lbl,anchor=east] (l18) at (e18.west) {\txtbd};
			\path[succ] (e17.south) -- (e18.north);

			\node[event] (e19) at (0*\xsep,-1*\ysep-2*\yseps-1*\ysepsd) {19};
			\node[lbl,anchor=west] (l19) at (e19.east) {\txtab};
			\path[succ] ($(l1.south)+(0,\arrowstub)$) to[bend right=8] (e19.north);
			\path[succ] (e18.south) -- (e19.north);

			\node[event] (e20) at (0*\xsep,-1*\ysep-3*\yseps-1*\ysepsd) {20};
			\node[lbl,anchor=west] (l20) at (e20.east) {\txtacf};
			\path[succ] (e19.south) -- (e20.north);

			\node[event] (e21) at (0*\xsep,-1*\ysep-4*\yseps-1*\ysepsd) {21};
			\node[lbl,anchor=west] (l21) at (e21.east) {\txtaf};
			\path[succ] (e20.south) -- (e21.north);

			\node[subcap] at (0.5*\xsep,\subcapy) {\begin{minipage}[t][0cm]{1cm}\subcaption{}\label{fig:3:d}\end{minipage}};
		\end{scope}
		\begin{scope}[shift={(3*\xsep + 6*\xspace,0)}]
			\node[event] (e1) at (0*\xsep,-0*\ysep-0*\yseps) {1};
			\node[lbl,anchor=north] (l1) at (e1.south) {\txtaa};

			\path let \p1 = (e1.north) in
				node[lbl,anchor=north,inner sep=0] (l5) at (1*\xsep,\y1) {\txtba};
			\node[event,anchor=north] (e5) at ($(l5.south)+(0,-0.025)$) {5};

			\node[event] (e16) at (1*\xsep,-1*\ysep-0*\yseps) {16};
			\node[lbl,anchor=east] (l16) at (e16.west) {\txtbb};
			\path[succ] (e5.south) to (e16.north);

			\node[event] (e17) at (1*\xsep,-1*\ysep-1*\yseps) {17};
			\node[lbl,anchor=east] (l17) at (e17.west) {\txtbcf};
			\path[succ] (e16.south) -- (e17.north);

			\node[event] (e18) at (1*\xsep,-1*\ysep-2*\yseps) {18};
			\node[lbl,anchor=east] (l18) at (e18.west) {\txtbd};
			\path[succ] (e17.south) -- (e18.north);

			\node[event] (e19) at (0*\xsep,-1*\ysep-2*\yseps-1*\ysepsd) {19};
			\node[lbl,anchor=west] (l19) at (e19.east) {\txtab};
			\path[succ] ($(l1.south)+(0,\arrowstub)$) to[bend right=8] (e19.north);
			\path[succ] (e18.south) -- (e19.north);

			\node[event] (e22) at (0*\xsep,-1*\ysep-3*\yseps-1*\ysepsd) {22};
			\node[lbl,anchor=west] (l22) at (e22.east) {\txtact};
			\path[succ] (e19.south) -- (e22.north);

			\node[event] (e23) at (0*\xsep,-1*\ysep-4*\yseps-1*\ysepsd) {23};
			\node[lbl,anchor=west] (l23) at (e23.east) {\txtad};
			\path[succ] (e22.south) -- (e23.north);

			\node[deadlock] at (0.5*\xsep,-1*\ysep-4*\yseps-1*\ysepsd-0.5cm) {(deadlock!)};

			\node[subcap] at (0.5*\xsep,\subcapy) {\begin{minipage}[t][0cm]{1cm}\subcaption{}\label{fig:3:e}\end{minipage}};
		\end{scope}
	\end{scope}

\end{tikzpicture} \caption{A program and its four partial-order runs.}\label{fig:3}
\end{figure*}

\subsection{Independence}

In the previous section, given an action $a \in A$ we informally defined the
set of actions which are \emph{dependent} on~$a$, therefore indirectly defining
an \emph{independence relation}.
We now show that this relation is a \emph{valid independence}~\cite{FG05,RSSK15}.
Intuitively, an independence relation is
\emph{valid} when every pair of actions it declares as independent
can be executed in any order while still producing the same state.

Our independence relation is valid only for \emph{data-race-free} programs.
We say that $P$ is \emph{data-race-free} iff any two local actions
$a \eqdef \tup{i, \tup{\loc, t}}$ and
$a' \eqdef \tup{i', \tup{\loc, t'}}$
from different threads ($i \ne i'$) commute at every
reachable state of~$P$.
See~\aref{app:indep} for additional details.
This ensures that local statements of different threads of~$P$ modify the memory
without interfering each other.

\begin{theorem}
If $P$ is data-race-free, then the independence relation defined in
\cref{sec:programs.actions} is valid.
\end{theorem}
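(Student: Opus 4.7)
The plan is to verify, case by case on pairs of actions declared independent, that the two standard properties of a valid independence relation hold: if $a$ and $a'$ are independent and both enabled at a reachable state $s$, then (i) $a$ remains enabled after firing $a'$ (and symmetrically), and (ii) firing $aa'$ and firing $a'a$ from $s$ leads to the same state. Because the labelled transition system semantics of $P$ are defined in \aref{app:lts}, each case reduces to inspecting the inference rules for the effects involved and checking that they touch disjoint parts of the program state (program counters, memory, mutex owners, wait queues). I would fix a reachable state $s \in \reach P$ and a pair of independent actions $a = \tup{i,b}$ and $a' = \tup{i',b'}$; by definition of the dependency relation these satisfy $i \neq i'$ and the pair $(b,b')$ falls into one of the permitted combinations.

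I would proceed through the cases in increasing order of subtlety. The \emph{easy} cases are: (1) two mutex operations $\acq/\rel$ on distinct mutexes, which touch disjoint lock-owner components; (2) a mutex operation and a signal/broadcast on a condition variable, or two signal/broadcast operations on different condition variables, again affecting disjoint components; (3) one local action and any action of a different thread that is not itself a local action of a conflicting memory footprint. The commutation in each of these cases is a direct computation on the transition rules: the enabling predicates read only the component modified by the other action in trivial (preserved) ways, and the post-states agree pointwise. The \emph{local--local} case is exactly the one where data-race-freedom is used: by the hypothesis that $P$ is data-race-free, two local actions $\tup{i,\tup{\loc,t}}$ and $\tup{i',\tup{\loc,t'}}$ with $i \neq i'$ commute at every reachable state, which is literally the commutation requirement.

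The main obstacle, and the case I would write up most carefully, is the condition-variable machinery, since $\wait$ is split into $\wa$ and $\waa$ and signals carry a target thread identifier. Here I would argue that an action $\tup{i,\tup{\wa,c,l}}$ (or $\tup{\waa,c,l}$) is declared dependent on every $\acq/\rel/\wa/\waa$ operation on the same $l$ and on every $\sigg$ or $\bro$ on $c$ that could plausibly target or concern thread~$i$ (including lost ones, since firing a $\wa$ changes whether a subsequent signal is lost). Thus the only surviving independent pairs involve signals/broadcasts on $c$ that affect a disjoint set of threads, or operations on a different condition variable / different mutex, and for these the wait-queue updates and the newly-runnable-thread sets are disjoint, so the rules commute. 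The targeted form of the $\sigg$ and $\bro$ effects (carrying the concrete woken thread or set $W$) is crucial here because it pre-resolves the non-determinism of POSIX's signal semantics, making the post-state a deterministic function of the action.

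Finally I would combine the cases into the two commutation properties: enabling is preserved because no independent action modifies a guard of the other (checked per case above), and the two post-states coincide because the effects act on disjoint state components (and, in the local--local subcase, by the data-race-freedom hypothesis). Full details for each rule are straightforward but tedious and would be deferred to \aref{app:indep}; I would flag the condition-variable case as the only one where the dependency declaration needs careful reading against the rules of \aref{app:lts}.
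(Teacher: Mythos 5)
Your proposal is correct and follows essentially the same route as the paper's proof in \aref{app:indep.programs}: a case analysis over pairs of independent actions, checking enabledness-preservation and post-state equality directly against the inference rules of \aref{app:lts} by observing that the rules read and write disjoint state components, with data-race-freedom invoked exactly and only for the cross-thread local--local case. You also correctly single out the condition-variable cases (the $\wa$/$\waa$ split and the targeted $\sigg$/$\bro$ effects) as the ones requiring the most care, which is where the paper's proof likewise spends most of its effort.
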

\begin{proof}
See~\aref{app:indep.programs}.
\end{proof}

Our technique does not use data races as a source of thread interference for
partial-order reduction.
It will not explore two execution orders for the two statements that exhibit a
data race. However, it can be used to detect and report data races found
during the POR exploration, as we will see in \cref{sec:data.race}.

\subsection{Partial-Order Runs}

A \emph{labeled partial-order} (LPO) is a tuple
\(\tup{X, {<}, h}\) where \(X\) is a set of \emph{events},
\({<} \subseteq X \times X\) is a \emph{causality} (\aka, \emph{happens-before})
relation,
and \(h \colon X \to A\) labels each event by an \emph{action} in~\(A\).

A \emph{partial-order run} of~$P$ is an LPO that represents a
run of~$P$ without enforcing an order of execution on
actions that are independent.
All partial-order runs of \cref{fig:3:a} are shown in
\cref{fig:3:b,fig:3:c,fig:3:d,fig:3:e}.

Given a run $\sigma$ of~$P$,
we obtain the corresponding partial-order run
$\po_\sigma \eqdef \tup{E, {<}, h}$
by the following procedure:
(1) initialize $\po_\sigma$ to be the only totally-ordered LPO that consists
of $|\sigma|$ events where the i-th event is labeled by the i-th action of~$\sigma$;
(2) for every two events $e, e'$ such that $e < e'$, remove the pair
$\tup{e,e'}$ from~$<$ if $h(e)$ is independent from~$h(e')$;
(3) restore transitivity in~$<$ (\ie, if $e < e'$ and $e' < e''$, then add
$\tup{e, e''}$ to $<$).
The resulting LPO is a partial-order run of~$P$.

Furthermore, the originating run~$\sigma$ is an \emph{interleaving} of
$\po_\sigma$.
Given some LPO~$\po \eqdef \tup{E, {<}, h}$, an interleaving of~$\po$
is the sequence that labels any topological ordering of~$\po$.
Formally, it is any sequence $h(e_1), \ldots, h(e_n)$ such that
$E = \set{e_1, \ldots, e_n}$ and $e_i < e_j \implies i < j$.
We let~$\inter \po$ denote the set of all interleavings of~$\po$.
Given a partial-order run~$\po$ of~$P$, the interleavings~$\inter\po$ have two
important properties:
every interleaving in~$\inter\po$ is a run of~$P$, and
any two interleavings $\sigma, \sigma' \in \inter \po$ reach the same
state $\state \sigma = \state{\sigma'}$.

\subsection{Prime Event Structures}\label{sec:algo.pes}

\begin{figure}[t]
\centering
\begin{tikzpicture}[
	font=\fontsize{7}{6.25}\selectfont{},
	event/.style={
		draw,
		line width=0.015cm,
		text width=0.3cm,
		align=center,
		inner xsep=0,
		inner ysep=0.07cm,
	},
	lbl/.style={
		inner xsep=0,
		inner ysep=0.05cm,
	},
	succ/.style={
		draw,
		->,
		>={Triangle[length=0.1cm,width=0.1cm]},
		line width=0.02cm,
	},
	foll/.style={
		draw,
		->,
		>={Triangle[length=0.1cm,width=0.1cm]},
		line width=0.02cm,
		dotted,
	},
	conflict/.style={
		draw,
		-,
		color=red!80!black,
		>={Triangle[length=0.1cm,width=0.1cm]},
		densely dashed,
		line width=0.02cm,
	},
	config/.style={
		draw,
		ellipse,
		inner sep=0.05cm,
	},
	subcap/.style={
		anchor=north,
		inner sep=0,
},
]
	\def\xsepb{0.65cm}
	\def\ysepb{0.56cm}
	\def\subcapy{-4.30cm}

	\begin{scope}[shift={(-8cm,0)}]
		\node[event] (e1) at (1*\xsepb,-0*\ysepb) {1};
		
		\node[event] (e2) at (1*\xsepb,-1*\ysepb) {2};
		\path[succ] (e1.south) -- (e2.north);
		
		\node[event] (e3) at (0*\xsepb,-2*\ysepb) {3};
		\path[succ] (e2.south) -- (e3.north);
		
		\node[event] (e4) at (0*\xsepb,-3*\ysepb) {4};
		\path[succ] (e3.south) -- (e4.north);

		\node[event] (e5) at (3*\xsepb,-0*\ysepb) {5};
		
		\node[event] (e6) at (0*\xsepb,-4*\ysepb) {6};
		\path[succ] (e4.south) -- (e6.north);
		
		\node[event] (e7) at (0*\xsepb,-5*\ysepb) {7};
		\path[succ] (e6.south) -- (e7.north);
		
		\node[event] (e8) at (0*\xsepb,-6*\ysepb) {8};
		\path[succ] (e7.south) -- (e8.north);
		
		\node[event] (e9) at (1*\xsepb,-2*\ysepb) {9};
		\path[succ] (e2.south) -- (e9.north);
		\path[conflict] (e3.east) -- (e9.west);
		
		\node[event] (e10) at (1*\xsepb,-3*\ysepb) {10};
		\path[succ] (e9.south) -- (e10.north);
		
		\node[event] (e11) at (2*\xsepb,-4*\ysepb) {11};
\path[succ] (e5.south) to[out=240,in=60] ++(-0.175cm,-0.25cm) to[out=240,in=80] (e11.north);
\path[succ] (e10.south) -- (e11.north);
		
		\node[event] (e12) at (2*\xsepb,-5*\ysepb) {12};
		\path[succ] (e11.south) -- (e12.north);
		
		\node[event] (e13) at (2*\xsepb,-6*\ysepb) {13};
		\path[succ] (e12.south) -- (e13.north);
		
		\node[event] (e14) at (1*\xsepb,-7*\ysepb) {14};
		\path[succ] (e10.south) -- (e14.north);
		\path[succ] (e13.south) -- (e14.north);
		
		\node[event] (e15) at (1*\xsepb,-8*\ysepb) {15};
		\path[succ] (e14.south) -- (e15.north);
		
		\node[event] (e16) at (3*\xsepb,-1*\ysepb) {16};
		\path[succ] (e5.south) -- (e16.north);
		\path[conflict] (e2.east) -- (e16.west);

		\node[event] (e17) at (3*\xsepb,-2*\ysepb) {17};
		\path[succ] (e16.south) -- (e17.north);
		
		\node[event] (e18) at (3*\xsepb,-3*\ysepb) {18};
		\path[succ] (e17.south) -- (e18.north);
		
		\node[event] (e19) at (3*\xsepb,-4*\ysepb) {19};
		\path[succ] (e1.south) -- (e19.north);
		\path[succ] (e18.south) -- (e19.north);
		
		\node[event] (e20) at (3*\xsepb,-5*\ysepb) {20};
		\path[succ] (e19.south) -- (e20.north);
		
		\node[event] (e21) at (3*\xsepb,-6*\ysepb) {21};
		\path[succ] (e20.south) -- (e21.north);
		
		\node[event] (e22) at (4*\xsepb,-5*\ysepb) {22};
		\path[succ] (e19.south) -- (e22.north);
		\path[conflict] (e20.east) -- (e22.west);
		
		\node[event] (e23) at (4*\xsepb,-6*\ysepb) {23};
		\path[succ] (e22.south) -- (e23.north);

		\path[succ] let \p1 = (e9.south), \p2 = (e10.north) in
			(e5.south) to[out=225,in=35] (\x1,\y1/2+\y2/2) to[out=215,in=60] (e6.north);

		\node[subcap] at (3*\xsepb,\subcapy) {\begin{minipage}[t][0cm]{1cm}\subcaption{}\label{fig:algo.unf.a}\end{minipage}};
	\end{scope}

   \def\xincr{0.3cm}
   \def\yincr{0.7cm}
   \def\incr{\xincr,\yincr}

	\begin{scope}[shift={(0.6cm,0cm)}]
		\node[lbl,inner ysep=0] (q1) at (0cm,0cm) {\(\tup{\emptyset, \emptyset, \emptyset, 1}\)};

		\node[lbl] (q2) at ($(q1)-(\incr)$) {\(\tup{\set{1}, \emptyset, \emptyset, 2}\)};
		\path[succ] ($(q1.south)+(0,-0.05cm)$) to (q2.north);

		\node[lbl] (q3) at ($(q2)+(-2.4cm,-\yincr)$) {\(\tup{\set{1,2}, \emptyset, \emptyset, 3}\)};
		\path[succ] (q2.south) to (q3.north);

		\node[lbl] (q4) at ($(q3)-(1.4cm,\yincr)$) {\(\tup{\set{1,2,3}, \emptyset, \emptyset, 5}\)};
		\path[succ] (q3.south) to (q4.north);

		\node[lbl] (q5) at ($(q4)-(\incr)$) {\(\tup{\set{1,2,3,5}, \emptyset, \emptyset, 4}\)};
		\path[succ] (q4.south) to (q5.north);

		\node[lbl] (q6) at ($(q3)+(1.4cm,-\yincr)$) {\(\tup{\set{1,2}, \set{3}, \set{9}, 9}\)};
		\path[succ] (q3.south) to (q6.north);

		\node[lbl] (q7) at ($(q6)-(\incr)$) {\(\tup{\set{1,2,9}, \set{3}, \emptyset, 10}\)};
		\path[succ] (q6.south) to (q7.north);

		\node[lbl] (q8) at ($(q2)+(2.4cm,-\yincr)$) {\(\tup{\set{1}, \set{2}, \set{5, 16}, 5}\)};
		\path[succ] (q2.south) to (q8.north);

		\node[lbl] (q9) at ($(q8)-(\incr)$) {\(\tup{\set{1,5}, \set{2}, \set{16}, 16}\)};
		\path[succ] (q8.south) to (q9.north);

		\node[lbl] (q10) at ($(q9)-(\incr)$) {\(\tup{\set{1,5,16}, \set{2}, \emptyset, 17}\)};
		\path[succ] (q9.south) to (q10.north);

		\node[lbl] (q11) at ($(q10)-(\incr)$) {\(\tup{\set{1,5,16,17,18,19}, \set{2}, \emptyset, 20}\)};
		\path[foll] (q10.south) to (q11.north);

		\node[config] (cb) at ($(q5)-(\incr)$) {\cref{fig:3:b}};
		\path[foll] (q5.south) to (cb.north);

		\node[config] (cc) at ($(q7)-(\incr)$) {\cref{fig:3:c}};
		\path[foll] (q7.south) to (cc.north);

		\node[config] (cd) at ($(q11)+(-.8cm,-\yincr)$) {\cref{fig:3:d}};
		\path[foll] (q11.south) to (cd.north);

		\node[config] (ce) at ($(q11)+(.8cm,-\yincr)$) {\cref{fig:3:e}};
		\path[foll] (q11.south) to (ce.north);

		\node[subcap] at (-2.6cm,\subcapy) {\begin{minipage}[t][0cm]{1cm}\subcaption{}\label{fig:algo.unf.b}\end{minipage}};
	\end{scope}
\end{tikzpicture} \caption{(a): unfolding of the program in \cref{fig:3:a}; (b): its POR exploration tree.}\label{fig:algo.unf}
\end{figure}

We use unfoldings to give semantics to multi-threaded programs.
Unfoldings are Prime Event Structures~\cite{NPW81},
tree-like representations of system behavior that use partial orders
to represent concurrent interaction.

\cref{fig:algo.unf.a} depicts an unfolding of the program in~\cref{fig:3:a}.
The nodes are events and solid arrows
represent causal dependencies: events~$1$ and $4$ must fire before $8$
can fire.
The dotted line represents conflicts: $2$ and $5$
are not in conflict and may occur in any order, but $2$ and $16$ are in
conflict and cannot occur in the same (partial-order) run.

Formally, a \emph{Prime Event Structure}~\cite{NPW81}
(PES)
is a tuple $\les \eqdef \tup{E, {<}, {\cfl}, h}$ with
a set of events $E$,
a causality relation ${<} \subseteq E \times E$, which is a strict partial order,
a conflict relation ${\cfl} \subseteq E \times E$ that is symmetric
and irreflexive, and a labeling function $h \colon E \to A$.

The \emph{causes} of an event
$\causes e \eqdef \set{e' \in E \colon e' < e}$
are the least set of events that must fire before $e$ can fire.
A \emph{configuration} of $\les$ is a finite set~$C \subseteq E$
that is 
causally closed
  ($\causes e \subseteq C$ for all $e \in C$), and
conflict-free
  ($\lnot (e \cfl e')$  for all $e, e' \in C$).
We let $\conf \les$ denote the set of all configurations of~$\les$.
For any $e \in E$, the \emph{local configuration} of~$e$ is defined as
$[e] \eqdef \causes e \cup \set e$.
In \cref{fig:algo.unf.a}, the set $\set{1,2}$ is a configuration, and in fact it is
a local configuration, \ie, $[2] = \set{1,2}$.
The local configuration of event~6 is $\set{1,2,3,4,5,6}$.
Set $\set{2,5,16}$ is not a configuration, because it is neither causally
closed (1 is missing) nor conflict-free ($2 \cfl 16$).

\subsection{Unfolding Semantics for Programs}\label{sec:algo.unfsem}

Given a program~$P$, in this section we define a PES~$\unf P$ such that
every configuration of~$\unf P$ is a partial-order run of~$P$.

Let
$\po_1 \eqdef \tup{E_1, {<}_1, h_1}, \ldots,
\po_n \eqdef \tup{E_n, {<}_n, h_n}$
be the collection of all the partial-order runs of~$P$.
The events of~$\unf P$ are the equivalence classes of the structural equality
relation that we intuitively described in \cref{sec:overview.unfolding}.

Two events are structurally equal iff their \emph{canonical name} is the same.
Given some event $e \in E_i$ in some partial-order run~$\po_i$,
the canonical name $\cn e$ of~$e$ is the pair~$\tup{a, H}$ where
$a \eqdef h_i(e)$ is the executed action and
$H \eqdef \set{\cn{e'} \colon e' <_i e}$
is the set of canonical names of those events that causally precede~$e$
in~$\po_i$.
Intuitively, canonical names indicate that action $h(e)$ runs after the
(transitively canonicalized) partially-ordered history preceding~$e$.
For instance, in \cref{fig:algo.unf.a}
for events~1 and~6 we have
$\cn 1 = \tup{\tup{1, \tup{\loc, \text{\texttt{a=in()}}}}, \emptyset}$,
and
$\cn 6 = \tup{\tup{2, \tup{\acq, m}}, \set{\cn 1, \cn 2, \cn3, \cn 4, \cn 5}}$.
Actually, the number within every event in
\cref{fig:3:b,fig:3:c,fig:3:d,fig:3:e}
identifies (is in bijective correspondence with) its canonical name.
Event~19 in \cref{fig:3:d} is the same event as event~19 in \cref{fig:3:e}
because it fires the same action
($\tup{1, \tup{\acq, m}}$) after the same causal history
($\set{1,5,16,17,18}$).
Event~2 in \cref{fig:3:c} and~19 in \cref{fig:3:d}
are \emph{not} the same event because while
$h(2) = h(19) = \tup{1, \tup{\acq, m}}$
they have a different causal history ($\set{1}$ \vs{} $\set{1,5,16,17,18}$).
Obviously events~4 and~6 in \cref{fig:3:b} are different because~$h(4) \ne h(6)$.
We can now define the \emph{unfolding} of~$P$ as the only PES
$\unf P \eqdef \tup{E, {<}, \cfl, h}$ such that
\begin{itemize}
\item
   $E \eqdef \set{\cn e \colon e \in E_1 \cup \ldots \cup E_n}$
   is the set of canonical names of all events;
\item
   Relation ${<} \subseteq E \times E$ is the union
   ${<_1} \cup \ldots \cup {<_n}$ of all happens-before relations;
\item
   Any two events $e, e' \in E$ of $\unf P$
   are in conflict, $e \cfl e'$, when
   $e \ne e'$, and
   $\lnot (e < e')$, and
   $\lnot (e' < e)$, and
   $h(e)$ is dependent on $h(e')$.
\end{itemize}

\cref{fig:algo.unf.a} shows the unfolding produced by merging all~4 partial-order runs
in~\cref{fig:3:b,fig:3:c,fig:3:d,fig:3:e}.
Note that the configurations of~$\unf P$ are partial-order runs of~$P$.
Furthermore, the $\subseteq$-maximal configurations are exactly the~4
originating partial orders.
It is possible to prove that
$\unf P$ is a semantics of~$P$.
In~\aref{app:unfolding} we show that
(1) $\unf P$ is uniquely defined,
(2) any interleaving of any local configuration of~$\unf P$ is a run of~$P$,
(3) for any run~$\sigma$ of~$P$ there is a configuration~$C$
of~$\unf P$ such that $\sigma \in \inter C$.

\subsection{Conflicting Extensions}\label{sec:algo.cex}

Our technique analyzes~$P$ by iteratively constructing (all) partial-order runs
of~$P$.
In every iteration we need to find the next partial order to explore.
We use the so-called \emph{conflicting extensions} of a configuration to detect how
to start a new partial-order run that has not been explored before.

Given a configuration~$C$ of~$\unf P$, an \emph{extension} of $C$
is any event~$e \in E \setminus C$ such that all the causal predecessors of~$e$
are in $C$.
We denote the set of extensions of $C$ as
$\ex C \eqdef \set{e \in E \colon e \notin C \land \causes e \subseteq C}$.
The \emph{enabled} events of~$C$ are extensions that can form a
larger configuration:
$\en C \eqdef \set{e \in \ex C \colon C \cup \set e \in \conf{\les}}$.
For instance, in \cref{fig:algo.unf.a},
the (local) configuration $[6]$ has 3 extensions,
$\ex{[6]} = \set{7,9,16}$
of which, however, only event 7 is enabled:
$\en{[6]} = \set{7}$.
Event~19 is not an extension of $[6]$ because~18 is
a causal predecessor of~19, but $18 \not\in [6]$.
A \emph{conflicting extension} of~$C$
is an extension for which there is at least one $e' \in C$
such that $e \cfl e'$.
The (local) configuration $[6]$ from our previous example
has two conflicting extensions, events~9 and~16.
A conflicting extension is, intuitively, an incompatible addition to the
configuration~$C$, an event~$e$ that cannot be executed together with~$C$
(without removing $e'$ and its causal successors from~$C$).
We denote by~$\cex C$ the set of all conflicting extensions of~$C$,
which coincides with the set of all extensions that are not enabled:
$\cex C \eqdef \ex C \setminus \en C$.

Our technique discovers new conflicting extension events by trying to revert the causal
order of certain events in~$C$.
Owing to space limitations we only explain how the algorithm handles events of
$\acq$ and~$\waa$ effect (\aref{app:cex} presents the remaining~4 procedures of
the algorithm).
\cref{a:cex.sec3} shows the procedure that handles this case.
It receives an event~$e$ of $\acq$ or~$\waa$ effect (\cref{l:cex.sec3.assume}).
We build and return a set of conflicting extensions, stored in
variable~$R$.
Events are added to~$R$ in~\cref{l:cex.sec3.add1,l:cex.sec3.add2}.
Note that we define events using their canonical name.
For instance, in \cref{l:cex.sec3.add1} we add
a new event whose action is~$h(e)$ and whose causal history is~$P$.
Note that we only create events that execute action~$h(e)$.
Conceptually speaking, the algorithm simply finds different
causal histories (variables~$P$ and~$e'$) within the set~$K = \causes e$ to
execute action~$h(e)$.

Procedure \lastof{$C,i$} returns the only $<$-maximal event of thread~$i$ in~$C$;
\lastnotify{$e, c, i$} returns the only immediate $<$-predecessor $e'$ of $e$ such that
the effect of $h(e')$ is either $\tup{\sigg,c,i}$ or $\tup{\bro,c,S}$ with $i
\in S$;
finally, procedure \lastlock{$C,l$} returns the only $<$-maximal event that manipulates
lock~$l$ in~$C$ (an event of effect $\acq$, $\rel$, $\wa$ or $\waa$), or $\bot$ if no such event exists.
See~\aref{app:cex} for additional details.

\begin{algorithm}[t]
\DontPrintSemicolon{}

\Fn{\cexacquire{e}}{
   Assume that $e$ is $\tup{\tup{i, \tup{\acq, l}}, K}$ or
   $\tup{\tup{i, \tup{\waa, c,l}}, K}$ \;
   \label{l:cex.sec3.assume}
   $R \eqdef \emptyset$ \;
   $e_t \eqdef \lastof{$K, i$}$ \;
   \eIf {$\effect e = \tup{\acq, l}$}
   {
      $P \eqdef [e_t]$
   }
   {
      $e_s \eqdef \lastnotify{$e, c, i$}$ \;
      $P \eqdef [e_t] \cup [e_s]$ \;
   }
   $e_m \eqdef \lastlock{$P, l$}$ \;
   $e_r \eqdef \lastlock{$K, l$}$ \;

   \lIf {$e_m = e_r$}{\KwRet $R$}

   \uIf {$e_m = \bot \lor
      \effect{e_m} \in \set{\tup{\rel, l}, \tup{\wa, \cdot, l}}$}
   {
      Add $\tup{h(e), P}$ to $R$ \;
      \label{l:cex.sec3.add1}
   }

   \ForEach {event $e' \in K \setminus (P \cup \set{e_r})$}
   {
      \uIf{$\effect{e'} \in \set{\tup{\rel, l}, \tup{\wa, \cdot, l}}$}
      {
         Add $\tup{h(e), P \cup [e']}$ to $R$ \;
         \label{l:cex.sec3.add2}
      }
   }
   \KwRet $R$ \;
}

\caption{Conflicting extensions for $\acq$/$\waa$ events.}\label{a:cex.sec3}
\end{algorithm}

\subsection{Exploring the Unfolding}\label{sec:algo.exploring}

\begin{algorithm}[t]
\DontPrintSemicolon

\BlankLine
Global variables: $U \eqdef \emptyset$ (set of events of~$\unf P$) and $N \eqdef \emptyset$ (set of tree nodes)
\BlankLine

\begin{multicols}{2}
\BlankLine
\Proc{\mainsimple{}}{
   \newnode{$\emptyset,\emptyset,\emptyset$} \;
   \Repeat {fixed point ($N$ is stable)}
   {
      \label{l:loop}
      Select $n \eqdef \tup{C, D, A, e}$ from $N$ \;
      \label{l:select}
      Add $\cex C$ to $U$ \;
      \label{l:cex}
      \If {$\encutoff{$C$} \subseteq D$}
      {
         \label{l:maximal}
         \KwContinue \;
      }
      \If{$n$ has no left child}
      {
         \label{l:left}
         $n' \eqdef \newnode{$C \cup \set e, D, A \setminus \set e$}$ \;
         \label{l:left.node}
         Make $n'$ the left child of~$n$ \;
      }
      \If{$n$ has no right child}
      {
         \label{l:right}
         $J \eqdef \alternatives{$C, D \cup \set e$}$ \;
         \label{l:alt.call}
         \If{$J \ne \emptyset$}
         {
            \label{l:check.alt}
            $n' \eqdef \newnode{$C, D \cup \set e, J \setminus C$}$ \;
            \label{l:right.node}
            Make $n'$ the right child of~$n$ \;
         }
      }
   }
}

\BlankLine
\Fn{\newnode{$C, D, A$}}{
   \If {$A \ne \emptyset$}
   {
      $e \eqdef \text{select from } \encutoff{$C$} \cap A$ \;
      \label{l:select.a}
   }
   \Else
   {
      $e \eqdef \text{select from } \encutoff{$C$} \setminus D$ \;
   }
   $n \eqdef \tup{C, D, A, e}$ \;
   Add $n$ to $N$ \;
   \KwRet $n$ \;
   
}

\BlankLine
\Fn{\encutoff{C}}{
\KwRet $\set{e \in \en C \colon \lnot \iscutoff{e}}$ \;
   \label{l:en.ret}
}

\BlankLine
\Fn{\alternatives{$C,D$}}{
Let $e$ be some event in $D \cap \en C$ \;
   \label{l:alt.select.e}
   $S \eqdef \set{e' \in U \colon
      e' \cfl e \land
      [e'] \cap D = \emptyset}$ \;
   $S \eqdef \set{e' \in S \colon
      [e'] \cup C \text{ is a config.}}$ \;
   \lIf{$S = \emptyset$} { \KwRet $\emptyset$ }
   Select some event $e'$ from $S$ \;
   \label{l:alt.found}
   \KwRet $[e']$ \;
}
\end{multicols}
\vspace*{6pt}

\caption{Main algorithm. See \cref{sec:algo.exploring}.}\label{a:main.algo}
\end{algorithm}

This section presents an algorithm that explores the state space of~$P$ by
constructing all maximal configurations of~$\unf P$.
In essence, our procedure is an improved Quasi-Optimal POR
algorithm~\cite{NRSCP18}, where the unfolding is not explored using a DFS
traversal, but a user-defined search order.
This enables us to build upon the preexisting exploration heuristics (``searchers'') in KLEE rather than having to follow a strict DFS exploration of the unfolding.

Our algorithm explores one configuration of~$\unf P$ at a time and organizes the
exploration into a binary tree.
\cref{fig:algo.unf.b} shows the tree explored for the unfolding shown
in~\cref{fig:algo.unf.a}.
A tree node is a tuple~$n \eqdef \tup{C,D,A,e}$ that
represents both the exploration of a configuration~$C$ of~$\unf P$ and a choice
to execute, or not, event $e \in \en C$.
Both~$D$ (for \emph{disabled}) and~$A$ (for \emph{add}) are sets of events.

The key insight of this tree is as follows.
The subtree rooted at a given node~$n$ explores all configurations of~$\unf P$ that
include~$C$ and exclude~$D$, with the following constraint:
$n$'s left subtree explores all configurations including event~$e$
and
$n$'s right subtree explores all configuration excluding~$e$.
Set~$A$ is used to guide the algorithm when exploring the right subtree.
For instance, in~\cref{fig:algo.unf.b} the subtree rooted at node
$n \eqdef \tup{\set{1,2},\emptyset,\emptyset,3}$
explores all maximal configurations that contain events~1 and~2 (namely, those
shown in~\cref{fig:3:b,fig:3:c}).
The left subtree of~$n$ explores all configurations
including~$\set{1,2,3}$
(\cref{fig:3:b})
and the right subtree all of those including~$\set{1,2}$ but excluding~3
(\cref{fig:3:c}).

\cref{a:main.algo} shows a simplified version of our algorithm.
The complete version, in~\aref{app:main}, specifies additional details including
how nodes are selected for exploration and how they are removed from the tree.
The algorithm constructs and stores the exploration tree in the variable~\(N\),
and the set of currently known events of~\(\unf N\) in variable~\(U\).
At the end of the
exploration, \(U\) will store all events of~$\unf N$ and the leaves of the
exploration tree in~$N$ will correspond to the maximal configurations of~$\unf N$.

The tree is constructed using a fixed-point loop (\cref{l:loop}) that repeats
the following steps as long as they modify the tree:
select a node~$\tup{C,D,A,e}$ in the tree (\cref{l:select}),
extend~$U$ with the conflicting extensions of~$C$ (\cref{l:cex}),
check if the configuration is $\subseteq$-maximal (\cref{l:maximal}), in which
case there is nothing left to do,
then try to add a left (\cref{l:left}) or right (\cref{l:right}) child node.

The subtree rooted at the left child node will explore all configurations that
include~$C \cup \set e$ and exclude~$D$ (\cref{l:left.node});
the right subtree will explore those including~$C$ and excluding~$D \cup \set e$
(\cref{l:right.node}), if any of them exists, which we detect by checking
(\cref{l:check.alt}) if we found a so-called \emph{alternative}~\cite{RSSK15}.

An alternative is a set of events which witnesses the existence of some maximal
configuration in~$\unf P$ that extends~$C$ without including~$D \cup \set e$.
Computing such witness is an NP-complete problem,
so we use an approximation called \emph{$k$-partial alternatives}~\cite{NRSCP18},
which can be computed in P-time and works well in practice.
Our procedure~\alternatives specifically computes $1$-partial alternatives:
it selects $k=1$ event~$e$ from~$D \cap \en C$, searches for an event~$e'$ in
conflict with~$e$
(we have added all known candidates in~\cref{l:cex}, using the algorithms
of~\cref{sec:algo.cex})
that can extend $C$ (\ie, such that $C \cup [e']$ is a configuration),
and returns it.
When such an event $e'$ is found (\cref{l:alt.found}), some events in its local
configuration $[e']$ become the $A$-component of the right child node
(\cref{l:right.node}), and the leftmost branch rooted at that node will
re-execute those events (as they will be selected in \cref{l:select.a}),
guiding the search towards the witnessed maximal configuration.

For instance, in \cref{fig:algo.unf.b}, assume that the algorithm has selected node
$n = \tup{\set 1, \emptyset, \emptyset, 2}$ at \cref{l:select} when event~16 is already
in~$U$. Then a call to \alternatives{$\set 1, \set 2$} is issued
at~\cref{l:alt.call}, event $e = 2$ is selected at \cref{l:alt.select.e}
and event $e' = 16$ gets selected at~\cref{l:alt.found},
because $2 \cfl 16$ and $[16] \cup \set 1$ is a configuration.
As a result, node $n' = \tup{\set 1, \set 2, \set{5,16}, 5}$ becomes the right
child of~$n$ in \cref{l:right.node}, and the leftmost branch rooted at~$n'$
adds~$\set{5,16}$ to~$C$, leading to the maximal configuration~\cref{fig:3:d}.

\subsection{Cutoffs and Completeness}\label{sec:algo.cutoffs}

All interleavings of a given configuration always reach
the same state, but interleavings of different configurations can also reach the
same state. It is possible to exclude certain such redundant configurations from the
exploration without making the algorithm incomplete, by using \emph{cutoff}
events~\cite{Mcm93}.

Intuitively, an event is a cutoff if we have already visited another event
that reaches the same state with a shorter execution.
Formally, in~\cref{a:main.algo}, \cref{l:en.ret} we let
\iscutoff{$e$} return \emph{true} iff
there is some~$e' \in U$ such that
$\state{[e]} = \state{[e']}$
and
$|[e']| < |[e]|$.
This makes \cref{a:main.algo} ignore cutoff events and any event that
causally succeeds them.
\cref{sec:exp:cutoff} explains how to effectively implement the check
$\state{[e]} = \state{[e']}$.

While cutoffs prevent the exploration of redundant configurations, the
analysis is still complete: it is possible to prove that every state reachable
via a configuration with cutoffs is also reachable via a configuration without
cutoffs. Furthermore, cutoff events not only reduce the exploration of
redundant configurations, but also force the algorithm to terminate for
non-terminating programs that run on bounded memory.

\begin{theorem}[Correctness]\label{t:algo.correctness}
For any reachable state $s \in \reach P$, \cref{a:main.algo} explores a
configuration~$C$ such that for some~$C' \subseteq C$ it holds that
$\state{C'} = s$.
Furthermore, it terminates for any program~$P$ such that~$\reach P$ is finite.
\end{theorem}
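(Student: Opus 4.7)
The plan is to split the proof into two stages: I would first establish completeness and termination for a cutoff-free variant of the algorithm, and then extend both properties to the full version that uses cutoffs.

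For the cutoff-free stage, I would adapt the correctness proof of Quasi-Optimal POR with $k$-partial alternatives~\cite{NRSCP18}, whose main deviation from our setting is that it requires a strict DFS traversal of the exploration tree rather than the free node selection used in the fixed-point loop of \cref{a:main.algo}. The key invariant to carry through the iterations is: for every node $n = \tup{C,D,A,e} \in N$, the subtree rooted at $n$ (as currently constructed, plus what the fixed point can still add) covers every $\subseteq$-maximal configuration of $\unf P$ that contains $C$ and is disjoint from $D$; its left subtree additionally covers those containing $e$, and its right subtree those excluding $e$. At the fixed point this specializes to: if no right child was ever created, then no maximal extension excluding $e$ exists. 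Soundness of the 1-partial alternative computed by \alternatives then follows from the fact that \cref{l:cex} of \cref{a:main.algo} adds all conflicting extensions of $C$ to $U$ before \alternatives is called, so the monotone growth of $U$ across iterations guarantees that every candidate witness is eventually visible regardless of the node-selection order. Combining this with the unfolding semantics from \cref{sec:algo.unfsem}, every run of $P$ is an interleaving of some maximal configuration that the exploration reaches, covering every $s \in \reach P$.

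For the cutoff stage, completeness would follow from a classical McMillan shift argument~\cite{Mcm93}: if $e$ is cut off by $e'$ with $\state{[e]} = \state{[e']}$ and $|[e']| < |[e]|$, then any configuration containing $e$ can be transformed into a strictly smaller configuration reaching the same state, by substituting $[e]$ with $[e']$ and relocating the future events through the isomorphism between the post-$e$ and post-$e'$ behaviors, which exists because the program state after the two local configurations is identical. Iterating this reduction strictly decreases configuration size, so it terminates at a cutoff-free configuration with the same state, and every $s \in \reach P$ is still covered. For termination when $\reach P$ is finite, I would appeal to McMillan's finite-prefix theorem: the set of non-cutoff events forms a finite complete prefix of $\unf P$ whose size is bounded in terms of $|\reach P|$; since every node in $N$ corresponds to a pair of a configuration drawn from this finite prefix and an enabled event, $N$ itself is finite and the fixed-point loop terminates.

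The main obstacle lies in the interaction of three orthogonal modifications to classical QPOR, namely non-DFS exploration, 1-partial alternatives, and cutoff pruning. In DFS-based QPOR, the depth-first order ensures that by the time a right subtree is considered, all relevant conflicting extensions of the ancestor configuration are already present in $U$. Without DFS this scheduling guarantee is lost, so I would need to argue, purely by monotonicity of $U$ across the fixed-point iteration, that no conflicting extension is missed at the moment \alternatives is invoked for any surviving open node; otherwise a spurious empty return could cause a maximal configuration to be silently dropped. Similarly, one must check that pruning at a cutoff event never removes the unique witness that some other branch of the tree relies on for its alternative; this reduces to the shift argument above applied uniformly across the tree. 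Nailing these two interaction points constitutes the bulk of the technical work.
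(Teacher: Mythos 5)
Your plan is sound, but it is far more detailed than the argument the paper actually gives: the paper's proof sketch (in the appendix accompanying the full algorithm) consists of two sentences, asserting that the algorithm explores the same execution tree characterized in Appendix~B of~\cite{NRSCP18} with $k$-partial alternatives instantiated at $k=1$, and then invoking Theorems~1 and~2 of that work for termination and completeness respectively. The two points you identify as the technical crux are precisely the points the paper does not argue. First, for the loss of the DFS scheduling guarantee, you propose a monotonicity-of-$U$ argument over the fixed-point loop of the simplified algorithm; the paper's full algorithm instead introduces a \emph{sweep bit} that simulates an in-order DFS traversal and re-invokes the right-branch check during backtracking, which is how it can claim to explore ``the same tree'' as~\cite{NRSCP18}. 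Your argument is adequate for the fixed-point presentation in the main text (a node with no right child is re-examined until $N$ stabilizes, by which time $U$ contains every conflicting extension of every explored configuration), but it would not transfer directly to the worklist-based implementation, where nodes leave the queue and only the sweep path is revisited. Second, the paper never proves that cutoff pruning preserves completeness or that the non-cutoff prefix is finite; it asserts both in the section on cutoffs. Your McMillan-style shift argument (substituting $[e']$ for $[e]$ and transporting the future through the state isomorphism, with well-founded induction on configuration size) and your finite-prefix termination argument are the standard way to discharge these claims and genuinely fill a gap the paper leaves open; the only care needed is that ``state'' here is a set of concrete states, so finiteness of the reachable symbolic states must be derived from finiteness of $\reach P$ (it is, since each $\state{[e]}$ is a subset of $\reach P$). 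In short: same overall route (reduce to QPOR correctness plus cutoff preservation), but you supply the interaction arguments that the paper delegates to citation or assertion, and you use a different device (monotonicity of $U$ rather than the sweep bit) to recover the DFS guarantee.
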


A proof sketch is available in~\aref{app:main}.
Naturally, since \cref{a:main.algo} explores~$\unf P$, and~$\unf P$ is an exact
representation of all runs of~$P$, then \cref{a:main.algo} is also \emph{sound}:
any event constructed by the algorithm (added to set~$U$) is associated
with a real run of~$P$.

 \section{Implementation}\label{sec:impl}

We implemented our approach on top of the symbolic execution engine KLEE~\cite{KLEE}, which was previously restricted to sequential programs.
KLEE already provides a minimal POSIX support library that we extended to
translate calls to pthread functions to their respective actions, enabling us to test real-world multi-threaded C programs.
We also extended already available functionality to make it thread-safe, \eg, by implementing a global file system lock that ensures that concurrent reads from the same file descriptor do not result in unsafe behavior.
The source code of our prototype is available at \url{https://github.com/por-se/por-se}.

\subsection{Standby States}\label{sec:exp:standby}
When a new alternative is explored, a symbolic execution state needs to be computed to match the new node in the POR tree.
However, creating it from scratch requires too much time and keeping a symbolic execution state around for each node consumes significant amounts of memory.
Instead of committing to either extreme, we store \emph{standby states} at regular intervals along the exploration tree and, when necessary, replay the closest standby state.
This way, significantly fewer states are kept in memory without letting the replaying of previously computed operations dominate the analysis either.

\subsection{Hash-Based Cutoff Events}\label{sec:exp:cutoff}
Schemmel~et~al.\@ presented~\cite{SymLive} an incremental hashing scheme to
identify infinite loops during symbolic execution.
The approach detects when the program under test can
transition from any one state back to that same state.
Their scheme computes \emph{fragments} for small portions of the program state, which are then hashed individually, and combined into a compound hash by bitwise xor operations.
This compound hash, called a \emph{fingerprint}, uniquely (modulo hash collisions) identifies the whole state of the program under test.
We adapt this scheme to provide hashes that identify the concurrent state of parallel programs.

To this end, we associate each configuration with a fingerprint that describes the whole state of the program at that point.
For example, if the program state consists of two variables, \(x=3\) and \(y=5\), the fingerprint would be \(fp=\text{hash}\left(\text{\texttt{"x=3"}}\right)\oplus\text{hash}\left(\text{\texttt{"y=5"}}\right)\).
When one fragment changes, \eg, from \(x=3\) to \(x=4\), the old fragment hash needs to be replaced with the new one.
This operation can be performed as \(fp'=fp\oplus\text{hash}\left(\text{\texttt{"x=3"}}\right)\oplus\text{hash}\left(\text{\texttt{"x=4"}}\right)\) as the duplicate fragments for \(x=3\) will cancel out.
To quickly compute the fingerprint of a configuration, we annotate each event with an xor of all of these update operations that were done on its thread.
Computing the fingerprint of a configuration now only requires xor-ing the values from its thread-maximal events, which will ensure that all changes done to each variable are accounted for, and cancel out one another so that only the fragment for the last value remains.

Any two local configurations that have the same fingerprint represent the same program state; each variable, program counter, \etc, has the same value.
Thus, it is not necessary to continue exploring both---we have found a potential cutoff point, which the POR algorithm will treat accordingly (\cref{sec:algo.cutoffs}).

A more in-depth look at this delta-based memory fingerprinting scheme and its efficient computation can be found in~\cref{app:fingerprint}.

\subsection{Deterministic and Repeatable Allocations}
KLEE usually uses the system allocator to determine the addresses of objects
allocated by the program under test.
But it also provides a (more) deterministic
mode, in which addresses are consumed in sequence from a large pre-allocated
array.
Since our hash-based cutoff computation uses memory address as part of the
computation, using execution replays from standby
states~(\cref{sec:exp:standby}) requires that we have
fully repeatable memory allocation.

We tackle this problem by decoupling the addresses returned by the emulated system allocator in the program under test from the system allocator of KLEE itself.
A new allocator requires a large amount of virtual memory in which it will perform its allocations.
This large virtual memory mapping is not actually used unless an external function call is performed, in which case the relevant objects are temporarily copied into the region from the symbolic execution state for which the external function call is to be performed.
Afterwards, the pages are marked for reclamation by the OS\@.
This way, allocations done by different symbolic execution states return the same address to the program under test.

While a deterministic allocator by itself would be enough for providing deterministic allocation to sequential programs, parallel programs also require an allocation pattern that is independent of which sequentialization of the same partial order is chosen.
We achieve this property by providing independent allocators for each thread (based on the thread id, thus ensuring that the same virtual memory mapping is reused for each instance of the same semantic thread).
When an object is deallocated on a different thread than it was allocated on, its address only becomes available for reuse once the allocating thread has reached a point in its execution where it is causally dependent on the deallocation.
Additionally, the thread ids that are used by our implementation are hierarchically defined:
A new thread \(t\) that is the \(i\)-th thread started by its parent thread \(p\) has the thread id \(t := \left(p,i\right)\), with the main thread being denoted as \(\left(1\right)\).
This way, thread ids and the associated virtual memory mappings are independent of how the concurrent creation of multiple threads are sequentialized.

We have also included various optimizations that promote controlled reuse of addresses to increase the chance that a cutoff event~(\cref{sec:exp:cutoff}) is found, such as binning allocations by size, which reduces the chance that temporary allocations impact which addresses are returned for other allocations.

\subsection{Data Race Detection}\label{sec:data.race}
Our data race detection algorithm simply follows the happens-before relationships established by the POR.\@
However, its implementation is complicated by the possibility of addresses becoming symbolic.
Generally speaking, a symbolic address can potentially point to any and every byte in the whole address space, thus requiring frequent and large SMT queries to be solved.

To alleviate the quadratic blowup of possibly aliasing accesses, we exploit how KLEE performs memory accesses with symbolic addresses: The symbolic state is forked for every possible memory object that the access may refer to (and one additional time if the memory access may point to unallocated memory).
Therefore, a symbolic memory access is already resolved to memory object granularity when it potentially participates in a data race.
This drastically reduces the amount of possible data races without querying the SMT solver.

\subsection{External Function Calls}
When a program wants to call a function that is neither provided by the program itself nor by the runtime, KLEE will attempt to perform an \emph{external function call} by moving the function arguments from the symbolic state to its own address space and attempting to call the function itself.
While this support for uninterpreted functions is helpful for getting some results for programs which are not fully supported by KLEE's POSIX runtime, it is also inherently incomplete and not sound in the general case.
Our prototype includes this option as well.
 \section{Experimental Evaluation}\label{sec:exp}
To explore the efficacy of the presented approach, we performed a series of
experiments including both synthetic benchmarks from the SV-COMP~\cite{svcomp2019} benchmark suite and real-world programs, namely, Memcached~\cite{memcached} and GNU sort~\cite{gnusort}.
We compare against Yogar-CBMC~\cite{YDLLW18}, which is the winner of the concurrency safety category of SV-COMP~2019~\cite{svcomp2019}, and stands in for the family of bounded model checkers.
As such, Yogar-CBMC is predestined to fare well in the artificial SV-COMP benchmarks, while our approach may demonstrate its strength in dealing with more complicated programs.

We ran the experiments on a cluster of multiple identical machines with dual Intel Xeon E5-2643 v4 CPUs and \SI{256}{\gibi\byte} of RAM.\@
We used a~\SI{4}{\hour} timeout and~\SI{200}{\giga\byte} maximum memory usage for real-world programs.
We used a~\SI{15}{\minute} timeout and~\SI{15}{\giga\byte} maximum memory for individual SV-COMP benchmarks.

\subsection{SV-COMP}
\begin{table}[t]
	\begin{center}\ttfamily \scriptsize \setlength{\tabcolsep}{2.5pt}\newcommand\cmidrules{\cmidrule(l{2.0pt}r{2.0pt}){2-6}\cmidrule(l{2.0pt}r{2.0pt}){7-11}}\begin{tabular}{lrrrrrrrrrr}\toprule \multicolumn{1}{c}{\textrm{Benchmark}} & \multicolumn{5}{c}{\textrm{Our Tool}} & \multicolumn{5}{c}{\textrm{Yogar-CBMC}} \\[-0.5pt]\cmidrules{}
	& \textrm{T} & \textrm{F} & \textrm{U} & \textrm{Time} & \textrm{RSS} & \textrm{T} & \textrm{F} & \textrm{U} & \textrm{Time} & \textrm{RSS} \\[-1.0pt]\midrule \textrm{pthread} & 29 & - & 9 & 1:50:19 & 16GB & 29 & - & 9 & 0:31:21 & 948MB \\
	\textrm{pthread-driver-races} & 16 & 1 & 4 & 1:03:08 & 6049MB & 21 & - & - & 0:00:12 & 72MB \\
	\bottomrule \end{tabular}\end{center}
 	\caption{
		Our prototype and Yogar-CBMC running SV-COMP benchmarks.\\
		\footnotesize Timeout set at \SI{15}{\min} with maximum memory usage of
      \SI{15}{\giga\byte}. Columns are: T:\@ true result, output matches expected
      verdict; F:\@ false result, output does not match expected verdict; U:\@
      unknown result, tool yields no answer; Time: total time taken; RSS:\@
      maximum resident set size over all benchmarks.
	}\label{fig:svcomp}
\vspace{-0.25cm}
\end{table}

We ran our tool and Yogar-CBMC on the ``pthread'' and ``pthread-driver-races'' benchmark suites in their newest (2020) incarnation.
As expected, \cref{fig:svcomp} shows that Yogar-CBMC clearly outperforms our tool for this specific set of benchmarks.
Not only does Yogar-CBMC not miscategorize even a single benchmark, it does so quickly and without using a lot of memory.
Our tool, in contrast, takes significantly more time and memory to analyze the target benchmarks.
In fact, several benchmarks do not complete within the \SI{15}{\min} time frame
and therefore cannot give a verdict for those.

The ``pthread-driver-races'' benchmark suite contains one benchmark that is marked as a failure for our tool in \cref{fig:svcomp}.
For the relevant benchmark, a verdict of ``target function unreachable'' is expected, which we translate to mean ``no data race occurs''.
However, the benchmark program constructs a pointer that may point to effectively any byte in memory, which, upon dereferencing it, leads to both, memory errors and data races (by virtue of the pointer also being able to touch another thread's stack).
While we report this behavior for completeness sake, we attribute it to the adaptations made to fit the SV-COMP model to ours.

\subsubsection{Preparation of Benchmark Suites.}
The SV-COMP benchmark suite does not only assume various kinds of special casing (\eg, functions whose name begins with \mbox{\texttt{\_\_VERIFIER\_atomic}} must be executed atomically), but also routinely violates the C standard by, for example, employing data races as a control flow mechanism~\cite[§5.1.2.4/35]{C18}.
Partially, this is because the analysis target is a question of reachability of
a certain part of the benchmark program, not its correctness.
We therefore attempted to guess the intention of the individual benchmarks,
making variables atomic or leaving the data race in when it is the aim of the
benchmark.

\subsection{Memcached}
Memcached~\cite{memcached} is an in-memory network object cache written in C.
As it is a somewhat large project with a fairly significant state space, we were unable to analyze it completely, even though our prototype still found several bugs.
Our attempts to run Yogar-CBMC did not succeed, as it reproducibly crashes.

\subsubsection{Faults detected.}
Our prototype found nine bugs in memcached 1.5.19, attributable to four different root causes, all of which where previously unknown.
The first bug is a misuse of the pthread API, causing six mutexes and condition
variables to be initialized twice, leading to undefined behavior.
We reported\footnote{
	\url{https://github.com/memcached/memcached/pull/566}
} the issue, a fix is included in version 1.5.20.
The second bug occurs during the initialization of memcached, where fields that will later be accessed in a thread-safe manner are sometimes accessed in a non-thread-safe manner, assuming that competing accesses are not yet possible.
We reported\footnote{
	\url{https://github.com/memcached/memcached/pull/575}
} a mistake our tool found in the initialization order that invalidates the assumption that locking is not (yet) necessary on one field.
A fix ships with memcached 1.5.21.
For the third bug, memcached utilizes a maintenance thread to manage and resize its core hash table when necessary.
Additionally, on another thread, a timer checks whether the maintenance thread should perform an expansion of the hash table.
We found\footnote{
	\url{https://github.com/memcached/memcached/pull/569}
} a data race between these two threads on a field that stores whether the maintenance thread has started expanding.
This is fixed in version 1.5.20.
The fourth and final issue is a data race on the \mbox{\texttt{stats\_state}} storing execution statistics.
We reported\footnote{
	\url{https://github.com/memcached/memcached/pull/573}
} this issue and a fix is included in version 1.5.21.

\begin{table}[t]
	\begin{center}\ttfamily \scriptsize \setlength{\tabcolsep}{2.25pt}\newcommand\cmidrules{\cmidrule(l{2.0pt}r{2.0pt}){1-3}\cmidrule(l{2.0pt}r{2.0pt}){4-6}\cmidrule(l{2.0pt}r{2.0pt}){7-7}\cmidrule(l{2.0pt}r{2.0pt}){8-12}\cmidrule(l{2.0pt}r{2.0pt}){13-15}\cmidrule(l{2.0pt}r{2.0pt}){16-16}}\begin{tabular}{rlrrrrrrrrrrrrrc}\toprule \multicolumn{3}{c}{\textrm{Program}} & \multicolumn{3}{c}{\textrm{Performance}} & \multicolumn{1}{c}{\textrm{Th}} & \multicolumn{5}{c}{\textrm{Events}} & \multicolumn{3}{c}{\textrm{Finished Paths}} & \multicolumn{1}{c}{\textrm{Halt}} \\[-0.5pt]\cmidrule(l{2.0pt}r{2.0pt}){1-3}\cmidrule(l{2.0pt}r{2.0pt}){4-6}\cmidrule(l{2.0pt}r{2.0pt}){8-12}\cmidrule(l{2.0pt}r{2.0pt}){13-15}& \textrm{Version} & \textrm{LoC} & \textrm{Time} & \textrm{RSS} & \textrm{\raisebox{0.20ex}{\#}I} & & \textrm{\textSigma} & \textrm{Mut} & \textrm{CV} & \textrm{\textlambda} & \textrm{Cut} & \textrm{Exit} & \textrm{Err} & \textrm{Cut} & \textrm{Reason} \\[-1.0pt]\midrule \multirow{5}{*}{\rotatebox{90}{\textrm{memcached}}}& \textrm{1.5.19}	& 31065	& 0:00:07	& 204MB	& 23K	& 1	& 12	& 6	& 0	& 3	& 0	& 0	& 1	& 0	& Finished	\\& \textrm{1.5.19+}	& 31051	& 1:33:42	& 208GB	& 1.2B	& 6	& 331K	& 271K	& 60K	& 3	& 24K	& 0	& 41K	& 29K	& Memory	\\& \textrm{1.5.20}	& 31093	& 0:00:07	& 197MB	& 92K	& 2	& 24	& 16	& 0	& 3	& 0	& 0	& 1	& 0	& Finished	\\& \textrm{1.5.20+}	& 31093	& 1:51:10	& 207GB	& 228M	& 10	& 745K	& 742K	& 2.7K	& 5	& 882	& 0	& 1	& 2.6K	& Memory	\\& \textrm{1.5.21}	& 31090	& 1:29:57	& 207GB	& 546M	& 10	& 1.1M	& 1.1M	& 3.1K	& 3	& 558	& 0	& 0	& 2.6K	& Memory	\\\cmidrules{}\multirow{2}{*}{\rotatebox{90}{\textrm{sort}}}& \textrm{8.31}	& 86596	& 0:24:29	& 23GB	& 266M	& 2	& 1.8M	& 1.4M	& 269K	& 25K	& 58K	& 8.0K	& 4.9K	& 55K	& Finished	\\& \textrm{8.31+}	& 86599	& 4:01:39	& 88GB	& 1.0B	& 2	& 6.9M	& 5.8M	& 777K	& 276K	& 346K	& 6.3K	& 0	& 285K	& Time	\\\bottomrule \end{tabular}\end{center}
 	\caption{
		Our prototype analyzing various versions of memcached and GNU sort.
		\footnotesize Timeout set at \SI{4}{\hour} with maximum memory usage of \SI{200}{\giga\byte}.
      Columns are: RSS:\@ maximum resident set size (swap space is not available);
      \raisebox{0.20ex}{\#}I:\@ number of instructions executed;
      Th: maximum number of threads active at the same time;
      \textSigma{}:
      total number of events in the explored unfolding; Mut: number of mutex
      lock/unlock events; CV:\@ number of wait1/wait2/signal/broadcast events;
      \textlambda{}: number of symbolic choices; Cut: number of events
      determined to be cutoffs; and the number of Finished Paths distinguish
      between normal termination of the program under test (Exit), detection of
      an error (Err) and being cut off (Cut).
	}\label{fig:eval}
\vspace{-0.25cm}
\end{table}

\subsubsection{Experiment.}
We run our prototype on five different versions of memcached, the three releases 1.5.19, 1.5.20 and 1.5.21 plus variants of the earlier releases (1.5.19\raisebox{0.15ex}{+} and 1.5.20\raisebox{0.15ex}{+}) which include patches for the two bugs we found during program initialization.
Those variants are included to show performance when not restricted by inescapable errors very early in the program execution.

\Cref{fig:eval} shows clearly how the two initialization bugs may lead to very quick analyses---versions 1.5.19 and 1.5.20 are completely analyzed in 7 seconds each, while versions  1.5.19\raisebox{0.15ex}{+}, 1.5.20\raisebox{0.15ex}{+} and 1.5.21 exhaust the memory budget of \SI{200}{\giga\byte}.
We have configured the experiment to stop the analysis once the memory limit is reached, although the analysis could continue in an incomplete manner by removing parts of the exploration frontier to free up memory.
Even though the number of error paths in \cref{fig:eval} differs between
configurations, it is notable that each configuration can only reach exactly one
of the bugs, as execution is arrested at that point.
When not restricted to the program initialization, the analysis of memcached produces hundreds of thousands of events and retires hundreds of millions of instructions in less than \SI{2}{\hour}.

Our setup delivers a single symbolic packet to memcached followed by a concrete shutdown packet.
As this packet can obviously only be processed once the server is ready to
process input, we observe symbolic choices only after program startup is complete.
(Since our prototype builds on KLEE, note that it assumes a single symbolic choice during startup, without generating an additional path.)

\subsection{GNU sort}
GNU sort uses threads for speeding up the sorting of very large workloads.
We reduced the minimum size of input required to trigger concurrent sorting to four lines to enable the analysis tools to actually trigger concurrent behavior.
Nevertheless, we were unable to avoid crashing Yogar-CBMC on this input.

During analysis of GNU sort 8.31, our prototype detected a data race, that we manually verified, but were unable to trigger in a harmful manner.
\Cref{fig:eval} shows two variants of GNU sort, the baseline version with eager parallelization (8.31) and a version with added locking to prevent the data race (8.31\raisebox{0.15ex}{+}).

Surprisingly, version 8.31 finishes the exploration, as all paths either exit, encounter the data race and are terminated or are cut off.
By fixing the data race in version 8.31\raisebox{0.15ex}{+}, we make it possible for the exploration to continue beyond this point, which results in a full \SI{4}{\hour} run that retires a full billion instructions while encountering almost seven million unique events.
 \section{Related Work}\label{sec:related}

The body of work in
\emph{systematic concurrency
testing}~\cite{FG05,AAJS14,RSSK15,NRSCP18,YNPP12,AAAJLS15,GFYS07,God97,TDB16}
is large.
These approaches explore thread interleavings under a fixed
program input. They prune the search space using
context-bounding~\cite{MQ07},
increasingly sophisticated
PORs~\cite{AABGS17,CCPSV18,FG05,AAJS14,AAAJLS15,RSSK15,NRSCP18,GFYS07}, or
random testing~\cite{CJXML18,YNPP12}.
Our main difference with these techniques is that we handle input data.

Thread-modular abstract interpretation~\cite{Mine14,KW16,FM07}
and
unfolding-based abstract interpretation~\cite{SRDK17}
aim at proving safety rather than finding bugs.
They use over-approximations to explore all behaviors,
while we focus on testing and never produce false alarms.
\emph{Sequentialization} techniques~\cite{QW04,ITFLP14,NSFTP17}
encode a multi-threaded program into
a sequential one.
While these encodings can be very effective for small programs~\cite{ITFLP14}
they grow quickly with large context bounds (5 or more, see~\cite{NSFTP17}).
However, some of the bugs found by our technique (\cref{sec:exp})
require many context switches to be reached.

\emph{Bounded-model checking}~\cite{AKT13,PSSTY17,YDLLW18,CF11,KWG09}
for multi-threaded programs encode
multiple program paths into a single logic formula,
while our technique encodes a single path.
Their main disadvantage is that for very large programs, even constructing the
multi-path formula can be extremely challenging, often producing an upfront
failure and no result.
Conversely, while our approach faces path explosion, it is always able to test
some program paths.

Techniques like~\cite{FHRV13,KSH15,SA06} operate on a data 
structure conceptually very similar to our unfolding.
They track read/write operations to every variable, which becomes a liability
on very large executions.
In contrast, we only use POSIX synchronization
primitives and compactly represent memory accesses to detect data races.
Furthermore, they do not exploit anything similar to cutoff events
for additional trace pruning.

Interpolation~\cite{WKO13,CJ14}
and weakest preconditions~\cite{GKWYG15} have been combined with POR and symbolic execution for \emph{property-guided}
analysis.
These approaches 
are mostly complementary to PORs like our technique, as they eliminate a
different class of redundant executions~\cite{GKWYG15}.

This work builds on top of previous work~\cite{NRSCP18,SRDK17,RSSK15}.
The main contributions \wrt{} those are:
(1) we use symbolic execution instead of concurrency testing~\cite{NRSCP18,RSSK15}
or abstract interpretation~\cite{SRDK17};
(2) we support condition variables, providing algorithms to compute conflicting
extensions for them; and
(3) here we use hash-based fingerprints to compute cutoff events, thus handling
much more complex partial orders than the approach described in~\cite{SRDK17}.
 \section{Conclusion }\label{sec:concl}

Our approach combines POR and symbolic execution to
analyze programs \wrt{} both input (data) and concurrency non-determinism.
We model a significant portion of the pthread API, including try-lock operations and robust mutexes.
We introduce two techniques to cope with state-space explosion in
real-world programs.
We compute cutoff events by using efficiently-computed fingerprints that
uniquely identify the total state of the program.
We restrict scheduling to synchronization points and report data races as
errors.
Our experiments found previously unknown bugs in real-world software projects (memcached, GNU sort).

\section*{Acknowledgements}
This research is supported in parts by the European Research Council (ERC) under the European Union's Horizon 2020 Research and Innovation Programme (grant agreement No.~647295 (SYMBIOSYS)).

\ifcav{}
\else
\clearpage
\appendix

\section{Model of Computation}\label{app:model}

In this section we present a model of computation suitable for describing
multi-threaded programs that use POSIX threading.

A \emph{concurrent program} is defined as a structure
$P \eqdef \tup{L, \mem, \locks, \conds, T, m_0, p_0}$, where
$L$ is the set of~\emph{program locations},
$\mem$~is the set of~\emph{memory states}
(valuations of program variables),
$\locks$~is the set of~\emph{mutexes},
$\conds$~is the set of~\emph{condition variables},
$m_0 \in \mem$~is the~\emph{initial memory state}, and
$p_0 \colon \Np \to L$~is a function that maps every thread identifier
(in $\Np \eqdef \N \setminus \set{0}$)
to its \emph{initial location}, in addition to
$T \subseteq \Np \times L \times L \times Q \times 2^{\mem \times \mem}$,
the set of \emph{thread statements}.
A given thread statement $t \eqdef \tup{i, n, n', q, r} \in T$ intuitively
represents that thread~$i$ can execute operation~$q$, updating the program
counter from~$n$ to~$n'$ and the memory from~$m \in \mem$ to $m' \in \mem$ if
$\tup{m, m'} \in r$.

An operation characterizes the nature of the activity performed by a statement.
We distinguish the following set of operations:
\[
Q \eqdef
  \set{\local} \cup
  (\set{\lock,\unlock} \times \locks) \cup
  (\set{\wait} \times \conds \times \locks) \cup
  (\set{\signal,\bcast} \times \conds)
\]
Statements carrying a~$\local$ operation model thread-local code.
A statement of $\tup{\lock, l}$ operation represents a request to acquire
mutex~$l$.
A $\tup{\wait, c,l}$ models a request to wait for a notification on
condition variable~$c$.
Finally, operations $\tup{\signal, c}$ and $\tup{\bcast, c}$ represent,
respectively, a \emph{signal} and a \emph{broadcast} operation on the
condition variable~$c$.

\section{Transition System Semantics}\label{app:lts}

We use \emph{labeled transition systems} (LTS)~\cite{CGP99} semantics for our
programs.
We associate a program~$P$ with the LTS~$M_P \eqdef \tup{S, {\to}, A, s_0}$.
The set
\[
S \eqdef (\Np \to L) \times \mem \times (\locks \to \N) \times (\conds \to \smash{2^{(\Np \cup -\Np)}})
\]
are the \emph{states} of~$M_P$, \ie, tuples of the form~$\tup{p,m,u,v}$
where~$p$ is a function that indicates the program location of every thread,
$m$ is the state of the (global) memory,
$u$~indicates when a mutex is locked (by thread~$i \ge 1$) or unlocked~(0),
and~$v$ maps every condition variable to a set of integers containing the
(positive) identifiers of those threads that currently wait on that condition
variable as well as negated thread identifiers of those threads that have
already been woken up by a signal or broadcast to that condition variable.
The \emph{initial state} is~$s_0 \eqdef \tup{p_0, m_0, u_0, v_0}$,
where $p_0$ and $m_0$ come from~$P$,
$u_0 \colon \locks \to \set 0$ is the function that maps every lock to the
number~$0$,
and $v_0 \colon \conds \to \set \emptyset$ is the function that maps every
condition variable to an empty set.

An \emph{action} in~$A \subseteq \Np \times B$ is a pair~$\tup{i,b}$ where~$i$
is the identifier of the thread that executes some statement and~$b$ is the
\emph{effect} of the statement.  Effects characterize the nature of an LTS
transition in $M_P$ similarly to how operations capture the nature of a
statement in~$P$.  We consider the following set of effects:
\begin{align*}
 B \eqdef& ~(\set{\loc} \times T) \cup (\set{\acq,\rel} \times \locks)\\
     \cup& ~(\set{\wa,\waa} \times \conds \times \locks)\\
     \cup& ~(\set{\sigg} \times \conds \times \N)
            \cup (\set{\bro} \times \conds \times \smash{2^\Np}\!)
\end{align*}
As we will see below, the execution of a statement gives rise to a transition
whose effect is in correspondence with the operation of the statement.

The \emph{transition relation}~${\to} \subseteq S \times A \times S$
contains triples of the form $s \fire{\tup{i,b}} s'$ that represent the
interleaved execution of a statement of thread~$i$ with effect~$b$ updating the
global state from~$s$ to~$s'$.
The precise definition of~$\to$ is given by the inference rules in~\cref{fig:rules}.
These rules insert transitions in~$M_P$ that reflect the execution of statements
on individual states.

As announced above, the effect of the inserted transition is in correspondence
with the operation of the statement.
For instance, the $\ruleloc$ rule inserts a transition labeled by a \loc\xspace
effect when it finds that a thread statement of \local\xspace operation can be
executed on a given LTS state.
Similarly, $\tup{\lock, l}$ and $\tup{\unlock, l}$ operations produce
transitions labeled by $\tup{\acq, l}$ and $\tup{\rel, l}$ effect.
However, a $\tup{\wait, c,l}$ action will produce \emph{two} successive LTS
transitions, as explained in \cref{sec:programs.actions}.

The POSIX standard leaves undefined the behavior of the program under various
circumstances, such as when a thread attempts to unlock a mutex not already
locked by the calling thread.
The rules in \cref{fig:rules} halt the execution of the calling thread whenever
undefined behavior would arise after executing an operation.
To the best of our knowledge, the only exception to this in our semantics is
when two threads concurrently attempt to wait on the same condition
variable using different mutexes. The standard declares this as undefined
behavior \cite{POSIX} but rule $\rulewa$ lets both threads run
normally.

Our semantics only implement the so-called non-robust \texttt{NORMAL} mutexes
\cite{POSIX}.
\texttt{RECURSIVE} and \texttt{ERRORCHECK} mutexes can easily be implemented
using auxiliary local variables.

\begin{figure*}[t]

\mathligson
\centering

\inference
   [\ruleloc]
   {t \eqdef \tup{i, n, n', \local, r} \in T & p(i) = n & \tup{m, m'} \in r}
   {\tup{p, m, u, v} \fire{\tup{i, \tup{\loc, t}}} \tup{p_{i \mapsto n'}, m', u, v}}

\bigskip
\inference
   [\ruleacq]
   {\tup{i, n, n', \tup{\lock, l}, r} \in T & p(i) = n & u(l) = 0}
   {\tup{p, m, u, v} \fire{\tup{i, \tup{\acq, l}}} \tup{p_{i \mapsto n'}, m, u_{l \mapsto i}, v}}

\bigskip
\inference
   [\rulerel]
   {\tup{i, n, n', \tup{\unlock, l}, r} \in T & p(i) = n & u(l) = i}
   {\tup{p, m, u, v} \fire{\tup{i, \tup{\rel, l}}} \tup{p_{i \mapsto n'}, m, u_{l \mapsto 0}, v}}

\bigskip
\inference
   [\rulewa]
   {\tup{i, n, n', \tup{\wait, c,l}, r} \in T & p(i) = n & u(l) = i}
   {\tup{p, m, u, v} \fire{\tup{i, \tup{\wa, c,l}}}
    \tup{p, m, u_{l \mapsto 0}, v_{c \mapsto v(c) \cup \set i}}}

\bigskip
\inference
   [\rulewaa]
   {\tup{i, n, n', \tup{\wait, c,l}, r} \in T & p(i) = n & u(l) = 0 & -i \in v(c)}
   {\tup{p, m, u, v} \fire{\tup{i, \tup{\waa, c,l}}}
    \tup{p_{i \mapsto n'}, m, u_{l \mapsto i}, v_{c \mapsto v(c) \setminus \set{-i}}}}

\bigskip
\inference
   [\rulesigg]
   {\tup{i, n, n', \tup{\signal, c}, r} \in T & p(i) = n & j \in v(c) & j > 0}
   {\tup{p, m, u, v} \fire{\tup{i, \tup{\sigg, c, j}}}
    \tup{p_{i \mapsto n'}, m, u, v_{c \mapsto v(c) \setminus \set j \cup \set{-j}}}}

\bigskip
\inference
   [\rulesiggl]
   {\tup{i, n, n', \tup{\signal, c}, r} \in T & p(i) = n & \set{j \in v(c) \colon j > 0} = \emptyset}
   {\tup{p, m, u, v} \fire{\tup{i, \tup{\sigg, c, 0}}} \tup{p_{i \mapsto n'}, m, u, v}}

\bigskip
\inference
   [\rulebro]
   {\tup{i, n, n', \tup{\bcast, c}, r} \in T & p(i) = n & \set{j \in v(c) \colon j > 0} = W}
   {\tup{p, m, u, v} \fire{\tup{i, \tup{\bro, c, W}}}
    \tup{p_{i \mapsto n'}, m, u, v_{c \mapsto v(c) \setminus W \cup {-W}}}}

\mathligsoff

\caption{LTS semantics $M_P$ of a concurrent program~$P$.
Notation $f_{x \mapsto y}$ denotes a function that behaves like~$f$ for all
inputs except for~$x$, where $f(x) = y$. For a given set
$S \subseteq \Np$, we further denote
$-S \eqdef \set{-s \in \Z \colon s \in S}$.}\label{fig:rules}
\end{figure*}

We finish this section with some additional definitions about LTSs.
If $s \fire{a} s'$ is a transition, then we say that
action~$a$ is \emph{enabled} at~$s$.
Let~$\enabl s$ denote the set of actions enabled at~$s$.
As actions may be non-deterministic,
firing $a$ may produce more than one such $s'$.
For a sequence $\sigma \eqdef a_1 \ldots a_n \in A^{*}$ and a state $s \in S$
we inductively define $\state{s,\sigma} \eqdef \set{s}$ if $|\sigma| = 0$ and
$\state{s,\sigma} \eqdef
\set{s'' \in S \colon s' \in \state{s,\sigma'} \land s' \fire{a_n} s''}$
otherwise, where $\sigma' a_n = \sigma$.
By extension we let $\state \sigma \eqdef \state{s_0,\sigma}$ denote the states
reachable by~$\sigma$ from the initial state.
We say that $\sigma$ is a \emph{run} when $\state \sigma \ne \emptyset$.
We let $\runs{M_P}$ denote the set of all runs and
$\reach{M_P} \eqdef \bigcup_{\sigma \in \runs{M_P}} \state \sigma$
the set of all \emph{reachable states}.
By extension, for a set $S' \subseteq S$ of states we say that~$a$ is enabled
at~$S'$ if~$a$ is enabled at some state in~$S'$.

In this work, we assume that the model of computation~$P$ satisfies the
following well-formedness condition:

\begin{definition}\label{def:well-formed}
   A concurrent program~$P$ is \emph{well-formed} if
   for any reachable state $s \in \reach{M_P}$
   and any two actions $a, a' \in A$ enabled at~$s$, we have that
   both actions are local, \ie,
   $a = \tup{\loc, t}$ and
   $a' = \tup{\loc, t'}$
   for some $t, t' \in T$.
\end{definition}

In simple words, a program~$P$ will therefore be well-formed when the only source
of data non-determinism is $\local$ statements in~$P$.
In this work we assume that any given program is well-formed.

\section{Independence}\label{app:indep}

Many partial-order methods use a notion called independence to avoid exploring
concurrent interleavings that lead to the same state.
We recall the standard notion of independence for actions in~\cite{God96}.
Given two actions~$a, a' \in A$ and one state $s \in S$, we say that
\emph{$a$ commutes with~$a'$ at state~$s$} iff for all $s' \in S$
we have:
\begin{itemize}
\item
  if $a \in \enabl s$ and $s \fire a s'$,
  then $a' \in \enabl s$ iff $a' \in \enabl{s'}$; and
  \eqtag{eq:com1}
\item
  if $a, a' \in \enabl s$, then $\state{s, aa'} = \state{s, a'a}$.
  \eqtag{eq:com2}
\end{itemize}
Independence between actions is an under-approximation of commutativity.
A binary relation ${\indep} \subseteq A \times A$ is a
\emph{valid independence} on~$M_P$ if it is symmetric, irreflexive, and
every pair $\tup{a, a'}$ in $\indep$ commutes at every state in $\reach{M_P}$.

In general $M_P$ has multiple independence relations.
Clearly $\emptyset$ is always one of them.
Broadly speaking, the larger an independence relation is, the less interleaved
executions a partial-order method will explore, as more of them will be
regarded as equivalent to each others.

Given a valid independence~$\indep$,
the complementary relation $(A \times A) \mathrel\setminus {\indep}$
is a \emph{dependence} relation, which we often denote by~$\depen$.

\section{Independence for Programs}\label{app:indep.programs}

Let~$P$ be a program.

\begin{definition}\label{def:indepp}
We define
the \emph{dependence relation}~${\depen_P} \subseteq A \times A$
as the only reflexive, symmetric
relation where $a \depen_P a'$ hold if either both $a$ and~$a'$ are actions of
the same thread or~$a$ matches the action described by the left column of
\cref{tab:indepp} and~$a'$ matches the right column.
Finally, we define the
\emph{independence relation}~${\indep_P} \eqdef (A \times A) \setminus {\depen_P}$
as the complementary of $\depen_P$ in~$A$.

\end{definition}

\begin{table*}[t]
\begin{center}
\begin{tabular}{p{2.3cm}p{9.7cm}}

\toprule
\textbf{Action} & \textbf{Dependent actions on threads other than $i$} \\[1pt]

\midrule
$\tup{i, \tup{\loc, t}}$
&
None.
\\

\midrule
$\tup{i,\tup{\acq, l}}$ or \newline
$\tup{i,\tup{\rel, l}}$
&
For any $j \ge 1$, and any $c \in \conds$ the following actions are dependent:
$\tup{j, \tup{\acq,l}}$,
$\tup{j, \tup{\rel,l}}$,
$\tup{j, \tup{\wa,c,l}}$,
$\tup{j, \tup{\waa,c,l}}$.
\\

\midrule
$\tup{i,\tup{\wa, c,l}}$
&
For any $j \ge 1$, and any $c' \in \conds$, and any $W \subseteq \Np$
the following actions are dependent:
$\tup{j, \tup{\acq,l}}$,
$\tup{j, \tup{\rel,l}}$,
$\tup{j, \tup{\wa,c',l}}$,
$\tup{j, \tup{\waa,c',l}}$,
$\tup{j, \tup{\sigg,c,0}}$,
$\tup{j, \tup{\sigg,c,i}}$,
$\tup{j, \tup{\bro,c,W}}$.
\\

\midrule
$\tup{i,\tup{\waa, c, l}}$
&
For any $j \ge 1$, and any $c' \in \conds$, and any $W \subseteq \Np$ such that
$i \in W$, the following actions are dependent:
$\tup{j, \tup{\acq,l}}$,
$\tup{j, \tup{\rel,l}}$,
$\tup{j, \tup{\wa,c',l}}$,
$\tup{j, \tup{\waa,c',l}}$,
$\tup{j, \tup{\sigg,c,i}}$,
$\tup{j, \tup{\bro,c,W}}$.
\\

\midrule
$\tup{i,\tup{\sigg, c, k}}$ \newline
with $k \ne 0$
&
For any $j \ge 1$, and any $l \in \locks$,
and any $W \subseteq \Np$ the following actions are dependent:
$\tup{k, \tup{\wa,c,l}}$,
$\tup{k, \tup{\waa,c,l}}$,
$\tup{j, \tup{\sigg,c,0}}$,
$\tup{j, \tup{\sigg,c,k}}$,
$\tup{j, \tup{\bro,c,W}}$.
\\

\midrule
$\tup{i,\tup{\sigg, c, 0}}$ \newline
or \newline
$\tup{i,\tup{\bro, c, \emptyset}}$
&
For any $j, k \ge 1$, and any $l \in \locks$,
and any $W \subseteq \Np$ such that $W \ne \emptyset$, the following actions are
dependent:
$\tup{j, \tup{\wa,c,l}}$,
$\tup{j, \tup{\sigg,c,k}}$,
$\tup{j, \tup{\bro,c,W}}$.
\\

\midrule
$\tup{i,\tup{\bro, c, W}}$ \newline
with $W \ne \emptyset$
&
For any $j \ge 1$, and any $j' \in W$, and any $k' \ge 0$, and any $l \in \locks$,
and any $W' \subseteq \Np$ the following actions are dependent:
$\tup{j, \tup{\wa,c,l}}$,
$\tup{j', \tup{\waa,c,l}}$,
$\tup{j, \tup{\sigg,c,k'}}$,
$\tup{j, \tup{\bro,c,W'}}$.
\\

\bottomrule
\end{tabular}
\end{center}
\caption{Two actions operating on the same thread are always dependent, here we give the part of our dependence relation~${\protect\depen_P}$ that concerns actions on distinct threads. See \cref{def:indepp} for the complete definition of~${\protect\depen_P}$ and~${\protect\indep_P}$.}\label{tab:indepp}
\end{table*}

We say that $P$ is \emph{data-race-free} iff
any two local actions
$a \eqdef \tup{i, \tup{\loc, t}}$ and
$a' \eqdef \tup{i', \tup{\loc, t'}}$
from different threads (\ie, $i \ne i'$) commute at every
reachable state $s \in \reach{M_P}$.
This ensures that $\local$ statements of~$P$ modify the memory in a
manner which does not interfere with $\local$ statements of other threads.

\begin{theorem}
If $P$ is data-race-free, then
$\indep_P$ is a valid independence relation.
\end{theorem}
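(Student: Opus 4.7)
The plan is to verify the three defining conditions of a valid independence relation in order. \emph{Symmetry} and \emph{irreflexivity} are immediate: by \cref{def:indepp}, $\depen_P$ is reflexive and symmetric, and $\indep_P$ is its complement in $A \times A$, so $\indep_P$ inherits irreflexivity and symmetry automatically. The real content is showing that any pair $a \indep_P a'$ commutes at every reachable state $s \in \reach{M_P}$, \ie, that both conditions \eqref{eq:com1} and \eqref{eq:com2} hold.

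For commutativity, I would proceed by case analysis on the effects of the two actions $a = \tup{i, b}$ and $a' = \tup{i', b'}$. Since same-thread actions are always dependent, we may assume $i \ne i'$ throughout. The first case, where both $b$ and $b'$ are of the form $\tup{\loc, \cdot}$, is exactly where the data-race-free hypothesis is needed: by assumption, two such local actions from different threads commute at every reachable state, discharging both commutativity conditions directly. For the remaining cases, I would inspect the inference rules of \cref{fig:rules} and argue that the pieces of the global state $\tup{p, m, u, v}$ that $a$ inspects and modifies are disjoint from those inspected and modified by $a'$. For instance, two $\acq$/$\rel$ actions on different mutexes $l \ne l'$ only touch disjoint components of $u$, they do not read each other's thread-local program counters, and they do not interact with any condition variable; thus swapping them preserves enabledness and yields the same successor state. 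Similar reasoning applies to $\wa$/$\waa$ on disjoint locks, to signals/broadcasts that wake disjoint sets of threads, and so on---each time the relevant effect touches only a single entry of~$u$ or~$v$, or modifies~$v(c)$ in a way that commutes with the other action's modification.

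The somewhat delicate sub-cases are those involving condition variables. A $\tup{\sigg, c, j}$ with $j \ge 1$ removes~$j$ from $v(c)$ and inserts~$-j$; a concurrent $\tup{\wa, c', l'}$ from a different thread~$i'' \ne j$ writes to a different entry (if $c \ne c'$) or adds~$i''$ to $v(c)$ (if $c = c'$), and in the latter case the two set updates clearly commute because they touch distinct elements. The rule $\rulewaa$ is the most constraining, since it requires $-i \in v(c)$: I would need to check that if $\tup{i, \tup{\waa, c, l}}$ is enabled at~$s$, then executing an independent action cannot remove~$-i$ from $v(c)$, which is precisely what the dependence table guarantees (any signal/broadcast that could wake thread~$i$ is listed as dependent). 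A lost signal $\tup{\sigg, c, 0}$ requires the positive part of $v(c)$ to be empty, and the dependence table forces any action that would add a waiter to $v(c)$ to be dependent on it, so enabledness is preserved across swaps.

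The main obstacle I expect is the sheer bookkeeping of the case split: \cref{tab:indepp} has seven row-types and each pair of independent effects needs both commutativity conditions verified. The trickiest rows are $\tup{\wa, c, l}$, $\tup{\waa, c, l}$, $\tup{\sigg, c, k}$ and $\tup{\bro, c, W}$, where one must chase how $v(c)$ evolves under each rule and confirm that the entries touched by $a$ and $a'$ are either disjoint or commute as set updates. To organize this, I would introduce a small lemma stating that whenever $a \indep_P a'$, the pair $\tup{a, a'}$ affects disjoint coordinates of $\tup{p, m, u, v}$ except possibly $v(c)$ for some shared $c$, and in that exceptional case the modifications commute as element-wise insertions and deletions on disjoint element sets. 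With that lemma in hand, both \eqref{eq:com1} and \eqref{eq:com2} follow uniformly, since every rule in \cref{fig:rules} depends only on and modifies only the listed coordinates. The well-formedness assumption from \cref{def:well-formed} is used implicitly to rule out spurious enabledness issues for non-local actions.
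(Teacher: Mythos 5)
Your proposal matches the paper's proof essentially step for step: symmetry and irreflexivity by construction, the data-race-free hypothesis discharging the local--local case, and an exhaustive case analysis over effect pairs showing via the rules of \cref{fig:rules} that independent actions either touch disjoint coordinates of $\tup{p,m,u,v}$ or perform commuting set updates on $v(c)$, with the same delicate sub-cases ($\rulewaa$'s check for $-i \in v(c)$, lost signals requiring no positive waiters) handled the same way. The only cosmetic difference is that the paper does not package the disjoint-coordinates observation into a separate lemma but repeats it inline per case, and it explicitly flags that condition \eqref{eq:com1} is asymmetric so both directions of commutation must be checked --- something your per-effect-pair case split covers implicitly.
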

\begin{proof}
Clearly, by construction, $\indep_P$ is symmetric and irreflexive.
Let
$a \eqdef \tup{i,b}$ and
$a' \eqdef \tup{i',b'}$
be two actions of~$M_P$.
Let $s \in \reach{M_P}$ be any reachable state of~$M_P$ such that
$s \eqdef \tup{p, m, u, v}$.
Assume that $a \indep_P a'$, and so, $i \ne i'$.
We need to show that~$a$ commutes with~$a'$.

Remark that the fact that~$a$ commutes with~$a'$ at~$s$ does not imply that~$a'$
commutes with~$a$ at~$s$, due to the way~\cref{eq:com1} is defined.
To simplify and organize the reasoning below better, we will not only prove
that~$a$ commutes with~$a'$ at~$s$, but for some combinations of~$a$ and~$a'$ we
will also prove that~$a'$ commutes with~$a$ at~$s$. We define the following
claims:

\begin{itemize}
\item
  If $a \in \enabl s$ and $s \fire a s'$,
  then $a' \in \enabl s$ iff $a' \in \enabl{s'}$.
  \eqtag{eq:indepcom1}
\item
  If $a' \in \enabl s$ and $s \fire{a'} s'$,
  then $a \in \enabl s$ iff $a \in \enabl{s'}$.
  \eqtag{eq:indepcom2}
\item
  If $a, a' \in \enabl s$, then $\state{s, aa'} = \state{s, a'a}$.
  \eqtag{eq:indepcom3}
\end{itemize}

Showing \cref{eq:indepcom1} and \cref{eq:indepcom3} will prove that $a$ commutes
with~$a'$ at~$s$.
And showing
\cref{eq:indepcom2} and \cref{eq:indepcom3} will prove that $a'$
commutes with~$a$ at~$s$.
The proof is by cases on~$a$:

\begin{itemize}
\item
  Assume that $b = \tup{\loc, t}$.
  If $b' = \tup{\loc, t'}$, for any $t' \in T$, then $a$ commutes with~$a'$ at~$s$ (and $a'$ commutes
  with~$a$) because we assumed that~$P$ is data-race-free.
  So assume that $b'$ is of $\acq$, $\rel$, $\wa$, $\waa$, $\sigg$, or $\bro$ effect.

  We show \cref{eq:indepcom1}.
  Clearly firing~$a$ at~$s$ cannot enable or disable~$a'$.
  This is because the preconditions of rules
  $\ruleacq$,
  $\rulerel$,
  $\rulewa$,
  $\rulewaa$,
  $\rulesigg$,
  $\rulesiggl$, and
  $\rulebro$
  cannot be affected by the modification to the state that rule
  $\ruleloc$
  produces. Rule $\ruleloc$ only modifies the memory component of the
  state and the program counter for thread~$i$. And the other rules do not
  check the memory component, they check the program counter for thread
  $i'$, where $i' \ne i$, as well as specific conditions on $u$ and~$v$,
  which rule $\ruleloc$ does not modify.
  Consequently \cref{eq:indepcom1} holds.

  We show \cref{eq:indepcom2}.
  Similarly, firing~$a'$ at~$s$ cannot enable or disable~$a$.
  The reasoning here is analogous to the previous case: the rules
  $\ruleacq$,
  $\rulerel$,
  $\rulewa$,
  $\rulewaa$,
  $\rulesigg$,
  $\rulesiggl$, and
  $\rulebro$
  never update a component of the state that can affect the enabledness of
  rule $\ruleloc$.
  Consequently \cref{eq:indepcom2} holds.

  We show \cref{eq:indepcom3}.
  For the same reasons as above, firing rule $\ruleloc$ followed by any other
  rule, or firing such other rule followed by $\ruleloc$ yields the same state.

  This proves that $\tup{i, \tup{\loc, t}}$ commutes with~$a'$ at~$s$, and
  that~$a'$ commutes with $\tup{i, \tup{\loc, t}}$ at~$s$.

\item
  Assume that $b \in \set{\tup{\acq, l}, \tup{\rel, l}}$.
  If $b' = \tup{\loc, t}$, for any $t \in T$, then we have already proven
  (above) that $a$ commutes with~$a'$ at~$s$.

  So assume that $b' \in \set{\tup{\acq, l'}, \tup{\rel, l'}}$.
  By \cref{def:indepp} we know that $l \ne l'$ as otherwise $b$ and $b'$ would
  be dependent.
  We need to prove \cref{eq:indepcom1,eq:indepcom3}.
  Showing \cref{eq:indepcom1} will also imply \cref{eq:indepcom2} because~$b$
  and~$b'$ are symmetric.
  Both \cref{eq:indepcom1,eq:indepcom3} hold owing to the same reason:
  applying rule $\ruleacq$ (\resp $\rulerel$) on a $\tup{\lock,l}$ operation
  (\resp $\tup{\unlock,l}$) of thread~$i$ does not interfere with applying
  $\ruleacq$ on $\tup{\lock,l'}$ or
  $\rulerel$ on $\tup{\unlock,l'}$
  on a different thread~$i'$ because $\ruleacq$ (\resp $\rulerel$)
  only updates and checks information that is local to each application, \ie,
  the program counter and the lock in question. Since the
  program counters and the locks are different in these applications, they
  cannot interfere between each other. In particular, observe that the
  global memory remains unchanged.

  Assume now that $b' \in \set{\tup{\wa, c,l'}, \tup{\waa, c,l'}}$, for any
  $c \in \conds$ and any $l' \in \locks$.
  Again by \cref{def:indepp} we know that $l \ne l'$.
  We need to prove \cref{eq:indepcom1}, \cref{eq:indepcom2}, and
  \cref{eq:indepcom3}.
  All the three claims hold owing to the same reason:
  the application of a rule from the set
  $\set{\ruleacq, \rulerel}$
  does not interfere with the application of any rule from the set
  $\set{\rulewa, \rulewaa}$
  if the locks on which they are applied are different and the threads on
  which they are applied are different as well.

  Finally, assume that $b' \in \set{\tup{\sigg, c, k}, \tup{\bro, c, W}}$
  for some $c \in \conds$, $k \in \N$, and $W \subseteq \Np$.
  Similarly to the above, applying one rule from the set
  $\set{\ruleacq, \rulerel}$
  does not interfere with applying any rule from the set
  $\set{\rulesigg, \rulesiggl, \rulebro}$
  if they are applied on different threads,
  because each one of the rules only update the instruction pointer of the
  thread and, while $\ruleacq$ and $\rulerel$ only update lock information,
  rules $\rulesigg$, $\rulesiggl$, and $\rulebro$ only update information
  about condition variables.

\item
  Assume that $b = \tup{\wa, c, l}$.
  If $b' \in \set{\tup{\loc, t}, \tup{\acq, l'}, \tup{\rel, l'}}$,
  for some $t \in T$ and some $l' \in \locks$, then we have already shown
  that~$a$ commutes with~$a'$ at~$s$.

  Assume that $b' = \tup{\wa, c', l'}$
  for some $l' \in \locks$ and $c' \in \conds$.
  By \cref{def:indepp} we have $l \ne l'$,
  but $c'$ might be equal to~$c$ or not.
  Clearly the application of $\rulewa$ commutes with itself because
  we have $l \ne l'$, $\rulewa$ does not check for the state of $v(c)$,
  and even if $c = c'$ with both applications updating $v(c)$, both orders of
  adding $i$ and $i'$ produce the same state.

  Assume now that $b' = \tup{\waa, c',l'}$
  for some $c' \in \conds$ and $l' \in \locks$.
  By \cref{def:indepp} we have $l \ne l'$,
  but $c'$ might be equal to~$c$ or not.
  Rules $\rulewa$ and $\rulewaa$ commute with each other because:
  $l \ne l'$, so the application of neither rule modifies the result of the
  preconditions that check~$l$ or~$l'$;
  both rules check and update instruction pointers for different threads;
  while $\rulewaa$ does check whether $-i' \in v(c)$ and $\rulewa$
  modifies~$v(c)$, $\rulewa$ cannot modify the outcome of the check
  because it only adds~$i$ to~$v(c)$, and $i \ne -i'$.

  Assume that $b' = \tup{\sigg, c', 0}$, with $c' \in \conds$.
  By \cref{def:indepp} we have $c' \ne c$.
  Then rule $\rulewa$ commutes with rule $\rulesiggl$ because
  while $\rulewa$ checks and updates lock~$l$, rule~$\rulesiggl$ does not
  check or update locks;
  and because while rule $\rulewa$ updates $v(c)$ and rule $\rulesiggl$
  performs a check on~$v(c')$, $c \ne c'$, so the rules do not interfere.

  Assume that $b' = \tup{\sigg, c', k}$, with $c' \in \conds$
  and $k \in \N \setminus \set{0, i}$.
  Then rule $\rulewa$ commutes with rule $\rulesigg$ because:
  while $\rulewa$ checks and updates lock~$l$, rule~$\rulesigg$ does not
  check or update locks;
  rule $\rulewa$ adds~$i$ to~$v(c)$ and rule~$\rulesigg$ checks
  if $k \in v(c')$, but $i \ne k$, so the update cannot interfere with the
  check, whether $c = c'$ or not;
  finally, rule $\rulewa$ adds~$i$ to~$v(c)$ and rule~$\rulesigg$ replaces~$k$
  with $-k$ in $v(c')$, but $i \ne k$ and $i \ne -k$, so both updates result
  in the same state regardless of the order of execution,
  whether $c = c'$ or not.

  Assume that $b' = \tup{\sigg, c', i}$, with $c' \in \conds$.
  By \cref{def:indepp} we have $c' \ne c$.
  Then rule $\rulewa$ commutes with rule $\rulesigg$ because:
  while $\rulewa$ checks and updates lock~$l$, rule~$\rulesigg$ does not
  check or update locks;
  rule $\rulewa$ adds~$i$ to~$v(c)$ and rule~$\rulesigg$ checks
  if $i \in v(c')$, but $c \ne c'$, so the update cannot interfere with the
  check;
  finally, both rules respectively update~$v(c)$ and~$v(c')$,
  but since $c \ne c'$ the updates result in the same state regardless of
  the order of execution.

  Finally, assume that $b' = \tup{\bro, c', W}$, with $c' \in \conds$
  and $W \subseteq \Np$.
  By \cref{def:indepp} we have $c' \ne c$.
  Then rule $\rulewa$ commutes with rule $\rulebro$ because of the following
  reasons.
  While $\rulewa$ checks and updates lock~$l$, rule~$\rulebro$ does not
  check or update locks.
  Rule $\rulewa$ adds~$i$ to~$v(c)$ and rule~$\rulebro$ modifies
  $v(c')$. But running these two updates in any order results in the same
  state because $c \ne c'$.

  Therefore, we have shown that in any case~$a$ and~$a'$ satisfy
  \cref{eq:indepcom1}, \cref{eq:indepcom2}, and \cref{eq:indepcom3} on~$s$.

\item
  Assume that $b = \tup{\waa, c,l}$.
  If $b' \in \set{\tup{\loc, t}, \tup{\acq, l'}, \tup{\rel, l'}, \tup{\wa, c',l'}}$,
  for some $t \in T$, and some $l' \in \locks$ and some $c' \in \conds$, then we
  have already shown that~$a$ commutes with~$a'$ at~$s$.

  Assume that $b' = \tup{\waa, c', l'}$
  for some $c' \in \conds$ and some $l' \in \locks$.
  By \cref{def:indepp} we know that $l \ne l'$.
  In this case rule $\rulewaa$ commutes with itself for the following reasons.
  While each application checks and updates the states of locks and condition
  variables neither application interferes with the other and regardless
  of their order, the same state is reached after applying both.
  For locks, rule $\rulewaa$ checks and updates the state of, respectively,
  locks~$l$ and~$l'$, but they are different, so there is no interference.
  For condition variables, $\rulewaa$ checks for (and removes) $-i \in v(c)$
  and $-i' \in v(c')$, respectively, but since $i \ne i'$, regardless of
  whether $c \ne c'$, the actions do not interfere.

  Assume that $b' = \tup{\sigg, c', k}$,
  for some $c' \in \conds$ and $k \in \N$.
  By \cref{def:indepp} we know that either $c' \ne c$ or $k \ne i$.
  If $k = 0$, clearly rules $\rulewaa$ and $\rulesiggl$ commute because, while
  $\rulewaa$ updates $u(l)$, $\rulesiggl$ does not check or update lock
  information; and while $\rulewaa$ checks for $v(c)$, $\rulesiggl$ does not
  update any condition variable information.
  If $k \ne 0$, then clearly also rules $\rulewaa$ and $\rulesigg$ commute.
  This is because while $\rulewaa$ checks and updates lock~$l$,
  $\rulesigg$ neither checks nor updates any lock information.
  Additionally, while both rules check the state of condition variables~$c$
  and~$c'$, respectively, they clearly commute if~$c \ne c'$, as $\rulesigg$
  only updates $v(c')$. Furthermore, if $c = c'$, then $k \ne i$, and so
  updating $v(c)$ in $\rulesigg$ by replacing~$k$ with~$-k$ cannot
  interfere with $\rulewaa$'s check for $-i \in v(c)$.

  Finally, assume that $b' = \tup{\bro, c', W}$, for some $c' \in \conds$
  and some $W \subseteq \Np$.
  By \cref{def:indepp} we know that either $c' \ne c$ or $i \notin W$.
  Clearly $\rulewaa$ and $\rulebro$ commute if $c \ne c'$, as $\rulebro$
  neither checks nor updates lock information and it only updates
  $v(c')$, while $\rulewaa$ operates on $v(c)$.
  So assume that $c = c'$, and so $i \notin W$.
  We show~\cref{eq:indepcom2}.
  Applying $\rulebro$ neither enables nor disables $\rulewaa$.
  This is because if $a$ is enabled ($\rulewaa$ is enabled) at~$s$, then
  $-i \in v(c)$.
  With applying $\rulebro$, every element of $W$ (which only contains positive
  integers) is replaced by its negation in $v(c') = v(c)$.
  This does not remove any negative element, especially $-i$, from $v(c)$ and
  also does not add $-i$ as we have $i \notin W$. Thus, the validity of
  $-i \in v(c)$ as prerequisite of $\rulewaa$ remains.
  Since $\rulebro$ additionally does not update lock information (in~$u$),
  applying $\rulebro$ leaves the enabledness of rule $\rulewaa$ unchanged.
  We show~\cref{eq:indepcom1}.
  Similarly applying rule $\rulewaa$ only updates lock information, and rule
  $\rulebro$ does not test for the state of locks (in fact it only checks
  for the instruction pointer of thread~$i'$).
  Consequently, firing~$a$ neither enables nor disables~$a'$.
  Finally, proving \cref{eq:indepcom3} is an easy exercise.

\item
  Assume that $b = \tup{\sigg, c, k}$ with $k \ne 0$.
  If for some $t \in T$, and some $l' \in \locks$ and some $c' \in \conds$, we
  have $b' \in \set{\tup{\loc, t}, \tup{\acq, l'}, \tup{\rel, l'},
  \tup{\wa, c', l'}, \tup{\waa, c', l'}}$,
  then we have already shown that~$a$ commutes with~$a'$ at~$s$.

  Assume that $b' = \tup{\sigg, c', j}$,
  for some $c' \in \conds$ and some $j \ge 1$.
  By \cref{def:indepp} we know that either $c \ne c'$ or $j \ne k$.
  As a result clearly, $a$ commutes with~$a'$, because rule~$\rulesigg$
  commutes with itself, either because it updates (and checks) different
  condition variables or because it updates different threads in the set
  of the same condition variable.

  Assume that $b' = \tup{\sigg, c', 0}$,
  for some $c' \in \conds$.
  By \cref{def:indepp} we know that $c \ne c'$.
  As a result rules $\rulesigg$ and $\rulesiggl$ trivially commute because
  they work on different condition variables.

  Assume that $b' = \tup{\bro, c', W}$,
  for some $c' \in \conds$ and some $W \subseteq \Np$.
  By \cref{def:indepp} we know that $c \ne c'$.
  Similarly rules $\rulesigg$ and $\rulebro$ trivially commute because they
  work on different condition variables.

\item
  Assume that $b = \tup{\sigg, c, 0}$.
  If for some $t \in T$, $l' \in \locks$, $c' \in \conds$,
  and $k \ge 1$ we have
  $b' \in \set{\tup{\loc, t}, \tup{\acq, l'}, \tup{\rel, l'},
  \tup{\wa, c', l'}, \tup{\waa, c', l'}, \tup{\sigg, c', k}}$,
  then we have already shown that~$a$ commutes with~$a'$ at~$s$.

  Assume that $b' = \tup{\sigg, c', 0}$,
  for some $c' \in \conds$.
  Rule $\rulesiggl$ commutes with itself because it does not update the
  state of locks or condition variables.

  Assume that $b' = \tup{\bro, c', W}$,
  for some $c' \in \conds$ and some $W \subseteq \Np$.
  By \cref{def:indepp} we know that either $c \ne c'$ or $W = \emptyset$.
  Rule $\rulesiggl$ checks condition variable~$c$ and does not update the
  state, and rule $\rulebro$ does not check the state of condition
  variables but updates the state of~$c'$.
  If $c \ne c'$ they trivially commute because they work on different
  condition variables.
  So assume that $c = c'$, and so, $W = \emptyset$.
  We show~\cref{eq:indepcom1}.
  Firing~$a$ at~$s$ does not modify~$u$ or~$v$ (rule $\rulesiggl$),
  so rule $\rulebro$ is neither enabled nor disabled by~$a$.
  We show~\cref{eq:indepcom2}.
  If we can fire~$a'$ at~$s$, it must be because
  $\set{j \in v(c') \colon j > 0} = W = \emptyset$,
  and so $\set{j \in v(c) \colon j > 0} = \emptyset$.
  After firing rule~$\rulebro$ we still have
  $\set{j \in v(c) \colon j > 0} = \emptyset$.
  So firing~$\rulebro$ neither enables nor disables~$a$ at~$s$.
  We show~\cref{eq:indepcom3}.
  Assume that both~$a$ and~$a'$ are enabled at~$s$.
  Consequently
  $\set{j \in v(c) \colon j > 0} = \set{j \in v(c') \colon j > 0} = \emptyset$.
  While rule $\rulebro$ updates $v(c')$, the update is immaterial since $W$ is
  empty.
  So applying both rules in either order produces the same state.

\item
  Assume that $b = \tup{\bro, c, W}$ with $W \subseteq \Np$.
  If for some $t \in T$, $l' \in \locks$, $c' \in \conds$, and $k \in \N$,
  $b' \in \set{\tup{\loc, t}, \tup{\acq, l'}, \tup{\rel, l'},
  \tup{\wa, c', l'}, \tup{\waa, c', l'}, \tup{\sigg, c', k}}$,
  then we have already shown that~$a$ commutes with~$a'$ at~$s$.

  Assume that $b' = \tup{\bro, c', W'}$,
  for some $c' \in \conds$ and some $W' \subseteq \Np$.
  By \cref{def:indepp} we know that either $c \ne c'$ or $W = W' = \emptyset$.
  If $c \ne c'$ clearly rule $\rulebro$ commutes with itself because each
  application checks and updates a different condition variable.
  If $c = c'$, and so $W = W' = \emptyset$, rule $\rulebro$ also
  commutes with itself because
  $\set{j \in v(c) \colon j > 0} = \set{j \in v(c') \colon j > 0} = \emptyset$
  at~$s$, and so the update
  $v_{c \mapsto v(c)\setminus W \cup - W}$ is immaterial.

\end{itemize}
\end{proof}

\section{Unfolding Semantics}\label{app:unfolding}

We recall the program PES semantics of~\cite{SRDK17}
(modulo notation differences).
For a program $P$ and any independence~$\indep$ on~$M_P$ we define a
PES~$\unfpindep$
that represents the behavior of~$P$, \ie, such that
the interleavings of its set of configurations equals $\runs{M_P}$.

Each event in~$\unfpindep$ is inductively defined by a pair
of the form $e \eqdef \tup{a,H}$, where $a \in A$ is an action of~$M_P$
and~$H$ is a configuration of $\unfpindep$.
Intuitively, $e$ represents the occurrence of~$a$
after the \emph{causes} (or the history) $H$, as described
in~\cref{sec:algo.unfsem}.
Note the inductive nature of the name, and how it
allows to uniquely identify each event.
We define the \emph{state of a configuration} as the set of states reached by \emph{any}
of its interleavings.
Formally, for $C \in \conf{\unfpindep\>\!}$ we define
$\state C$ as $\set{s_0}$ if $C = \emptyset$ and
as $\state \sigma$ for some $\sigma \in \inter C$ if $C \ne \emptyset$.
Despite its appearance $\state C$ is well-defined because \emph{all} sequences
in $\inter C$ reach the \emph{same} set of states, see~\cite{SRDK17long} for a proof.
\begin{definition}[Unfolding]
\label[definition]{def:unfd}
Given a program~$P$ and some independence relation $\indep$
on $M_P \eqdef \tup{S, \to, A, s_0}$, the
\emph{unfolding of~$P$ under~$\indep$}, denoted $\unfpindep$,
is the PES over~$A$ constructed by the following fixpoint rules:
\begin{enumerate}[topsep=0pt]
\item
  Start with a PES $\les \eqdef \tup{E, <, {\cfl}, h}$
  equal to $\tup{\emptyset, \emptyset, \emptyset, \emptyset}$.
\item
  Add a new event $e \eqdef \tup{a,C}$ to~$E$ for any
  configuration $C \in \conf \les$ and any action $a \in A$ if
  $a$~is enabled in $\state C$ and
  $\lnot (a \indep h(e'))$ holds for every $<$-maximal event $e'$ in~$C$.
  \label{def:unfd.event}
\item
  For any new $e$ in $E$, update $<$, $\cfl$, and $h$ as follows:
for every $e' \in C$, set $e' < e$;
for any $e' \in E \setminus C$,
    set $e' \cfl e$
    if $e \ne e'$ and $\lnot (a \indep h(e'))$;
set $h(e) \eqdef a$.
\item
  Repeat steps 2 and 3 until no new event can be added to~$E$;
  return $\les$.
\end{enumerate}
\end{definition}
Step 1 creates an empty PES with only one (empty) configuration.
Step 2 inserts a new event $\tup{a,C}$ by finding a configuration~$C$ that
enables an action~$a$ which is dependent with all causality-maximal events
in~$C$.

After inserting an event~$e \eqdef \tup{a,C}$, \cref{def:unfd} declares
all events in~$C$ causal predecessors of~$e$.
For any event~$e'$ in $E$ but not in $[e]$ such that~$h(e')$ is dependent
with~$a$, the order of execution of~$e$ and~$e'$ yields different states.
We thus set them in conflict.

\begin{theorem}[Soundness and completeness \cite{SRDK17}]
For any program $P$ and any independence relation~$\indep$ on~$M_P$ we have:
\begin{enumerate}
\item $\unfpindep$ is uniquely defined.
\item For any configuration $C$ of $\unfpindep$ if
   $\sigma, \sigma' \in \inter C$ then $\state \sigma = \state{\sigma'}$.
\item Any interleaving of a local configuration of $\unfpindep$ is a run of~$M_P$.
\item For any run $\sigma$ of $M_P$ there is a configuration
   $C$ of $\unfpindep$ with $\sigma \in \inter C$.
\end{enumerate}
\end{theorem}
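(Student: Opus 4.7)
The plan is to tackle the four claims in order, since each builds on its predecessors. For (1), I would define the depth of an event $e = \tup{a, C}$ inductively by setting $\depth{e}$ to $1 + \max_{e' \in C} \depth{e'}$, with $\depth{e} = 0$ when $C$ is empty, and prove by induction on depth that the set of events of depth at most $n$ in the PES produced by \cref{def:unfd} is uniquely determined. Each step of the construction adds exactly those new events $\tup{a, C}$ whose history $C$ is a configuration of already-added events satisfying the enabledness and dependence side conditions; this is a functional specification over the previously built PES, so the fixpoint it converges to is unique. For (2), I would invoke the Mazurkiewicz trace property: any two linearizations of a finite partial order whose labels satisfy an independence relation are connected by a finite chain of swaps of adjacent independent actions. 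Since $\indep$ is valid on $M_P$, every such swap preserves the reached state by \cref{eq:com2}, so an induction on the number of swaps yields $\state \sigma = \state{\sigma'}$ for any $\sigma, \sigma' \in \inter{C}$.

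For (3), I would proceed by induction on $|[e]|$, after strengthening the statement to cover arbitrary configurations (not just local ones). The base case is trivial. For the inductive step, since $e$ is $<$-maximal in $[e]$, I can always choose an interleaving $\sigma^\star \in \inter{[e]}$ of the form $\tau \cdot h(e)$ with $\tau \in \inter{\causes{e}}$. Applying the strengthened IH to $\causes{e}$ shows that $\tau$ is a run, and by (2) it reaches $\state{\causes{e}}$. The construction requires $h(e)$ to be enabled at $\state{\causes{e}}$, so $\sigma^\star$ is also a run; then by (2) every other interleaving of $[e]$ reaches the same state and must itself be a run.

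For (4), I would induct on the length of $\sigma$. The empty run is witnessed by the empty configuration. For $\sigma = \sigma' \cdot a$, the IH gives a configuration $C'$ with $\sigma' \in \inter{C'}$. The candidate event to add is $e \eqdef \tup{a, D}$, where $D$ is the smallest causally closed subset of $C'$ containing every $<$-maximal event $e' \in C'$ with $\lnot (a \indep h(e'))$. The main obstacle will be showing that $e$ is a legitimate event of $\unfpindep$ and that $C \eqdef C' \cup \set{e}$ is a configuration whose interleavings include $\sigma$. This hinges on a commutation lemma stating that if $a$ is enabled after $\sigma'$ and every event in $C' \setminus D$ carries an action independent with $a$, then $a$ is already enabled at $\state{D}$, and the events in $C' \setminus D$ can be shifted past $a$ without altering the final state. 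This lemma follows from iterated use of \cref{eq:com1,eq:com2} to move the independent events of $C' \setminus D$ after $a$; once established it justifies introducing $\tup{a, D}$ as a new event via step (2) of \cref{def:unfd} and produces the desired interleaving of~$C$.
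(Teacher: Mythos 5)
First, a point of reference: the paper does not prove this theorem in-line at all — its ``proof'' is a set of pointers to Proposition~2, Theorem~1 and Theorem~2 of~\cite{SRDK17long} — so any self-contained argument is necessarily a different route from the paper's. Your outline for items (1) and (2) is the standard and correct one: depth-stratified induction makes the fixpoint of \cref{def:unfd} order-independent, and since two events of a configuration that are causally unordered cannot be in conflict, their labels are independent, so the Mazurkiewicz adjacent-swap argument combined with \cref{eq:com1,eq:com2} gives state-invariance of interleavings. For item (3) there is a small but real mismatch: you announce a strengthening to arbitrary configurations (which is indeed necessary, since $\causes e$ is generally not a local configuration), but the inductive step you write only works for local ones. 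For a non-local configuration $C$ with $<$-maximal event $e$, the set $C \setminus \set e$ strictly contains $\causes e$, so the enabledness of $h(e)$ at $\state{C \setminus \set e}$ is \emph{not} given by the construction; you must first argue that the events of $(C \setminus \set e) \setminus \causes e$ are all independent of $h(e)$ and then shift them past $h(e)$ using \cref{eq:com1} — the very commutation lemma you state for item (4).

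The genuine gap is the history $D$ in item (4). As written — the causal closure of the events that are $<$-maximal \emph{in $C'$} and dependent with $a$ — $D$ can omit events of $C'$ that are dependent with $a$ but are neither maximal in $C'$ nor causes of a maximal dependent event. Concretely, let thread~1 execute $\tup{\acq,l}$, $\tup{\rel,l}$ and then a local statement, so that $C' = \set{e_1, e_2, e_3}$ is totally ordered, and let $a \eqdef \tup{2, \tup{\acq, l}}$, which is enabled after $\sigma'$. The unique $<$-maximal event of $C'$ is $e_3$, whose local action is independent of~$a$, so your $D$ is empty; yet $e_1$ and $e_2$ are dependent with $a$ and causally unordered with $\tup{a,\emptyset}$, so $\tup{a,\emptyset}$ is in conflict with both and $C' \cup \set{\tup{a,\emptyset}}$ is not a configuration — the induction cannot close. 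The correct choice is $D \eqdef \bigcup \set{[e'] \colon e' \in C' \land \lnot(a \indep h(e'))}$, the causal closure of \emph{all} events of $C'$ dependent with~$a$: its $<$-maximal elements are then all dependent with $a$ (so $\tup{a,D}$ is admitted by step~2 of \cref{def:unfd} once your commutation lemma shows $a$ is enabled at $\state D$), and every event of $C' \setminus D$ is independent of $a$ (so no conflict arises and $\sigma' \cdot a$ is an interleaving of $C' \cup \set{\tup{a,D}}$). If you in fact meant ``maximal among the events dependent with $a$'' the two definitions coincide, but the sentence as written does not say that, and under its natural reading the step fails.
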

\begin{proof}
  Item 1: see Proposition 2 in~\cite{SRDK17long}.
  Items 2 and 3: see items 1 and 3, respectively, of Theorem 1 in~\cite{SRDK17long}.
  Item 4: see Theorem 2 in~\cite{SRDK17long}.
\end{proof}

\section{Computing Conflicting Extensions}\label{app:cex}

Given a configuration~$C$ of~$\unfpindep$, procedure \cexmain in
\cref{a:cex.main} computes the conflicting extensions~$\cex C$ of~$C$.

\begin{algorithm*}[t]
\DontPrintSemicolon

\Fn{\cexmain{C}}{
   $R \eqdef \emptyset$ \;
   Add to~$C$ any event $e \in \en C$ such that $\iscutoff{e}$ \;
   \label{l:cex.main.cutoff}
   \lForEach {event $e \in C$ of $\loc$ effect}{$R \eqdef R \cup \cexlocal{$e$}$}
   \label{l:cex.main.local}
   \lForEach {event $e \in C$ of $\acq$ or $\waa$ effect}{$R \eqdef R \cup \cexacquire{$e$}$}
   \label{l:cex.main.acquire}
   \lForEach {event $e \in C$ of $\wa$ effect}{$R \eqdef R \cup \cexwait{$e$}$}
   \label{l:cex.main.wait}
   \lForEach {event $e \in C$ of $\sigg$ or $\bro$ effect}{$R \eqdef R \cup \cexnotify{$C, e$}$}
   \label{l:cex.main.notify}
   \KwRet $R$ \;
   \label{l:cex.main.return}
}

\caption{Computing conflicting extensions: main algorithm}\label{a:cex.main}
\end{algorithm*}

The algorithm works by selecting some event~$e$ from~$C$ and trying to find
one or more events that are similar to $e$ and in conflict with $C$.
Except for events of $\loc$ effect, we try to \emph{reorder} an event regarding
those of its immediate causal predecessors that synchronize with other threads.
We call an event $e'$ an \emph{immediate causal predecessor} of $e$ iff
$e' < e$ and there is no other event $e'' \in [e]$ such that $e' < e'' < e$.

For events dealing with locks or condition variables, we aim to find different
possibilities for an action to synchronize with events on other threads.
For any event of $\acq$, $\wa$ or $\waa$ effect, we try to execute its action
\emph{earlier} than its causal predecessors by including only some of the
original immediate causes in any conflicting extension.
For any event $e$ of $\bro$ or $\sigg$ effect, however, we try to execute its
action \emph{before or after} certain other events in~$C$ that are concurrent
to~$e$.
An event of $\rel$ effect does not itself introduce a new synchronization with
another thread (compared to its sole immediate causal predecessor), as it has
to occur on the same thread that previously held the lock to be released. Thus,
we cannot create any reorderings for this kind of event.

Although similar to events of $\rel$ effect \wrt to lacking immediate causal
predecessors on other threads, we still have to compute conflicting extensions
for events of $\loc$ effect. Instead of reordering, we change the effect
of any event that represents branching on symbolic values.

Since the effect of an event determines how its conflicting extensions are to
be computed, \cexmain internally relays this task to corresponding functions
in \cref{a:cex.local,a:cex.acq,a:cex.w1,a:cex.notify}.
These functions share a similar structure, each
of them receives an event~$e$ (\cexnotify additionally takes the containing
(maximal) configuration $C$ as a parameter) and returns a set of conflicting
extensions, stored in variable~$R$ during computation. The union of these sets
will be the result of \cexmain which is returned in \cref{l:cex.main.return}.
Since our main algorithm, calling this function, deals with cutoff-free maximal
configurations, we add enabled cutoff events to $C$ before calling any of the
effect-specific algorithms (\cref{l:cex.main.cutoff}).

We have proofs that our algorithms compute exactly the set of conflicting
extensions of a given configuration, but unfortunately we did not have time to
transcribe them here.

\subsection{\texorpdfstring{Events of $\loc$ Effect}{Events of loc Effect}}

In our approach, branches on symbolic values correspond to two (or more)
distinct LTS actions of $\loc$ effect that differ in the thread statements they
represent. In \cref{fig:3}, for example, we branch over \texttt{x}, which
contains a symbolic value after the execution of \texttt{x~=~in()} (event 1).
Event 3 corresponds to the then- and event 9 to the else-branch that results
from executing \texttt{if(x~<~0)} after events 1 and 2.
Thus, both events share the same set of causes:
$\causes{3} = \causes{9} = \set{1, 2}$.
This can also be seen in \cref{fig:algo.unf.a}, along with the fact that these
events are in (immediate) conflict with each other.
Accordingly, a call to \cexlocal on event 3 will return a singleton set
containing event 9 (and vice versa).

In both figures, we can also see another pair of events resulting from the same
branching statement in the program, events 20 and 22.
However, these events follow a common set of causes that is different from that
of 3 and 9.

\begin{algorithm*}[t]
\DontPrintSemicolon

\Fn{\cexlocal{e}}{
   Assume that $e$ is $\tup{\tup{i, \tup{\loc, t}}, K}$ \;
   \label{l:cex.local.assume}
   $R \eqdef \emptyset$ \;
   \ForEach {statement  $t' \in T_i \setminus \set t$}
   {
      \label{l:cex.local.select1}
      $a \eqdef \tup{i,\tup{\loc, t'}}$ \;
      \If{$a$ is enabled at $\state K$}
      {
         \label{l:cex.local.select2}
         $R \eqdef R \cup \set{\tup{a, K}}$
         \label{l:cex.local.add}
      }
   }
   \KwRet $R$ \;
   \label{l:cex.local.return}
}

\caption{Computing conflicting extensions: $\loc$ events}\label{a:cex.local}
\end{algorithm*}

Function \cexlocal (\cref{a:cex.local}) receives an event~$e$ of $\loc$ effect
representing the execution of thread statement $t$ on thread $i$
(\cref{l:cex.local.assume}).
In \cref{l:cex.local.select1}, we iterate over alternative thread statements
from $T_i$, the set of thread statements restricted to thread $i$.
For each such thread statement $t_i \neq t$, we test whether the corresponding
action $a$ is enabled in the set of states reached by $K = \causes{e}$
(\cref{l:cex.local.select2}).
For any action that is enabled, we add an event that executes $a$ after the
causes $K$ to $R$ (\cref{l:cex.local.add}).
Finally, \cexlocal returns the set of conflicting extensions
(\cref{l:cex.local.return}).

\subsection{\texorpdfstring{Events of $\acq$ or $\waa$ Effect}{Events of acq or w2 Effect}}

The function \cexacquire in \cref{a:cex.acq} is identical to the one presented in
\cref{sec:algo.cex} for computing conflicting extensions for events of $\acq$
or $\waa$ effect.
Due to space constraints, however, we did not yet describe its function in
adequate detail.

\begin{algorithm*}[t]
\DontPrintSemicolon

\setlength{\columnsep}{2pt}

\Fn{\cexacquire{e}}{
   Assume that $e$ is $\tup{\tup{i, \tup{\acq, l}}, K}$ or
   $\tup{\tup{i, \tup{\waa, c,l}}, K}$ \;
   \label{l:cex.acquire.assume}
   $R \eqdef \emptyset$ \;
   $e_t \eqdef \lastof{$K, i$}$ \;
   \label{l:cex.acquire.et}
   \eIf {$\effect e = \tup{\acq, l}$}
   {
      $P \eqdef [e_t]$
      \label{l:cex.acquire.P.acq}
   }
   {
      $e_s \eqdef \lastnotify{$e, c, i$}$ \;
      $P \eqdef [e_t] \cup [e_s]$ \;
      \label{l:cex.acquire.P.waa}
   }
   $e_m \eqdef \lastlock{$P, l$}$ \;
   \label{l:cex.acquire.em}
   $e_r \eqdef \lastlock{$K, l$}$ \;
   \label{l:cex.acquire.er}

   \lIf {$e_m = e_r$}{\KwRet $R$}
   \label{l:cex.acquire.early}

   \uIf {$e_m = \bot \lor
      \effect{e_m} \in \set{\tup{\rel, l}, \tup{\wa, \cdot, l}}$}
   {
      \label{l:cex.acquire.check1}
      Add $\tup{h(e), P}$ to $R$ \;
      \label{l:cex.acquire.add1}
   }

   \ForEach {event $e' \in K \setminus (P \cup \set{e_r})$}
   {
      \label{l:cex.acquire.loop}
      \uIf{$\effect{e'} \in \set{\tup{\rel, l}, \tup{\wa, \cdot, l}}$}
      {
         \label{l:cex.acquire.check2}
         Add $\tup{h(e), P \cup [e']}$ to $R$ \;
         \label{l:cex.acquire.add2}
      }
   }
   \KwRet $R$ \;
   \label{l:cex.acquire.return}
}

\caption{Computing conflicting extensions: $\acq$ and $\waa$ events}\label{a:cex.acq}
\end{algorithm*}

To find conflicting extensions, the algorithm systematically constructs sets of
causes that include less than those of a given event of $\acq$ or $\waa$ effect
in order to execute the action of such an event \emph{earlier} \wrt its causes.
To explain how to do this, we first take a look at structural commonalities in
the set of causes for any event of $\acq$ or $\waa$ effect. Along the way, we
will also introduce some utility functions (see \cref{a:utility}) used in the
algorithm.

One event that we can always distinguish from any other in the set we will call
the \emph{thread predecessor}, which is the only event immediately preceding
the event at hand on the same thread. Our utility function \lastof{$C,i$}
(\cref{a:utility}) returns the only $<$-maximal event of thread~$i$ in a
configuration~$C$, so to retrieve the thread predecessor of a given event $e$
on thread $t$ we can call \lastof{$K,t$}.
As we are dealing with lock operations here, another important event is called
\emph{lock predecessor}. It can be retrieved by calling \lastlock{$C,l$}
(\cref{a:utility}) which returns the only $<$-maximal event that manipulates
lock~$l$ in~$C$ (an event of effect $\acq$, $\rel$, $\wa$ or $\waa$), or $\bot$
if no such event exists.
In the case of events of $\waa$ effect, we can also distinguish another event
from the rest of the causes: the \emph{notification predecessor}, which is an
event of $\sigg$ or $\bro$ effect immediately preceding a given $\waa$ event.
If $e$ is an event of $\waa$ effect, executed on thread $i$ and relating to
condition variable $c$, we can retrieve its notification predecessor using
\lastnotify{$e, c, i$} (\cref{a:utility}) which returns the only immediate
$<$-predecessor $e'$ of $e$ such that the effect of $h(e')$ is either
$\tup{\sigg,c,i}$ or $\tup{\bro,c,S}$ with $i \in S$.

\begin{algorithm*}[t]
\DontPrintSemicolon

\Fn{\lastof{C, j}}{
   \lIf{$j = 0$}{\KwRet $\bot$}
   \lIf{$C$ has no events from thread~$j$}{\KwRet $\bot$}
   \KwRet the $<$-maximal event of thread~$j$ in~$C$ \;
}

\BlankLine
\Fn{\lastlock{C, l}}{
	\KwAssert $C$ is conflict-free \;
   $S \eqdef $
   all events $e \in C$ \st $\effect{e}$ is
   $\tup{\acq, l}, \tup{\rel, l}, \tup{\wa, \cdot, l}$
   or
   $\tup{\waa, \cdot, l}$ \;
   \lIf{$S = \emptyset$}{\KwRet $\bot$}
   \KwRet the $<$-maximal event in~$S$
}

\BlankLine
\Fn{\lastnotify{e, c, i}}{
   \KwAssert $\effect{e}$ is $\tup{\waa, c, \cdot}$ \;
   \KwRet the only immediate $<$-predecessor $e'$ of $e$ such that
      $\effect{e'}$ is either $\tup{\sigg,c,i}$ or $\tup{\bro,c,S}$ with $i \in S$ \;
}

\BlankLine
\Fn{\concurrent{S}}{
   \KwAssert $S$ is conflict-free \;
   \KwRet \True iff for all $e_1$, $e_2 \in S$:
      not ($e_1 < e_2$) and not ($e_2 < e_1$) \;
}

\Fn{\outstandingw{C, c}}{

   $W \eqdef \emptyset$ \;
   \ForEach {thread id $k$ in the program}
   {
      $X \eqdef \set \bot$ \;
      $X \eqdef X \cup
         \set{e \in C \colon \tid e = k}$ \;
      $X \eqdef X \cup
         \set{e \in C \colon \effect e = \tup{\sigg, c, k}}$ \;
      $X \eqdef X \cup
         \set{e \in C \colon \effect e = \tup{\bro, c, S} \land k \in S}$ \;
      $e \eqdef \max_{<}(X)$ \;
      \lIf{$\effect e = \tup{\wa, c, \cdot}$}{Add $e$ to $W$}
   }
   \KwRet $W$ \;
}

\caption{Auxiliary functions.}\label{a:utility}
\end{algorithm*}

Conceptually, the algorithm tries to find lock releases (events of either
$\rel$ or $\wa$ effect) in the past, \ie the set of causes, that a potential
conflicting extension could use as its lock predecessor.

Function \cexacquire receives an event~$e$ of $\acq$ or $\waa$ effect
(\cref{l:cex.acquire.assume}).
Central to its mode of operation is a subset of $e$'s causal predecessors,
stored in a variable called $P$.
This set is (included in) the set of causes for any conflicting extension added
to $R$ (\cref{l:cex.acquire.add1,l:cex.acquire.add2}) and how it is constructed
poses the major difference between handling $\acq$ (\cref{l:cex.acquire.P.acq})
and $\waa$ (\cref{l:cex.acquire.P.waa}) effects.

For constructing $P$, the algorithm determines the thread predecessor, which is
stored in a variable called $e_t$ (\cref{l:cex.acquire.et}) and then
determines how to build $P$, based on the effect of $e$.
In case $e = \tup{\tup{i, \tup{\acq, l}}, K}$, it assigns $P$ the local
configuration of $e_t$ (\cref{l:cex.acquire.P.acq}).
This partitions the causes of $e$ such that $K \setminus P$ contains exactly
all events that were synchronized by the action of $e$, namely the locking of
$l$ following the execution of $e_t$.
In the other case, $e = \tup{\tup{i, \tup{\waa, c,l}}, K}$, the set $P$ also
includes the local configuration of $e_s$, the notification predecessor, along
with $[e_t]$ (\cref{l:cex.acquire.P.waa}).

We include $e_s$ into the set of causal predecessors (which will contain $P$
for any conflicting extension), because the notification predecessor is a
unique event waking up exactly $e_t$ (which will be of $\wa$ effect iff
$e$ is of $\waa$ effect).
Since we set out not to modify any event in $[e_t]$, there cannot be any
conflicting extension that excludes $e_s$ from its set of causes.
Thus, for the following steps, we can essentially treat any event of $\waa$
effect as if it were one of $\acq$ effect, since we only have to concern
ourselves with the reacquisition of the lock done as part of its action.
The only exception to this is making sure that conflicting extensions for $e$
always executes the same action $h(e)$
(\cref{l:cex.acquire.add1,l:cex.acquire.add2}).

In addition to $P$, the algorithm determines two lock events: $e_r$, which is
the lock predecessor of $e$ (\cref{l:cex.acquire.er}), and $e_m$, which is the
$<$-maximal lock event in $P$ (\cref{l:cex.acquire.em}). For the former, we
know that it has to be a lock release (an event of either $\rel$ or $\wa$
effect), while $e_m$ might not exist ($\bot$) or be any event of $\acq$,
$\rel$, $\wa$ or $\waa$ effect.
If $e_m = e_r$, then \cexacquire can abort the search and return an empty set
(\cref{l:cex.acquire.early}), since we know that outside of $P$, there exists no
further relevant lock event.

Otherwise, the algorithm determines whether the lock can be acquired from $e_m$
(\cref{l:cex.acquire.check1}). This is true when $e_m$ is either of $\rel$ or
$\wa$ effect, but also when $e_m = \bot$. In the latter case, there is no prior
acquisition on the same lock in $P$, constituting the need for a release.
In both cases, the lock can be acquired from $e_m$, so the algorithm adds a
conflicting extension to $R$ that only uses $P$ as its causes
(\cref{l:cex.acquire.add1}).
After handling special cases involving $e_m$, the algorithm searches for other
possible lock predecessors in $K$, outside of $P$ and excluding $e_r$
(\cref{l:cex.acquire.loop}).
For any event $e'$, releasing the relevant lock (\cref{l:cex.acquire.check2}),
we add a conflicting extension to $R$ (\cref{l:cex.acquire.add2}).
Note, that we include the local configuration of $e'$ to the set of causes
since $e'$ might have causes outside of $P$.
We exclude $e_r$ in \cref{l:cex.acquire.loop} as otherwise we would (at least)
add $e$ to $R$, which is (trivially) not in conflict with $[e]$.
Finally, \cexacquire returns $R$ (\cref{l:cex.acquire.return}).

There is one situation not covered by the algorithm as we just described it,
which are configurations caught in a deadlock while trying to acquire a lock.
That is, in an LTS state reached by some configuration $C$, the only available
action for some thread is not enabled and this action is either of $\acq$ or
$\waa$ effect.
Even though there cannot be an event $e$ resulting from the execution of such
an action in $C$, there might be a viable event (which is in fact a conflicting
extension to $C$) that can be found akin to \cexacquire.
To find such events, we take the following steps for each deadlocked thread:
We determine $e_m$ and $P$ like before, albeit modified to account for the lack
of $e$. Afterwards, we do a search within the set $(C \setminus P) \cup e_m$,
analogously to \crefrange{l:cex.acquire.loop}{l:cex.acquire.add2}.

\subsection{\texorpdfstring{Events of $\wa$ Effect}{Events of w1 Effect}}

When concerned with computing conflicting extensions for events of $\waa$
effect, we briefly discussed the interplay of such events with those of $\wa$
effect. We concluded that any reordering not caused by $\waa$'s interaction
with locks hinges on the related event of $\wa$ effect.
In return, we do not have to consider the lock-releasing part of the $\wait$
operation for $\wa$ events.

Condition variables, in contrast to locks which can only ever be manipulated
by one event at a time, allow for concurrent operations.
To account for this, we introduce another auxiliary function.
Given a set $S \subseteq E$ of conflict-free events,
a call to \concurrent{$S$} (\cref{a:utility}) returns \texttt{true} iff there
is no causal dependence between any two events in $S$.

\begin{algorithm*}[t]
\DontPrintSemicolon

\Fn{\cexwait{e}}{
   Assume that $e$ is $\tup{\tup{i, \tup{\wa, c, l}}, K}$ \;
   \label{l:cex.wait.assume}
   $R \eqdef \emptyset$ \;
   $e_t \eqdef \lastof{$K, i$}$ \;
   \label{l:cex.wait.et}
   $X \eqdef K \setminus [e_t]$ \;
   \label{l:cex.wait.X}
   Let $X'$ be the subset of~$X$ that contains all events $e'$ such
   that: $\effect{e'}$ is
   $\tup{\sigg, c, 0}$ or
   $\tup{\bro, c, \emptyset}$ or
   $\tup{\bro, c, S}$ with $i \not\in S$. \;
   \label{l:cex.wait.Xp}
   \ForEach {set $M \subseteq X'$ such that \concurrent{M} and $[M] \cup [e_t] \ne K$}
   {
     \label{l:cex.wait.loop}
     Add $\tup{h(e), [M] \cup [e_t]}$ to $R$ \;
     \label{l:cex.wait.add}
   }
   \KwRet $R$ \;
   \label{l:cex.wait.return}
}

\caption{Computing conflicting extensions: $\wa$ events}\label{a:cex.w1}
\end{algorithm*}
Function \cexwait (\cref{a:cex.w1}) receives an event~$e$ of $\wa$ effect
(\cref{l:cex.wait.assume}) and determines its thread predecessor $e_t$
(\cref{l:cex.wait.et}). The local configuration of $e_t$ is subtracted from the
set of $e$'s causes (\cref{l:cex.wait.X}) and the resulting set is filtered
such that only some specific events of $\sigg$ and $\bro$ effect remain
(\cref{l:cex.wait.Xp}).

The events that make up this filtered set, $X'$, are certain events from
$\causes e$ which are
lost signals, or
events of $\bro$ effect that did not notify $e$'s thread (including lost
broadcasts).
Specifically, each selected event corresponds to the occurrence of an action
that can notify an event of $\wa$
effect which has $e_t$ as thread predecessor.
Thus, the underlying operation from any event in $X'$ would have resulted in a
successful notification, had it happened right after, and not (as is the case)
before $e$'s action.
This is because of how the involved actions depend on each other (see
\cref{app:indep.programs}). For instance, a lost signal or broadcast from $X'$
can only happen before (causally preceding) $e$, and no event representing the
same action can happen before an event of $\sigg$ or $\bro$ effect includes
$e$ in its set of causes (indicating a successful notification).

In \cref{l:cex.wait.add}, conflicting extensions are added to $R$, each
corresponding to some set $M \subseteq X'$, \ie with
$\causes{M} \cup \causes{e_t}$ as causes.
Each $M$ is chosen such that its elements are not causally related (using
\concurrent) and the resulting event does not equal $e$
(\cref{l:cex.wait.loop}).
As each conflicting extension excludes at least one event from $X'$ from
its causes (compared to $e$), an event of $\waa$ effect following it might
receive its notification earlier (\ie with fewer events in its local
configuration) than one following $e$.
Once all suitable subsets were found, \cexwait returns $R$
(\cref{l:cex.wait.return}).

\begin{lemma}
\Cref{l:cex.main.wait} of a \cexmain{C} function call (\cref{a:cex.main})
produces exactly all conflicting extensions $e'$ of a configuration $C$ such
that $e' = \tup{\tup{\cdot, b}, \cdot}$ and $b = \tup{\wa, \cdot, \cdot}$ if
$C$ is maximal.
\end{lemma}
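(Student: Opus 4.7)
The plan is to prove both directions of ``exactly'': \emph{soundness}, every event produced at Line~\ref{l:cex.main.wait} lies in $\cex C$, and \emph{completeness}, every $\wa$-effect conflicting extension of $C$ is produced there.

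For soundness I will fix $e \in C$ of $\wa$-effect and any $e' = \tup{h(e), [M]\cup[e_t]}$ added inside \cexwait{$e$} for some $M \subseteq X'$ with \concurrent{$M$} and $[M]\cup[e_t]\ne K$. First I will check $\causes{e'} \subseteq C$ using $M \subseteq X' \subseteq K = \causes{e} \subseteq C$ together with causal closedness of $C$. Next I will argue $e' \notin C$ by thread-determinism on thread~$i$: any event of $C$ labeled $h(e)$ whose thread-predecessor is $e_t$ must coincide with $e$ (the canonical name is uniquely determined by $(h(e), [e_t])$ extended along the single thread-$i$ chain), and since $[M]\cup[e_t] \ne K$ we have $e' \ne e$, hence $e' \notin C$. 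Finally I will produce a witness in $K \setminus \causes{e'}$ that conflicts with $e'$: the guard ``$\ne K$'' leaves at least one event dropped from $K$, and by the $\wa$-row of \cref{tab:indepp} any such $\sigg/\bro$ event on $c$ or $\acq/\rel/\wa/\waa$ event on $l$ is dependent with $\tup{\wa,c,l}$, giving $e' \cfl e''$ for a suitable $e'' \in C$ and hence $e' \in \cex C$.

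For completeness I will take an arbitrary $e' = \tup{\tup{i,\tup{\wa,c,l}}, K'} \in \cex C$ and identify the $e \in C$ whose \cexwait invocation produces $e'$. Let $e_t' \eqdef \lastof{$K',i$}$. Because $\tup{\wa,c,l}$ is the unique next action of thread~$i$ following $e_t'$ (by well-formedness and determinism of thread~$i$) and $C$ is maximal with $e_t' \in K' \subseteq C$, thread~$i$ must have executed this action in $C$, yielding a unique $e \in C$ with $h(e) = h(e')$ and $\lastof{$\causes{e}, i$} = e_t'$. Setting $e_t \eqdef e_t'$ and $M \eqdef K' \setminus [e_t]$, the heart of the argument is $M \subseteq X'$: every $e'' \in M$ lies off thread~$i$ and is a causal predecessor of $e'$, and combining the $\wa$-dependence rules in \cref{tab:indepp} with $e' \cfl e$ forces $\effect{e''}$ to be one of the three lost-notification effects defining $X'$. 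Then \concurrent{$M$} follows from $M \subseteq K'$ after removing the $[e_t]$-chain, and $[M]\cup[e_t] = K' \ne K$ because $e' \ne e$.

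The main obstacle will be the containment $K' \setminus [e_t] \subseteq X'$ in the completeness step. This needs (a) a tight characterization, through the unfolding construction of \cref{def:unfd} and the dependence structure, of which off-thread events can legitimately appear in the causes of a $\wa$-event, and (b) an argument that these events already live in $K = \causes{e}$ rather than somewhere else in $C$. I expect to carry this out by a case split on $\effect{e''}$ for $e'' \in M$, leveraging that $e$ and $e'$ share the action $\tup{\wa,c,l}$ and the thread-predecessor chain $[e_t]$, so the ``missing'' events in $K \setminus K'$ are precisely those that would have forced the $\wa$ to depend on them, which is exactly the filter defining $X'$.
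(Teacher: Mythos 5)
Your proposal has the right two-directional skeleton, and its completeness half follows broadly the same strategy as the paper's (locate the event $e\in C$ with $h(e)=h(e')$, then exhibit the subset $M$ that makes the loop in \cexwait{$e$} emit $e'$), but there are genuine gaps. The larger one is in the soundness direction: you never verify that the pair $e'=\tup{h(e),[M]\cup[e_t]}$ is an \emph{event} of $\unfpindep$ at all. By \cref{def:unfd} this requires showing that $[M]\cup[e_t]$ is a configuration, that $h(e)$ is \emph{enabled} at $\state{[M]\cup[e_t]}$, and that $h(e)$ is dependent on every $<$-maximal event of $[M]\cup[e_t]$. The enabledness check is where the real work lies --- one must argue that $u(l)=i$ and $p(i)=n$ still hold after executing only $[M]\cup[e_t]$, which the paper does by observing that $X$ contains no thread-$i$ events and that no off-thread event can modify the state of a mutex held by thread~$i$ --- and without it the claim ``$e'\in\cex C$'' is vacuous. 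Relatedly, your conflict witness is shakier than it needs to be: an arbitrary event dropped from $K$ need not have one of the effects you list (it can be a $\loc$ event of another thread that merely precedes a dropped signal), so you would have to pick a $<$-maximal dropped event of $K$; the paper instead shows $e'\cfl e$ directly, using same-thread dependence of $h(e)$ and $h(e')$, which is cleaner.

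The second gap is in completeness: setting $M\eqdef K'\setminus[e_t]$ cannot work, because that set is in general neither concurrent nor contained in $X'$ (it contains non-maximal causes of arbitrary effect), so the loop at \cref{l:cex.wait.loop} would never select it; you need $M\eqdef\max_{<}(K'\setminus[e_t])$, and you must additionally rule out that such a maximal cause is a $\tup{\sigg,c,i}$ or a $\tup{\bro,c,S}$ with $i\in S$ --- these are dependent on $\tup{\wa,c,l}$ yet deliberately excluded from $X'$, and the reason they cannot occur as maximal causes is that their matching $\waa$ of thread~$i$ precedes $e_t$, forcing them into $[e_t]$. Your shortcut for producing $e$ (``$C$ is maximal and the $\wa$ is the unique next action of thread~$i$, hence $C$ contains it'') is in the right spirit but compresses the paper's inductive argument that $h(e')$ remains enabled along an interleaving of $C\setminus\future{S}$; to make it rigorous you still need the case analysis showing that no event of another thread can disable a $\wa$ whose mutex is held by thread~$i$, followed by the peeling argument and the uniqueness lemma (\cref{lemma:p14}) that pin down $e$ as an element of $C$ rather than merely an enabled extension.
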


\begin{proof}
As \cexmain{C} computes conflicting extensions that have \wa{} effect using a
loop, calling \cexwait (\cref{a:cex.w1}) on all events of \wa{} effect in $C$,
proving this lemma reduces to proving that the following assertion is valid:
\[\bigcup_{\mathclap{e \in C_{\wa}}}{\cexwait{e}} =
\set{e' \in \cex{C} \colon e' \text{ has } \wa \text{ effect}}\]
where $C_{\wa} \eqdef \set{e \in C \colon e \text{ has } \wa \text{ effect}}$.
We do this by first showing that every event that is returned by
\cexwait{e} for any event $e \in C$ of \wa{} effect is a conflicting extension
of $C$ of the same effect (left-to-right below).
In a second step, we will show that every \wa{} event that is contained in
$\cex{C}$ will eventually be computed by some call to \cexwait in the loop
(right-to-left below).
In the following, let $C$ be a maximal configuration and $e \in C$ an event of
the form $e \eqdef \tup{\tup{i, \tup{\wa, c, l}}, K}$
(corresponding to \cref{l:cex.wait.assume}).

\textbf{Left-to-right.}
We show that any event $e'$ returned in \cref{l:cex.wait.return} of \cexwait{e}
satisfies $e' \in \cex{C}$ with $h(e')$ being
$\tup{\cdot, \tup{\wa, \cdot, \cdot}}$.
Any event $e'$ returned by the algorithm is constructed in
\cref{l:cex.wait.add} firing action $h(e)$ after the causes $[M] \cup [e_t]$.
We therefore need to prove the following:
\begin{itemize}
\item
   The pair $e' \eqdef \tup{h(e), [M] \cup [e_t]}$ is an event, and\eqtag{eq:is.event}
\item
   $h(e')$ is of the form $\tup{\cdot, \tup{\wa, \cdot, \cdot}}$.\eqtag{eq:form.w1}
\end{itemize}
Note that~\cref{eq:form.w1} above is true simply because \cexmain calls \cexwait
only for \wa{} events (\cref{l:cex.main.wait} of \cref{a:cex.main}) and
\cexwait (\cref{a:cex.w1}) only returns events of the same effect.
In the sequel we focus on~\cref{eq:is.event}.
To show that $e'$ is an event, by~\cref{def:unfd.event} of~\cref{def:unfd},
we need to show that:
\begin{enumerate}
\item Its causes $\causes{e'}$ are a configuration,
\item its action $h(e')$ is enabled at $\state{\causes{e'}}$, and
\item $\lnot(h(e') \indep h(e''))$ holds for every $<$-maximal event $e''$ in $\causes{e'}$.
\end{enumerate}
In the sequel we prove each one of these three items:
\begin{enumerate}
\item We show that $\causes{e'} = [M] \cup [e_t]$ is a configuration.
Because $e$ is an event, we know that $K$ is a configuration.
As such, none of its events are in conflict with each other.
Since both $e_t$ and $M$ are taken from $K$ (\cref{l:cex.wait.et},
\crefrange{l:cex.wait.X}{l:cex.wait.loop}), we see that $[M] \cup [e_t]$ is
conflict-free.
Forming the union of two local configurations moreover implies that the result
is causally closed.
Together, this means that $\causes{e'} \eqdef [M] \cup [e_t]$
is a configuration.

\item
We now show that action $h(e')$ is enabled at
$s' \eqdef \state{[M] \cup [e_t]}$.
As $h(e') = h(e)$, the action is clearly enabled at $s \eqdef \state{K}$.
To show that it is also enabled at $s'$, we prove that rule
\rulewa{} (of \cref{fig:rules}) can be applied to state~$s'$:
\begin{itemize}
\item $\tup{i, n, n', \tup{\wait, c,l}, r} \in T$:
As $e'$ executes the same action as $e$, we know that there exists such a
thread statement in $T$, giving rise to $h(e)$.
This thread statement executes the operation $\tup{\wait, c, l}$ on thread $i$,
while this thread is at program location $n$.
\item $p(i) = n$ holds at state $s'$:
Observe that $M$ is a subset of $X'$, and so, a subset of $X$.
As a result it only contains events of threads other than~$i$
(because all events from thread~$i$ are causal predecessors of~$e_t$,
and $X$ is constructed to avoid those events, in \cref{l:cex.wait.X}).
Furthermore, as we include~$[e_t]$, we have that $e_t$ is not only included in
$\causes{e'}$, but also the maximal event on thread $i$.
Because $e_t$ is the $<$-maximal event on thread $i$ in both $K$ and
$\causes{e'}$, both $s$ and $s'$ are at the same program location~$n$ for
thread~$i$ (also corresponding to the above thread statement).
\item $u(l) = i$ holds at state $s'$:
Since $h(e)$ is enabled at $s$, we know that $u(l) = i$ at $s$.
This means that there must be some event in~$K$ that acquires lock~$l$ for
thread $i$. Let this event be known as~$e_m$.
Note that~$e_m$ must be an event of thread~$i$.
This is because, according to the semantic rules in~\cref{fig:rules},
the only way $u(l)$ can be set to~$i$ is by having
$h(e_m) = \tup{i, \tup{\acq, l}}$ or
$h(e_m) = \tup{i, \tup{\waa, c, l}}$,
for some condition variable~$c \in \conds$.
Consequently, $e_m \in [e_t]$, because all events of thread~$i$ in~$K$ are
in $[e_t]$, as we already justified above.
Now, note that only events of effect
$\tup{\acq, l}$, $\tup{\rel, l}$, $\tup{\wa, \cdot, l}$ or $\tup{\waa, \cdot, l}$
can modify $u(l)$.
Since $X$ does not contain any event of thread $i$ (\cref{l:cex.wait.X}),
we have that
$X$ cannot contain any event of those effects.
This is because if~$X$ contained an event of thread~$j \ne i$ of such effect,
that event would be a $<$-successor of $e_m$, and it would set
$u(l) = 0$ or $u(l) = j$, and no event in~$X$ could make~$u(l) = i$ again,
because $X$ does not contain events from thread $i$.
Consequently, state $s'$ makes $u(l) = i$ true.
\end{itemize}

\item
Finally, we show that $\lnot(h(e') \indep h(e''))$ holds for every $<$-maximal
event $e''$ in $\causes{e'}$.
The causes of $e'$ are defined as $[M] \cup [e_t]$ (\cref{l:cex.wait.add}).
Consequently $\max_{<}(\causes{e'}) \subseteq M \cup \set{e_t}$.
We now show that all events in $M \cup \set{e_t}$ are dependent on~$h(e')$,
which implies what we want to prove.
By construction~$h(e_t)$ and~$h(e') = h(e)$ are actions of thread~$i$, so they
are dependent.
Also, by construction, any event in~$M$ is an event in~$X'$.
Set~$X'$ (\cref{l:cex.wait.Xp}) only contains events of the following effects:
$\tup{\sigg, c, 0}$, $\tup{\bro, c, \emptyset}$ or $\tup{\bro, c, S}$
with $i \notin S$.
According to \cref{tab:indepp} from \cref{def:indepp}, any action of either one
of these effects (for any thread) is dependent on $h(e')$.
Thus, the actions of all events in $M$ are always dependent on $h(e')$.
\end{enumerate}
Now that we have established that $e'$ is indeed an event, it remains to show
that $e'$ is in $\cex{C}$.
Since $\causes{e'}$ only contains events from $\causes{e} \subseteq C$, we have
that $e' \in \ex{C}$, \ie all causes of $e'$ are contained in $C$.
To show that an event $e' \in \ex{C}$ is also in $\cex{C}$, we have to show
that it is in conflict with some other event in $C$ (see \cref{sec:algo.cex}).
We will show that $e'$ is in conflict with $e$, \ie $e' \cfl e$.
For this, the following needs to hold according to the definition given in
\cref{sec:algo.unfsem}:
\begin{enumerate}
	\item $e \neq e'$: Since we explicitly prevent the creation of any
        conflicting extension with $K = \causes{e}$ as its set of causes in
        \cref{l:cex.wait.loop}, we have $e \neq e'$.
	\item $\lnot(e < e')$: The set of causes for $e'$ comes from $\causes{e}$,
        so clearly $e \notin \causes{e'}$.
	\item $\lnot(e' < e)$: By contradiction, assume that
        $e' \in \causes{e}$. As $[e_t]$ is included entirely in the causes of
        $e'$, we know that $e' \notin [e_t]$, but instead $e' \in X$.
        Because $e_t$ is defined as the $<$-maximal predecessor of $e$ on
        thread $i$, event $e'$ has to be on some other thread $j$
        (\ie $i \neq j$). We also know that at $\state{[e_t]}$, thread $i$ is
        holding lock $l$. However, $e'$ requires lock $l$ being held by thread
        $j$ which is not possible without another event on thread $i$ following
        $e_t$ (contradiction).
	\item $h(e) \depen h(e')$: Both $h(e)$ and $h(e')$ are actions of the same
        thread.
\end{enumerate}

\textbf{Right-to-left.}
We show that for any event~$e' \in \cex C$, if~$e'$ is of $\wa$ effect,
then~$e'$ is eventually returned by a call to \cexwait.
Let~$e'$ be an event of thread~$i$.
We proceed in two steps:
\begin{enumerate}
\item
   We show that~$C$ contains some event~$e$ such that~$h(e) = h(e')$.\eqtag{eq:step1}
\item
   We show that a call to \cexwait{$e$} returns~$e'$.\eqtag{eq:step2}
\end{enumerate}
Note that if~\cref{eq:step1} holds,
then such~$e$ will be found at \cref{l:cex.main.wait} of
\cexmain, and so, by~\cref{eq:step2},
the call made there to \cexwait{$e$} will compute and
return~$e'$, what we wanted to prove.

We start by proving \cref{eq:step1}.
We first establish that there exists some event~$e_t$ of thread $i$ that is
causally preceding $e'$.
Such an event always exists in the case of a $\wa$ event, because~$\wa$ releases
a mutex previously acquired by some other event of thread~$i$.
By definition we have that~$e_t$ is in~$C$.
We now show~\cref{eq:step1} by breaking it down in two statements that we prove
separatedly.
We show that there exists some event $e$ such that:
\begin{itemize}
\item $e \in C$, and\eqtag{eq:e.in.c}
\item $h(e) = h(e')$.\eqtag{eq:same.h}
\end{itemize}

To prove these statements we define (below) some set~$C'$ and prove that~$h(e')$
is enabled at at~$C'$. We later define the causes of~$e$ from~$C'$.

First, let us define the notation $\future{\hat e}$, for some event~$\hat e$, as
the set containing all events from ${\unfpindep}$ that are causally dependent
on~$\hat e$, as well as~$\hat e$ itself.
Similarly, for a set $S$ of events, we define
$\future{S} \eqdef \bigcup_{\hat e \in S} \future{\hat e}$.

We start the proof by defining
$$C' \eqdef C \setminus \future{S}$$
with $S \eqdef \set{e \in C \colon e_t < e \land \tid{e} = i}$.
This way, $C'$ excludes exactly all events from $C$ that depend on any event
of thread $i$ succeeding $e_t$.

We now show by induction that $h(e')$ is enabled at~$C'$.
Let us make the following definitions.  Let
$$\sigma \eqdef e_1 \dots e_n \in \inter{C' \setminus \causes{e'}}$$
be any interleaving of the events in $C' \setminus \causes{e'}$.
Note that events $e_1$, \dots, $e_n$ are numbered in such a way
$e_j < e_k \implies j < k$.
Furthermore, let:
\begin{align*}
C_0 &\eqdef \causes{e'} \\
   C_j &\eqdef C_{j-1} \cup \set{e_j} \quad \text{for } 1 \le j \le n \\
\end{align*}
Observe that, by construction, $C_n = C'$ and that
$e_j$ is enabled at $C_{j-1}$ for $1 \le j \le n$.
We now show by induction on~$j$ that~$h(e')$ is enabled at~$C_j$,
for~$0 \le j \le n$.

\begin{itemize}
\item \textbf{Base case:} $j = 0$. $C_j = C_0 = \causes{e'}$.
Clearly, configuration $C_0$ enables $h(e')$,
as otherwise $e'$ would not be an event.

\item \textbf{Inductive case:}
Assume $\state{C_{j-1}}$ enables $h(e')$.
By induction, we also have that $\state{C_{j-1}}$ enables $h(e_j)$.
If $h(e_j) \indep h(e')$ then by definition of independence (\cref{app:indep})
they commute and so $\state{C_j}$ also enables $h(e')$.
If however, $h(e_j) \depen h(e')$, then clearly $k \eqdef \tid{e_j} \neq i$,
because C' does not contain any event of thread $i$.
For $h(e')$ to remain enabled, $u(l) = i$ needs to hold
(as it does because $\state{C_{j-1}}$ already enables the action) and
$p(i)$ must not be changed (see \cref{tab:indepp}).
Because there is no rule that changes $p$ for a thread other than that of the
action, we only need to consider changes to $u$.
According to \cref{def:indepp}, we have the following cases for $h(e_j$):
\begin{itemize}
    \item $\tup{k, \tup{\bro, c, \cdot}}$ or $\tup{k, \tup{\sigg, c, 0}}$.
        By rule \rulebro{} or \rulesiggl{}, respectively,
        $u(l)$ remains untouched, so $\state{C_j}$ enables $h(e')$.
    \item $\tup{k, \tup{\sigg, c, i}}$.
        Rule \rulesigg{} does not modify $u$, so $h(e')$ remains enabled.
    \item $\tup{k, \tup{\acq, l}}$. Action $h(e_j)$ cannot be enabled at
        $\state{C_{j-1}}$ because $u(l) = i$ at $\state{C_{j-1}}$ and
        so thread $k$ cannot take the lock.
    \item $\tup{k, \tup{\rel, l}}$. Impossible, because we have $u(l) = i$,
        but rule \rulerel{} requires $u(l) = k$ to release lock $l$
        on thread $k$.
    \item $\tup{k, \tup{\wa, c', l}}$. Impossible, because we have $u(l) = i$,
        but rule \rulewa{} requires $u(l) = k$ to release lock $l$
        on thread $k$
    \item $\tup{k, \tup{\waa, c', l}}$. Impossible, because we have $u(l) = i$,
        but rule \rulewaa{} requires $u(l) = 0$ to acquire lock $l$
        on thread $k$.
\end{itemize}
In all cases~$h(e')$ is enabled at~$\state{C_j}$.
\end{itemize}

Consequently, $h(e')$ is enabled at~$\state{C'}$.
We now show that there exists some event~$e$ enabled at $C'$ such that
$h(e) = h(e')$. This will show~\cref{eq:same.h}.

The idea is simple: if all maximal events of~$C'$ are dependent
with~$h(e')$, then $e \eqdef \tup{h(e'), C'}$ is the event we are searching for.
Otherwise, $C'$ contains some $<$-maximal event $\hat e$ such that $h(\hat e)$
is independent of $h(e')$.
Then define $C'' \eqdef C' \setminus \set{\hat e}$.
Since~$h(e)$ and~$h(\hat e)$ are independent, $\state{C''}$ enables $h(e')$.
If all maximal events of~$C''$ are dependent with~$h(e')$, then
$e \eqdef \tup{h(e'), C''}$ is the event we are searching for.
Otherwise this argument can be iterated again and again, reducing the size
of~$C'$ on every step, until we find the event~$e$ that we are searching for.

We have proved that there exists some event $e$ such that $e \in \en{C'}$
and $h(e) = h(e')$.  We now show that $e \in C$, therefore
proving~\cref{eq:e.in.c}.
Let $\hat{e}$ be the $<$-minimal event in $S$.
By definition, $\tid{\hat{e}} = i$ and $\hat{e} \in \en{C'}$.
Since $\state{C'}$ enables $h(\hat{e})$ and $h(e)$ and
$\tid{h(e')} = \tid{h(e)}$ by \cref{def:well-formed},
we get that $h(\hat{e}) = h(e)$.
By \cref{lemma:p14} then $e = \hat{e}$.

We have established~\cref{eq:e.in.c,eq:same.h}. Since~$e \in C$,
when \cref{a:cex.main} is called on~$C$,
\cref{l:cex.main.wait} will find~$e$ and call \cexwait{$e$}.
We now need to show~\cref{eq:step2}.
That is, we need to show that this call will eventually add~$e'$ to $R$
(\cref{l:cex.wait.add} of~\cref{a:cex.w1}) and return it
(\cref{l:cex.wait.return}).

The key ideas to do this are as follows.
One should establish that the event~$e_t$ defined above is identical with
the one selected in \cref{l:cex.wait.et} of \cref{a:cex.w1}.
Then, one should show (by contradiction, considering all other effects) that the
effects of $\max_{<}(\causes{e'} \setminus \set{e_t})$ are exactly those that
we select for $X'$ in \cref{l:cex.wait.Xp}.
One should then show that $\max_{<}(\causes{e'})$ is concurrent and
$\max_{<}(\causes{e}) \neq \max_{<}(\causes{e'})$.
From this, it follows that $\causes{e'}$ will be considered in the loop of
\cref{l:cex.wait.loop} and \cexwait{$e$} will return event~$e'$, what we wanted
to show.
\end{proof}

\begin{lemma}\label{lemma:p14}
Let $C$ be a finite configuration of $\unfpindep$.
Let $a \in A$ be an action enabled at $\state C$.
Then there exists exactly one event $e \in \en C$ such that $h(e) = a$.
\end{lemma}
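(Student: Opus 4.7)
The plan is to prove existence and uniqueness separately, exploiting the inductive characterization of events in \cref{def:unfd}.

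For existence, I define the \emph{cone of $a$-dependent events in $C$},
\[
H \eqdef \bigcup \set{[e'] \colon e' \in C \text{ and } h(e') \depen a},
\]
which is causally closed by construction and conflict-free as a subset of the configuration~$C$; hence a configuration in its own right. Any $<$-maximal event $e'$ of $H$ must coincide with the event $e''$ that placed it in $H$ (otherwise $e' < e''$ and $e'' \in H$ contradicts maximality), so $h(e') \depen a$, fulfilling the dependence condition on maximal predecessors in \cref{def:unfd}. The remaining requirement is that $a$ be enabled at $\state H$. Fix any interleaving $\sigma$ of $C$ that fires every event of $H$ before any event of $C \setminus H$ (possible because $H$ is causally closed in~$C$), and write it $\sigma = \tau \tau'$ with $\tau \in \inter H$. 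By construction every event of $\tau'$ has action independent of~$a$. Since $a$ is enabled at $\state{\tau\tau'} = \state C$, an induction on $|\tau'|$ that uses the commutation property \cref{eq:com1} of the valid independence~$\indep$ propagates enabledness backwards to $\state \tau = \state H$. Step~2 of \cref{def:unfd} then inserts $e \eqdef \tup{a, H}$ as an event. Membership in $\en C$ is then immediate: $\causes e = H \subseteq C$; $e \notin C$, since otherwise $e \in H = \causes e$ contradicts irreflexivity of~$<$; and $C \cup \set e$ is conflict-free, because any $e'' \in C$ with $h(e'') \depen a$ lies in $H = \causes e$ (so $e'' < e$, ruling out conflict), while any other $e'' \in C$ satisfies $h(e'') \indep a$ and so cannot be in conflict with~$e$ by the conflict rule of \cref{sec:algo.unfsem}.

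For uniqueness, suppose $e_1, e_2 \in \en C$ with $h(e_1) = h(e_2) = a$. Since each event is of the form $\tup{a, H_i}$ with $H_i = \causes{e_i}$, it suffices to prove $H_1 = H_2 = H$. The inclusion $H_i \subseteq H$ is direct: every $e' \in H_i$ is below some $<$-maximal event $e'' \in H_i$, and by \cref{def:unfd} we have $h(e'') \depen a$; since $e'' \in C$, the definition of~$H$ yields $[e''] \subseteq H$, so $e' \in H$. For $H \subseteq H_i$, pick $e'' \in C$ with $h(e'') \depen a$ and show $e'' \in H_i$ by contradiction. If $e'' \notin H_i$, then $\lnot(e'' < e_i)$ (else $e'' \in \causes{e_i} = H_i$); also $\lnot(e_i < e'')$, for otherwise causal closure of~$C$ would force $e_i \in C$, contradicting $e_i \in \en C \subseteq \ex C$; and $\lnot(e'' \cfl e_i)$, for otherwise $C \cup \set{e_i}$ would fail to be conflict-free. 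Combined with $h(e'') \depen h(e_i) = a$, the conflict rule of \cref{sec:algo.unfsem} now forces $e'' \cfl e_i$, a contradiction. Hence $e'' \in H_i$, and causal closure of~$H_i$ gives $[e''] \subseteq H_i$, completing $H \subseteq H_i$.

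The one genuinely delicate point is the backward propagation of enabledness from $\state C$ to $\state H$: commutativity of a valid independence is only guaranteed at \emph{reachable} states, so the induction must be carried out along an actual run, using \cref{eq:com1} symmetrically (each event of $\tau'$ is itself enabled at the intermediate state where it fires, and is independent from $a$). All remaining obligations reduce to routine bookkeeping against \cref{def:unfd} and the conflict rule of \cref{sec:algo.unfsem}.
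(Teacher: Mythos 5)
Your proof is correct, and it takes a recognizably different route from the paper's. The paper argues by an iterative ``peeling'' procedure: starting from $C$, it repeatedly deletes a $<$-maximal event whose action is independent of~$a$ until every maximal event is dependent with~$a$, takes the surviving configuration $C''$ as the history of the desired event, and obtains termination from finiteness of~$C$; the facts that the result is independent of the peeling order, that $a$ remains enabled after each deletion, and that $\tup{a,C''}$ is the \emph{only} such enabled event are asserted rather than argued. You instead give a closed-form description of that same history, $H = \bigcup\set{[e'] \colon e' \in C \land h(e') \depen a}$ (every peeling order converges to exactly this set), and then discharge explicitly the obligations the paper leaves implicit: that $a$ stays enabled when the $a$-independent suffix is commuted away (your backward induction along an actual run, invoking \cref{eq:com1} only at reachable states, is precisely the care required), that $\tup{a,H}$ is conflict-free with~$C$ and hence lies in $\en C$, and --- most usefully --- that any event of $\en C$ labelled~$a$ must have exactly $H$ as its causes, which is where the conflict rule does the real work and where the paper's ``uniquely-defined'' claim is actually earned. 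The declarative formulation costs a little more writing but buys a genuinely rigorous uniqueness argument; the paper's iterative version is shorter but leaves order-independence and uniqueness to the reader.
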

\begin{proof}
Let $\set{e_1, \ldots, e_n}$ be all $<$-maximal events of~$C$.
If $a$ is dependent (\wrt $\indep$) with all of them, then
$e \eqdef \tup{a, C}$ is an event (by \cref{def:unfd}, item 2) and it
is the event we are searching.
Otherwise assume that $e_i$ is such that $h(e_i) \indep a$.
Then let $C' \eqdef C \setminus \set{e_i}$ be the result of removing $e_i$
from~$C$ and try again. If all maximal events of $C'$ are now dependent with~$a$
then $\tup{a, C'}$ is the event we are searching. If not, then iterate this reasoning again and again on $C'$ to ``peel off'' all
events independent with~$a$. The resulting configuration~$C''$ will necessarily
be smaller than~$C$. Since~$C$ is of finite size, this procedure will terminate,
it will produce a uniquely-defined event $\tup{a, C''}$,
and such event will be enabled at~$C$.
\end{proof}

\subsection{\texorpdfstring{Events of $\sigg$ or $\bro$ Effect}{Events of sig or bro Effect}}

To find conflicting extensions, the algorithms \cexacquire and \cexwait pick a
subset of a given event's causes.
For events of $\sigg$ and $\bro$ effect, however, we will also consider events
that happen concurrently to a given event.
Thus, \cexnotify (\cref{a:cex.notify}) receives a (maximal) configuration $C$
in addition to an event $e$ of $\sigg$ or $\bro$ effect
(\cref{l:cex.notify.assume}) with $e \in C$.
After determining the thread predecessor as usual (\cref{l:cex.notify.et}),
instead of using subsets of $K \setminus [e_t]$ as basis for conflicting
extensions, the algorithm prepares a set that additionally includes events
concurrent to $e$.
This set $X$ is constructed by subtracting both events that causally precede
$e_t$ and events that succeed $e$ (\cref{l:cex.notify.X}).
The latter are denoted by $\future{e}$, which is the set containing all events
from $\unfpindep$ that are causally dependent on $e$ as well as $e$ itself.
In other words, $X$ is the set of all events in the configuration that is
neither $e$, causally dependent on $e$ nor in $e_t$'s local configuration.
To avoid adding a conflicting extension that equals $e$, we additionally
prepare $M$ to be the set of immediate causal predecessors of $e$
(\cref{l:cex.notify.M}).

\begin{algorithm*}[!t]
\DontPrintSemicolon

\Fn{\cexnotify{C, e}}{
   Assume that $e$ is
   $\tup{\tup{i, \tup{\sigg, c, j}}, K}$ or
   $\tup{\tup{i, \tup{\bro, c, S}}, K}$ \;
   \label{l:cex.notify.assume}

   $e_t \eqdef \lastof{$K, i$}$ \;
   \label{l:cex.notify.et}

   $X \eqdef C \setminus (\future e \cup [e_t])$ \;
   \label{l:cex.notify.X}
   $M \eqdef \max_{<} (K \setminus [e_t])$ \;
   \label{l:cex.notify.M}
   $R \eqdef \emptyset$ \;

   \If{$\effect{e} = \tup{\sigg, c, \cdot}$}
   {
      \label{l:cex.notify.sig}
      $W \eqdef
         \set{e' \in X \colon \effect{e'} = \tup{\wa, c, \cdot}}
         \cup
         \outstandingw{$[e_t], c$}$ \;
         \label{l:cex.notify.sig.W}

      \ForEach {$e_w \in W$}
      {
         \label{l:cex.notify.sig.loop}
         \lIf{$e_w = \lastof{$K, j$}$}{\KwContinue}\label{l:cex.notify.sig.continue}
         Add $\tup{\tup{i, \tup{\sigg, c, \tid{e_w}}}, [e_t] \cup [e_w]}$ to~$R$ \;
         \label{l:cex.notify.sig.add}
      }
   }
   \Else
   {
      \label{l:cex.notify.bro}
      Let $\mathcal A$ be the collection of sets $M' \subseteq X$ such that: \;
         \label{l:cex.notify.bro.sets}
         \Indp
         -- \concurrent{M'} \;
         \label{l:cex.notify.bro.conc}
         -- $M \ne M'$ \;
         \label{l:cex.notify.bro.MneMp}
         -- {For all $e' \in M'$ either $\effect{e'} = \tup{\wa, c, \cdot}$ \;
            \label{l:cex.notify.bro.conda}
            \Indp\Indp or $\effect{e'} = \tup{\sigg, c, k}$ with $k \ne 0$ \; \Indm\Indm
            \label{l:cex.notify.bro.condb}
         }
         \Indm
      \ForEach {$M' \in \mathcal A$}
      {
         \label{l:cex.notify.bro.loop}
         $W \eqdef \outstandingw{$[M'] \cup [e_t], c$}$ \;
         \label{l:cex.notify.bro.W}
         \If{$W \ne \emptyset$}
         {
            \label{l:cex.notify.bro.if}
            Add $\tup{\tup{i, \tup{\bro, c, \tid W}}, [M’] \cup [e_t]}$ to $R$
            \label{l:cex.notify.bro.add}
         }
      }
   }

   Let $\mathcal B$ be the collection of sets $M' \subseteq X$ such that: \;
      \label{l:cex.notify.lost}
      \Indp
      -- \concurrent{M'} \;
      \label{l:cex.notify.lost.conc}
      -- $M \ne M'$ \;
      \label{l:cex.notify.lost.MneMp}
      -- $\outstandingw{$[M'] \cup [e_t], c$} = \emptyset$ \;
      \label{l:cex.notify.lost.outstanding}
      -- {Either $M' = \set{e'}$ and $\effect{e'} = \tup{\bro, c, S}$, with $S \ne \emptyset$, \;
         \label{l:cex.notify.lost.conda}
         \Indp\Indp or $\forall e' \in M'$ we have $\effect{e'} = \tup{\sigg, c, k}$ with $k \ne 0$ \;\Indm\Indm
         \label{l:cex.notify.lost.condb}
      }
      \Indm

   Let $t \eqdef
   \begin{cases}
      \tup{\sigg, c, 0} & \text{if } \effect{e} = \tup{\sigg, c, \cdot} \\
      \tup{\bro, c, \emptyset} & \text{otherwise}
   \end{cases}
   $ \;
   \label{l:cex.notify.lost.t}

   \ForEach {$M' \in \mathcal B$}
   {
      \label{l:cex.notify.lost.loop}
      Add $\tup{\tup{i, t}, [M’] \cup [e_t]}$ to $R$
      \label{l:cex.notify.lost.add}
   }

   \KwRet $R$;
   \label{l:cex.notify.return}
}

\caption{Computing conflicting extensions: $\sigg$ and $\bro$ events}\label{a:cex.notify}
\end{algorithm*}
The algorithm is divided into three sections, adding conflicting extensions in
the form of successful signals
(\crefrange{l:cex.notify.sig}{l:cex.notify.sig.add}),
successful (non-empty) broadcasts
(\crefrange{l:cex.notify.bro}{l:cex.notify.bro.add})
and lost notifications
(\crefrange{l:cex.notify.lost}{l:cex.notify.lost.add}).
For the former two, the algorithm only executes the section that corresponds
to the effect of $e$, while the last (for lost notifications) is always
executed.

Common to all three sections is that they need to reason about events of $\wa$
effect. For that, we will introduce another utility function from
\cref{a:utility}: With \outstandingw{C, c}, we compute the (concurrent) set of
$\wa$ events (on a condition variable $c$) that are not matched by a
corresponding event of $\sigg$ or $\bro$ effect in a configuration $C$.
In other words, we receive the set of $\wa$ events corresponding to $c$ that
are still awaiting a notification in $C$.

We will start the description with the section concerning successful signals
(\crefrange{l:cex.notify.sig}{l:cex.notify.sig.add}).
As already established, this part of the function is only executed when $e$ is
of $\sigg$ effect (\cref{l:cex.notify.sig}).
It will add an event to $R$ in form of a signal corresponding to each one $\wa$
event (corresponding to the same condition variable $c$) that can be notified
following $e_t$.
To this end, a set $W$ is built (\cref{l:cex.notify.sig.W}), containing all
events of $\wa$ effect that are concurrent to $e$ or outstanding in $[e_t]$.
For each event $e_w$ in $W$ (\cref{l:cex.notify.sig.loop}), we add an event of
$\sigg$ effect to $R$ (\cref{l:cex.notify.sig.add}), notifying the thread of
$e_w$, which is denoted by $\tid{e_w}$.
Note however, that we do not add a conflicting extension for the $\wa$ event
that was already notified by $e$ (\cref{l:cex.notify.sig.continue}).

In case $e$ is of $\bro$ effect, the algorithm tries to add successful
broadcasts as conflicting extensions in
\crefrange{l:cex.notify.bro}{l:cex.notify.bro.add}.
As broadcasts, in contrast to signals, do not only notify a single event of
$\wa$ effect, we need to reason about sets of those events.
We chose candidate subsets from $X$ in
\crefrange{l:cex.notify.bro.sets}{l:cex.notify.bro.condb}, such that they
adhere to three conditions.
First, any $M' \subset X$ has to be a concurrent set
(\cref{l:cex.notify.bro.conc}).
We do this as only the maximal event of each thread determines what events are
included in $[M']$, \ie additionally including any preceding events in $M'$
does not change $[M']$.
This also easily allows us to compare $M$ to $M'$ and mandate they should not
be equal (\cref{l:cex.notify.bro.MneMp}), to avoid adding the event that equals
$e$ to $R$.
Finally, we are only interested in sets that only contain events of $\wa$
(\cref{l:cex.notify.bro.conda}) or successful $\sigg$ effects
(\cref{l:cex.notify.bro.condb}).
Events representing successful signals are included into the set of causes as
they are dependent (see \cref{app:indep.programs}) on events of $\bro$ effect
in the following way: The action must have happened before, as otherwise the
broadcast would have included the notified thread in its set, eliminating the
need for the signal.
For each suitable subset (\cref{l:cex.notify.bro.loop}), we determine the
outstanding events of $\wa$ effect in $[M'] \cup [e_t]$
(\cref{l:cex.notify.bro.W}) and if the resulting set ($W$) is non-empty
(\ie the resulting broadcast will not be lost, \cref{l:cex.notify.bro.if}),
we add a conflicting extension to $R$ (\cref{l:cex.notify.bro.add}).
This new event is a broadcast notifying exactly the set of threads for which
there is an outstanding event of $\wa$ effect in $W$ (retrieved by $\tid{W}$).

Finally, we look at the generation of conflicting extensions representing lost
notifications (\crefrange{l:cex.notify.lost}{l:cex.notify.lost.add}).
Again, we determine candidate subsets, with the first two conditions matching
those of the broadcast section
(\cref{l:cex.notify.lost.conc,l:cex.notify.lost.MneMp}).
Then, however, we mandate that there are no outstanding events of $\wa$ effect
(\cref{l:cex.notify.lost.outstanding}), as the resulting event will not notify
any thread.
We also restrict $M'$ to contain either a single event of $\bro$ effect,
notifying at least one thread (\cref{l:cex.notify.lost.conda}), or a number of
successful $\sigg$ events (\cref{l:cex.notify.lost.condb}).
These two possibilities cover all the immediate predecessor sets a lost
notification can have (see \cref{app:indep.programs}).
In \cref{l:cex.notify.lost.t}, we determine the effect of the resulting
conflicting extensions, based on the effect of $e$.
Then, for each suitable subset (\cref{l:cex.notify.lost.loop}), we add a
conflicting extension to $R$ (\cref{l:cex.notify.lost.add}).

As always, \cexnotify returns $R$, the set of conflicting extensions found,
once it is complete (\cref{l:cex.notify.return}).

\section{Main Algorithm}\label{app:main}

In \cref{sec:algo.exploring} we presented a simplified version of the main
algorithm proposed in this work.
We now present the complete algorithm. \Cref{a:main} contains the main
procedure, \cref{a:main2,a:main3} show auxiliary functions called from
the main procedure.

\begin{algorithm*}[t]
\DontPrintSemicolon

\begin{multicols}{2}

\Record{Node}{
   $C$: a configuration (set of events) \;
   $D$: set of events. \; $e$: an event. \; $l$: the \emph{left child} Node \;
   $r$: the \emph{right child} Node \;
   $s$: Boolean, \True iff \emph{sweep node} \;
}

\BlankLine
\BlankLine
Global variables:\\
-- $U$: set of known events, grows monotonically\\
-- $N$: set of \emph{Nodes} in the tree, grows and shrinks

\BlankLine
\Proc{\main{}}{
   $N, U, W \eqdef \emptyset, \emptyset, \emptyset$ \;
   \label{l:wdef}
   $n \eqdef \allocatenode(\emptyset, \emptyset)$ \;
   \label{l:n0def}
   $n.s \eqdef \True$ \;
   Add $n$ to $W$ \;
   \ForEach {node $n \in W$}
   {
      \label{l:select2}
      $\tup{n', B} \eqdef \expandleft{$n$}$ \;
      \label{l:call-expandleft}
      Add $\cexmain{$n'.C$}$ to $U$ \;
      \label{l:call-cex}
      \createrightbranches{$B$} \;
      \label{l:call-createrightbranches}
      \backtrack{$n'$} \;
      \label{l:call-backtrack}
   }
}

\BlankLine
\Fn{\encutoff{C}}{
\KwRet $\set{e \in \en C \colon \lnot \iscutoff{e}}$ \;
}

\BlankLine
\Fn{\allocatenode{C, D}}{
   \label{l:allocatenode}
   $n \eqdef \text{new \emph{Node}}$ \;
	$n.C \eqdef C$ \;
	$n.D \eqdef D$ \;
   $n.e, n.l, n.r \eqdef \Null, \Null, \Null$ \;
   $n.s \eqdef \False$ \;
	Add $n$ to $N$ \;
   \KwRet $n$ \;
}

\BlankLine
\Fn{\expandleft{n}}{
	\KwAssert $n.l = \Null$ \;
$B = \set n$ \;

   \ForEach {$e \in \encutoff{$n.C$} \setminus n.D$}
   {
      $n \eqdef \makeleft{$n, e$}$ \;
		Add $n$ to $B$ \;
		Add $e$ to $U$ \;
      \label{l:add-u}
   }
\KwRet $\tup{n, B}$ \;
}

\BlankLine
\Fn{\makeleft{$n, e$}}{
	\KwAssert $n.e = \Null \land n.l = \Null$ \;
   $n' \eqdef \allocatenode{$n.C \cup \set e, n.D$}$ \;
	$n.e \eqdef e$ \;
$n'.s \eqdef n.s$ \;
   \label{l:sweep-bit-inherit}
   $n.s \eqdef \False$ \;
   $n.l \eqdef n’$ \;
	\KwRet $n'$ \;
}

\end{multicols}
\vspace*{6pt}
\caption{Main algorithm}\label{a:main}
\end{algorithm*}

This algorithm is conceptually similar to Algorithm 1 in~\cite{RSSK15long}.
In~\cite{RSSK15long} the algorithm performs a DFS traversal over the POR tree.
Here we let
the user select (\cref{l:select2} of \main) the order in which the tree is explored.
This brings some complications, which we will explain below, but allows for a more
effective combination of the POR search with the exploration heuristics commonly used in symbolic execution engines such as KLEE.\@

The algorithm visits all cutoff-free, $\subseteq$-maximal configurations of the
unfolding, and organizes the exploration into a tree.  Each node of the tree is
an instance of the data structure \emph{Node}, shown in \cref{a:main}.
Conceptually, a \emph{Node} is a tuple $n \eqdef \tup{C, D, e, l, r, s}$
that represents a state of the exploration algorithm.
Specifically, given a Node \(n\), we need to explore all maximal configurations
that include~$C$, exclude~$D$, and where all the maximal configurations visited
through the left child node of~$n$ include~$e$.
The event~$e$ always satisfies that $e \in \en C$.
The left child node, field~$l$, represents a state of the search where
we explore event~$e$ (adding~$e$ to the $C$-component).
The right child node, field~$r$, represents a state of the exploration
where we have decided to exclude~$e$ from any subsequently visited
maximal-configuration. We keep track of this by adding~$e$ to the $D$-component
of that node.
The node is a \emph{sweep node} iff bit $s$ is true.
We call~$s$ the \emph{sweep bit}.
We will explain in \cref{sec:main.sweep.bit} what the~$s$ bit is used for.

\subsection{Exploring the left branch}\label{sec:main.left}

The algorithm maintains a queue of nodes to be explored in variable~\(W\)
(\cref{l:wdef}).
Initially, the queue contains a node representing the exploration of the empty
configuration (\cref{l:n0def}).
Note that the function \allocatenode{} adds the new node to~$N$.
The user is free to select what node from the queue gets explored next
(\cref{l:select2}).
After a node~$n$ is selected, the first task is expanding the leftmost branch of the
tree rooted at~$n$. We do this in the call to \expandleft{},
\cref{l:call-expandleft}.

\Cref{fig:big.tree} shows an example of this.
This tree will be explored when the algorithm is executed on the program shown
in~\cref{fig:3:a} (whose unfolding is shown in~\cref{fig:algo.unf.a}).
Initially, the queue contains $n_0 \eqdef \tup{\emptyset, \emptyset, \Null, \Null, \Null}$.
At \cref{l:select2} we extract $n_0$ from the queue.
The call to \expandleft{$n_0$} will create all nodes from~$n_1$ to~$n_8$
(the call will also indirectly update the left-child pointer in node~$n_0$ so that it
becomes $n_1$).
The call also returns the last (leaf) node ($n_8$ in our example)
and the set $B$ of nodes in the branch ($\set{n_0, \ldots, n_8}$ in our example).
As new events are discovered, they are also added to the set~$U$
(\cref{l:add-u}).

\begin{figure*}[t]
\centering
\begin{tikzpicture}[
	font=\fontsize{7}{6.25}\selectfont{},
	lbl/.style={
		inner xsep=0cm,
		inner ysep=0.05cm,
	},
	highlight/.style={
		draw=black,
		line width=0.01cm,
		fill=black!30,
		rounded rectangle,
		inner sep=0.06cm,
	},
	succ/.style={
		draw,
		->,
		>={Triangle[length=0.1cm,width=0.1cm]},
		line width=0.02cm,
	},
	config/.style={
inner sep=0.05cm,
	},
]
	\def\xincr{0.85cm}
	\def\yincr{0.4cm}
	\def\incr{\xincr,\yincr}
	\def\refdist{0.5cm}
	\def\xone{1.25cm}
	\def\xtwo{2.5cm}

	\begin{scope}[shift={(0.6cm,0cm)}]
		\node[lbl,highlight] (n0) at (0cm,0cm) {\(n_{0}\)};

		\node[lbl] (n1) at ($(n0)-(\incr)$) {\(n_{1}\)};
		\path[succ] (n0.south) to (n1.north);

		\node[lbl] (n2) at ($(n1)-(\xtwo,1.5*\yincr)$) {\(n_{2}\)};
		\path[succ] (n1.south) to (n2.north);

		\node[lbl] (n3) at ($(n2)-(\xone,\yincr)$) {\(n_{3}\)};
		\path[succ] (n2.south) to (n3.north);

		\node[lbl] (n4) at ($(n3)-(\incr)$) {\(n_{4}\)};
		\path[succ] (n3.south) to (n4.north);

		\node[lbl] (n5) at ($(n4)-(\incr)$) {\(n_{5}\)};
		\path[succ] (n4.south) to (n5.north);

		\node[lbl] (n6) at ($(n5)-(\incr)$) {\(n_{6}\)};
		\path[succ] (n5.south) to (n6.north);

		\node[lbl] (n7) at ($(n6)-(\incr)$) {\(n_{7}\)};
		\path[succ] (n6.south) to (n7.north);

		\node[lbl] (n8) at ($(n7)-(\incr)$) {\(n_{8}\)};
		\path[succ] (n7.south) to (n8.north);
		
		\node[config] (cb) at ($(n8)-(0,\refdist)$) {\cref{fig:3:b}};

		\node[lbl] (n9) at ($(n2)-(-\xone,\yincr)$) {\(n_{9}\)};
		\path[succ] (n2.south) to (n9.north);

		\node[lbl,highlight] (n10) at ($(n9)-(\incr)$) {\(n_{10}\)};
		\path[succ] (n9.south) to (n10.north);

		\node[lbl] (n14) at ($(n10)-(\incr)$) {\(n_{14}\)};
		\path[succ] (n10.south) to (n14.north);

		\node[lbl] (n15) at ($(n14)-(\incr)$) {\(n_{15}\)};
		\path[succ] (n14.south) to (n15.north);

		\node[lbl] (n16) at ($(n15)-(\incr)$) {\(n_{16}\)};
		\path[succ] (n15.south) to (n16.north);

		\node[lbl] (n17) at ($(n16)-(\incr)$) {\(n_{17}\)};
		\path[succ] (n16.south) to (n17.north);

		\node[lbl] (n18) at ($(n17)-(\incr)$) {\(n_{18}\)};
		\path[succ] (n17.south) to (n18.north);

		\node[lbl] (n19) at ($(n18)-(\incr)$) {\(n_{19}\)};
		\path[succ] (n18.south) to (n19.north);

		\node[lbl] (n20) at ($(n19)-(\incr)$) {\(n_{20}\)};
		\path[succ] (n19.south) to (n20.north);
		
		\node[config] (cb) at ($(n20)-(0,\refdist)$) {\cref{fig:3:c}};

		\node[lbl] (n11) at ($(n1)-(-\xtwo,1.5*\yincr)$) {\(n_{11}\)};
		\path[succ] (n1.south) to (n11.north);

		\node[lbl] (n12) at ($(n11)-(\incr)$) {\(n_{12}\)};
		\path[succ] (n11.south) to (n12.north);

		\node[lbl,highlight] (n13) at ($(n12)-(\incr)$) {\(n_{13}\)};
		\path[succ] (n12.south) to (n13.north);

		\node[lbl] (n21) at ($(n13)-(\incr)$) {\(n_{21}\)};
		\path[succ] (n13.south) to (n21.north);

		\node[lbl] (n22) at ($(n21)-(\incr)$) {\(n_{22}\)};
		\path[succ] (n21.south) to (n22.north);

		\node[lbl] (n23) at ($(n22)-(\incr)$) {\(n_{23}\)};
		\path[succ] (n22.south) to (n23.north);

		\node[lbl] (n24) at ($(n23)-(\xone,\yincr)$) {\(n_{24}\)};
		\path[succ] (n23.south) to (n24.north);

		\node[lbl] (n25) at ($(n24)-(\incr)$) {\(n_{25}\)};
		\path[succ] (n24.south) to (n25.north);
		
		\node[config] (cb) at ($(n25)-(0,\refdist)$) {\cref{fig:3:d}};

		\node[lbl] (n26) at ($(n23)-(-\xone,\yincr)$) {\(n_{26}\)};
		\path[succ] (n23.south) to (n26.north);

		\node[lbl,highlight] (n27) at ($(n26)-(\incr)$) {\(n_{27}\)};
		\path[succ] (n26.south) to (n27.north);

		\node[lbl] (n28) at ($(n27)-(\incr)$) {\(n_{28}\)};
		\path[succ] (n27.south) to (n28.north);
		
		\node[config] (cb) at ($(n28)-(0,\refdist)$) {\cref{fig:3:e}};

\end{scope}
\end{tikzpicture}\scriptsize \begin{align*}
   n_0 &= \tup{\emptyset, \emptyset, 1, n_1, \circ} &
      n_9 &= \tup{\set{1,2}, \set{3}, 9, n_{10}, \circ} &
      n_{11} &= \tup{\set{1}, \set{2}, 5, n_{12}, \circ}
      \\
   n_1 &= \tup{\set{1}, \emptyset, 2, n_2, n_{11}} &
      n_{10} &= \tup{\set{1,2,9}, \set{3}, 5, n_{14}, \circ} &
      n_{12} &= \tup{\set{1,5}, \set{2}, 16, n_{13}, \circ}
      \\
   n_2 &= \tup{\set{1,2}, \emptyset, 3, n_3, n_9} &
      n_{14} &= \tup{\set{1,2,9,5}, \set{3}, 10, n_{15}, \circ} &
      n_{13} &= \tup{\set{1,5,16}, \set{2}, 17, n_{21}, \circ} &
      \\
   n_3 &= \tup{\set{1,2,3}, \emptyset, 5, n_4, \circ} &
      \ldots &&
      \ldots
      \\
   \ldots &&
      n_{20} &= \tup{[15], \set{3}, \circ, \circ, \circ} &
      n_{25} &= \tup{[21], \set{2}, \circ, \circ, \circ}
      \\
   n_8 &= \tup{[8], \emptyset, \circ, \circ, \circ} &
      &&
      n_{28} &= \tup{[23], \set{2,20}, \circ, \circ, \circ}
      \\
\end{align*}
\vspace{-1.0cm}
 \caption{Complete POR exploration tree for~\cref{fig:3:a}. Symbol~$\circ$ denotes a
   \protect\Null pointer.
   We do not represent the sweep bit here.
   Nodes~$n_0, n_{10}, n_{13}, n_{27}$ will be added to the queue~$W$ of
   \cref{a:main}.
}\label{fig:big.tree}
\end{figure*}

\begin{algorithm*}[t]
\DontPrintSemicolon
\setlength{\columnsep}{2pt}
\begin{multicols}{2}

\BlankLine
\Proc{\createrightbranches{B}}{
   \ForEach{node $n \in B$}
   {
      \lIf{$n.e = \Null$}{ \KwContinue }
      \lIf{$n.r \ne \Null$}{ \KwContinue }
      $J \eqdef \alternatives{$n.C, n.D \cup \set{n.e}$}$ \;
      \If{$J \ne \emptyset$}
      {
         \makerightbranch{$n, J \setminus n.C$} \;
      }
   }
}

\BlankLine
\Fn{\makerightbranch{n, A}}{
	\KwAssert $n.e \ne \Null$ \;
   $n = \makeright{$n$}$ \;
   \label{l:call.makeright}

\ForEach{$e \in min_{<}(A)$}
   {
		Remove $e$ from $A$ \;
      $n \eqdef \makeleft{$n, e$}$ \;
   \label{l:call.makeleft}
   }
   Add $n$ to $W$ \;
   \label{l:mrb.add}
	\KwRet $n$ \;
}

\BlankLine
\Fn{\alternatives{$C,D$}}{
   \KwAssert $D \subseteq \ex C$ \;
   \KwAssert $D \cap \en C \ne \emptyset$ \;
   Let $e$ be some event in $\en C \cap D$ \;
   $S \eqdef \set{e' \in U \colon
      e' \cfl e \land
      [e'] \cap D = \emptyset}$ \;
   $S \eqdef \set{e' \in S \colon
      [e'] \cup C \text{ is a config.}}$ \;
   \lIf{$S = \emptyset$} { \KwRet $\emptyset$ }
   Select some event $e'$ from $S$ \;
   \KwRet $[e']$ \;
}

\Fn{\makeright{$n$}}{
	\KwAssert $n.e \ne \Null \land n.r = \Null$ \;
   $n' \eqdef \allocatenode{$n.C, n.D \cup \set{n.e}$}$ \;
	$n.r \eqdef n'$ \;
	\KwRet $n'$ \;
}

\end{multicols}
\vspace*{6pt}
\caption{Main algorithm: utility functions to explore right branches.}\label{a:main2}
\end{algorithm*}

\begin{algorithm*}[t]
\DontPrintSemicolon
\setlength{\columnsep}{2pt}
\begin{multicols}{2}

\BlankLine
\Proc{\backtrack{n}}{
   \KwAssert $\lnot \haschildren{$n$}$ \;
   \lIf{$\lnot n.s$}{\KwRet}
   \label{l:backtrack-returns}
   \While{$n \ne \Null \land \lnot \haschildren{$n$}$}
   {
      $p \eqdef \parent{$n$}$ \;
      \label{l:parent}
      \createrightbranches{$\set p $} \;
      \label{l:call-createrightbranches-backtrack}
      \remove{$n$} \;
		$n \eqdef p$ \;
   }
   \lIf{$n = \Null$}{ \KwRet }
	\KwAssert $n.r \ne \Null$ \;
   \updatesweepbit{$n$} \;
}

\BlankLine
\Fn{\haschildren{n}}{
	\KwRet $n.l \ne \Null$ or $n.r \ne \Null$ \;
}

\BlankLine
\Fn{\parent{n}}{
   \If{$n$ is the root of the tree}{\KwRet \Null}
	\KwRet the parent node of $n$ \;
}

\BlankLine
\Proc{\remove{n}}{
   \KwAssert $n \in N \land \lnot \haschildren{$n$}$ \;
   $p \eqdef \parent{$n$}$ \;
   \If{$p \ne \Null \land p.l = n$}{$p.l \eqdef \Null$}
   \If{$p \ne \Null \land p.r = n$}{$p.r \eqdef \Null$}
	Remove $n$ from $N$ \;
}

\BlankLine
\Proc{\updatesweepbit{n}}{
\KwAssert $n.l = \Null \land n.r \ne \Null$ \;
$n' \eqdef n.r$ \;
\lWhile{$n'.l \ne \Null$}{$n' = n'.l$}
   $n'.s \eqdef \True$ \;
\KwAssert{$n' \in W \land n.s = \False$} \;
}

\end{multicols}
\vspace*{6pt}
\caption{Main algorithm: backtracking utility functions}\label{a:main3}
\end{algorithm*}

Back to the \main() function, we now add (\cref{l:call-cex}) all conflicting
extensions of the maximal configuration visited by the leaf node.
In our example, the maximal configuration
will be \([8] = \set{1, 2, 3, 4, 5, 6, 7, 8}\), and we will add events~9 and~16
(see the full unfolding in \cref{fig:algo.unf.a}).
We use the algorithms described in \cref{app:cex} to perform this step.

\subsection{Exploring the right branch}\label{sec:main.right}

Next we make a call to \createrightbranches{$B$} in \cref{l:call-createrightbranches}.
This function is defined in~\cref{a:main2}.
The purpose of this call is inserting in the tree (set $N$) \emph{some} of the
right children nodes for nodes contained in $B$.
Some right-hand side children nodes that the tree will eventually have might not
be inserted here, because the algorithm does not have yet visibility over (the
conflicting extensions that trigger) them.
Those will provably be inserted in the call to \backtrack{}, \cref{l:call-backtrack}
of~\cref{a:main}, at some point in the future (at a later iteration of
of the loop in \main).

A right-hand side child node from a node $n = \tup{C, D, e, \cdot, \cdot, \cdot}$
represents the exploration of all maximal configurations
that include~$C$ and exclude~$D \cup \set e$.
To create a right branch, the algorithm first needs to decide if one such
maximal configuration (including~$C$ and excluding~$D \cup \set e$) exists,
which is an NP-complete problem~\cite{NRSCP18}.
In this work we use $k$-partial alternatives~\cite{NRSCP18} with $k=1$
to compute (or rather, approximate) such decision.
$k$-partial alternatives can be computed in P-time and are a good approximation
for the algorithm.

In simple terms, a $k$-partial alternative for node~$n$ is a hint for the
algorithm that a maximal configuration which includes~$C$ and excludes $D \cup
\set e$ exists.
When such a maximal configuration exists, a 1-partial alternative will also
exist. On the other hand, when such a maximal configuration does not exist, a 1-partial alternative
may still exist, therefore misguiding the algorithm. This is the cost we pay to
compute an approximate solution for the NP-complete problem using a P-time
algorithm.

We compute 1-partial alternatives in function \alternatives, shown in \cref{a:main2},
and called from \createrightbranches.
In \cref{fig:big.tree}, \createrightbranches would find that both~$n_2$ and~$n_1$ have
right branches.
The created right branch always consist on one right-node
(\cref{l:call.makeright} in \makerightbranch) followed by one or more left-nodes
(\cref{l:call.makeleft} in \makerightbranch).
In our example, this would create the nodes~$n_9$ and~$n_{10}$ for~$n_2$
and $n_{11}, n_{12}, n_{13}$ for~$n_1$.
The terminal node of any created right branch (both $n_{10}$ and $n_{13}$)
is always added to the queue (\cref{l:mrb.add} in \cref{a:main2}), so that they can later be
selected (\cref{l:select2} in \main) and expanded.

\subsection{Sweep bit}\label{sec:main.sweep.bit}

The original algorithm~\cite{RSSK15long} on which~\cref{a:main} is based
performs an in-order DFS traversal of the tree.\footnote{Recall that an
\emph{in-order} DFS traversal of a binary tree first visits the left subtree of
a node, then the node, then the right-subtree.}
This means that, given a node~$n$, it evaluates whether a right child node
exists for~$n$ after having completely visited the left subtree of~$n$.
However, in our algorithm, the user decides in which order the nodes are
expanded. This means that the left subtree might not have been entirely visited
when our algorithm needs to decide if a right child node exists in the call to
\createrightbranches in \cref{l:call-createrightbranches} from \main.
Consequently, even if a right node should be visited, the algorithm might not
have yet seen enough events to conclude so.
Not exploring such right node would mean that we miss maximal configurations,
leading to an incomplete exploration of the unfolding.

In order for our algorithm to explore all configurations we need to keep track of
what parts of the tree our algorithm has visited.
We do so in the simplest manner possible: we keep track of which nodes a
hypothetical, in-order DFS traversal of the tree would have
explored so far using the \emph{sweep bit}.

Our algorithm updates the sweep bit of the nodes in a manner that simulates an
in-order DFS traversal of the tree. At any point in time there is exactly one
node in the tree whose sweep bit is set to true. We call such node the \emph{sweep node}.
When the user happens to select the sweep node for exploration,
then the algorithm simulates (part of) the in-order DFS traversal, updating
the sweep bit accordingly.
When the user selects a node whose sweep bit is set to false,
then the sweep node remains unchanged, as the algorithm is not simulating
DFS traversal right now.

The management of the sweep bit is done in the \backtrack and \makeleft
functions.
Function \makeleft{$n,e$} creates a new left child node for node~$n$.
The sweep bit of the new node equals that of the parent node ($n$),
see \cref{l:sweep-bit-inherit} of \makeleft. One line below we
also turn off the sweep bit of the parent.
If the parent was the sweep node, now the left child is.
If the parent was not the sweep node, its left child will not become the sweep node.
When the sweep node is selected in \cref{l:select2} of \main, the calls
to \makeleft{} issued by \expandleft{} will simulate that the DFS explores the
left-most branch rooted at the sweep node, pushing the sweep bit
down the branch to the leaf node.

Function \backtrack{} updates the sweep bit to simulate the backtracking
operations of the DFS.\@
When \backtrack{$n$} is called (\cref{l:call-backtrack} of \main),
$n$ is always a leaf node.
If~$n$ is not the sweep node, then there is nothing to simulate and \backtrack
returns, \cref{l:backtrack-returns}.
When~$n$ is the sweep node, we backtrack moving upwards (\cref{l:parent})
along the branch and checking again every node for right branches
(\cref{l:call-createrightbranches-backtrack}).\footnote{Note that we have
already checked at some point in the past if such nodes had right branches, in
\cref{l:call-createrightbranches} of \main.
Back then the check was inconclusive, as we explained in~\cref{sec:main.right}.
The check we do now is decisive: if no right branches are found now, then no
maximal configuration exists that can be visited by exploring a right branch
from this node.}

If the node has a right child, then the leaf node of the leftmost branch rooted at the right child becomes the new sweep node.
If the node does not have a right child, then we remove it (because we know that
the algorithm has already explored the entire left and right subtrees) and continue backtracking.

\subsection{Correctness}

We re-state now the theorem in \cref{sec:algo.cutoffs}, splitting the statement
in two:

\begin{theorem}[Completeness, see \cref{t:algo.correctness}]
For any reachable state $s \in \reach{M_P}$, function \main in \cref{a:main}
explores a configuration~$C$ such that for some~$C' \subseteq C$ it holds that
$\state{C'} = s$.
\end{theorem}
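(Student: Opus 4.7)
The plan is to reduce the claim, via the unfolding semantics, to a statement about exploration of $\unf P$, and then to argue that \main{} visits enough configurations to cover every reachable state despite (i) cutoff pruning, (ii) the $1$-partial-alternative approximation, and (iii) the user-chosen search order.

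First I would fix $s \in \reach{M_P}$ and, using the completeness of $\unf P$ (Appendix F, item~4), pick a finite configuration $C^\star$ of $\unf P$ and an interleaving $\sigma \in \inter{C^\star}$ with $\state\sigma = s$. It suffices to find a configuration $C$ explored by \main{} such that some $C' \subseteq C$ has $\state{C'} = s$. Two cases: if no event on any interleaving leading to $s$ is a cutoff, then $C^\star$ itself is cutoff-free and any maximal cutoff-free configuration $C \supseteq C^\star$ qualifies with $C' = C^\star$. Otherwise, by induction on $|C^\star|$, replace $C^\star$ by a strictly smaller configuration reaching $s$: if $e \in C^\star$ is a cutoff with witness $e'$ satisfying $\state{[e']} = \state{[e]}$ and $|[e']|<|[e]|$, then the configuration $C^\star_{\mathrm{new}} \eqdef [e'] \cup (C^\star \setminus [e])$ (after a standard ``shifting'' argument showing that the actions of $C^\star \setminus [e]$ can be replayed from $\state{[e']}$, using validity of $\indep_P$) is a strictly smaller configuration of $\unf P$ reaching $s$. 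Iterating reduces to the cutoff-free case.

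Next I would show the exploration tree invariant: every cutoff-free maximal configuration of $\unf P$ is visited as the $C$-component of some leaf of the final tree. Following the approach of Rodríguez et al.~\cite{RSSK15long}, I would prove by induction on the node depth that, for every node $n = \tup{C,D,e,\ldots}$ ever allocated and any cutoff-free maximal configuration $\widehat C \in \conf{\unf P}$ with $C \subseteq \widehat C$ and $D \cap \widehat C = \emptyset$, either (a) $e \in \widehat C$ and the left child exists (which it does once the node is dequeued by the outer loop, since \expandleft{} creates it whenever $\encutoff{C}\setminus D$ is non-empty), or (b) $e \notin \widehat C$ and a right child is created. Case~(b) is the delicate one: the existence of $\widehat C$ itself furnishes an alternative, and the $1$-partial-alternative computation in \alternatives{} is known to be sufficient in the sense that whenever an alternative exists \emph{at the moment of inspection}, a $1$-partial alternative also exists provided the relevant conflicting extensions are in $U$. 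This is guaranteed because \cref{l:call-cex} of \main{} adds $\cexmain(n'.C)$ to $U$ before the branch is ever considered, and because the backtracking path re-invokes \createrightbranches{} in \cref{l:call-createrightbranches-backtrack}.

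The main obstacle, and the bit I would spend most space on, is showing that the sweep-bit mechanism compensates for the user-chosen search order: \createrightbranches{} may fail the first time a node is inspected simply because $U$ has not yet grown enough, and we must guarantee a \emph{second chance}. The argument is that the sweep bit simulates an in-order DFS; \backtrack{} re-runs \createrightbranches{} on each ancestor exactly when the sweep traversal ``returns'' to it, at which point the entire left subtree of that ancestor has been fully explored and hence all conflicting extensions visible along any descendant have been added to $U$. Combining this with the fact that every node added to $W$ (via \makeleft{} during \expandleft{} or via \makerightbranch{}) is eventually selected by the outer \ForEach{} loop in \cref{l:select2}, I conclude that every cutoff-free maximal configuration of $\unf P$ appears as a leaf. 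Together with the cutoff-reduction above, this yields the desired $C$ and $C'$, completing the proof. Soundness of the construction, which is invoked implicitly throughout, follows from \cref{def:unfd} and the validity of $\indep_P$ proved in Appendix D.
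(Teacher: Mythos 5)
Your overall strategy is sound, but it is far more detailed than what the paper actually does: the paper's own ``proof'' is a two-sentence sketch that simply asserts \main{} explores the same execution tree characterized in Appendix~B of~\cite{NRSCP18long}, that \alternatives{} implements $k$-partial alternatives with $k=1$, and then invokes Theorem~2 of~\cite{NRSCP18long}. You instead reconstruct the argument from first principles: reduction via the unfolding completeness result, a cutoff-elimination induction, the tree invariant in the style of~\cite{RSSK15long}, and an explicit treatment of why the sweep-bit/backtracking machinery gives \createrightbranches{} a ``second chance'' once $U$ has grown. This buys a self-contained argument and, notably, covers the cutoff case, which the paper's sketch silently omits (Theorem~2 of~\cite{NRSCP18long} concerns exploration of \emph{all} maximal configurations, whereas with \iscutoff{} enabled only the cutoff-free ones are visited; the bridge is only informally asserted in \cref{sec:algo.cutoffs}). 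The cost is that several of your steps are themselves only sketched where the cited works carry the real weight.

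Two points to tighten. First, in the cutoff-elimination step the set $[e'] \cup (C^\star \setminus [e])$ is in general \emph{not} a configuration of $\unf P$: the events of $C^\star \setminus [e]$ have canonical names whose histories contain events of $[e]$, so they cannot be adjoined to $[e']$ verbatim. You acknowledge the needed ``shifting'' in a parenthetical, but the displayed object should be $[e'] \cup \phi(C^\star \setminus [e])$ for the isomorphic copy $\phi$ obtained by replaying the suffix from $\state{[e']} = \state{[e]}$; strict size decrease then follows from $|[e']| < |[e]|$ and the induction is well-founded. Second, your case~(b) of the tree invariant leans on the claim that a $1$-partial alternative exists ``provided the relevant conflicting extensions are in $U$'' --- this is exactly the place where the argument is not yours to wave at: it is the content of Theorem~2 in~\cite{NRSCP18long}, and the only genuinely new obligation in this paper is that the user-ordered exploration plus sweep-bit bookkeeping eventually re-examines each node after its left subtree is complete. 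You identify that obligation correctly, so the proposal is acceptable as a proof sketch at the same level of rigor as the paper's.
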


\begin{proof}[Sketch]

\cref{a:main}
explores the same execution tree that we have characterized in Appendix B of
\cite{NRSCP18long}.
The procedure \alternatives of \cref{a:main}, to compute alternatives,
implements $k$-partial alternatives~\cite{NRSCP18long} with $k=1$.
Theorem 2 in~\cite{NRSCP18long} guarantees that all maximal configurations of
the unfolding will be explored.

\end{proof}

\begin{theorem}[Termination, see \cref{t:algo.correctness}]
\Cref{a:main} terminates
for any program~$P$ such that~$\reach P$ is finite.
\end{theorem}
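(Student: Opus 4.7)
The plan is to reduce termination to two finiteness claims: (i) the cutoff-free part of $\unf P$ is finite, and (ii) the exploration tree built by \main{} is a finite subset of a finitely describable universe. Once both hold, the outer loop on \cref{l:select2} of \main{} trivially terminates, because every addition to the queue~$W$ corresponds to a freshly allocated tree node, no tree node is ever re-added, and each iteration performs only a bounded amount of work (the calls to \expandleft, \cexmain, \createrightbranches, and \backtrack each manipulate finite data).

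First, I would establish the McMillan-style finiteness of the cutoff-free unfolding. Let $k \eqdef |\reach{M_P}|$, which is finite by hypothesis. By \cref{sec:algo.cutoffs}, an event $e$ is a cutoff whenever there exists some $e' \in U$ with $\state{[e]} = \state{[e']}$ and $|[e']| < |[e]|$; in particular, if one enumerates non-cutoff events in increasing order of $|[e]|$, then all of their local-configuration states $\state{[e]}$ must be pairwise distinct, so at most $k$ such events exist per ``value'' of the state — more carefully, along any $<$-chain of non-cutoff events the function $[e] \mapsto \state{[e]}$ is injective, so chains have length at most $k$. Combined with the fact that any reachable symbolic state enables only finitely many actions (by well-formedness, \cref{def:well-formed}, there are finitely many statements per thread and finitely many threads in any reachable configuration of bounded depth), this gives that the sub-PES of $\unf P$ induced by non-cutoff events and their conflicting extensions is finite. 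Call this finite set of events $U^\infty$.

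Second, I would show that every tree node $n = \tup{C,D,e,l,r,s}$ ever inserted into~$N$ satisfies $C \cup D \cup \set e \subseteq U^\infty$ and that~$C$ is a configuration of the restricted unfolding. The inclusion follows by induction on tree construction: the root has $C = D = \emptyset$; \makeleft only adds events produced by \encutoff (which by definition filters out cutoffs); \makeright only moves the selected~$e$ from the ``left'' role to~$D$; and \makerightbranch extends by events drawn from $\alternatives$, whose output is contained in $[e']$ for some $e' \in U$, hence in $U^\infty$. Consequently the map $n \mapsto \tup{C,D,e}$ is injective on~$N$ (no two distinct nodes can coexist with the same triple, because the tree shape is determined by the parent chain of left/right choices), and its range sits inside the finite set $\pows{U^\infty} \times \pows{U^\infty} \times U^\infty$. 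Hence $|N|$ is bounded by a constant depending only on $k$.

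Third, I would tie this back to the outer loop. $W$ receives nodes only from the initial insertion and from \makerightbranch, and in both cases the inserted node is a freshly allocated tree node that has never been in $W$ before. Since $|N|$ is bounded, $W$ receives finitely many elements total, and the \textbf{foreach} loop on \cref{l:select2} iterates only finitely many times. Each single iteration is itself finite: \expandleft walks down one leftmost branch whose length is at most the maximal size of a cutoff-free configuration (i.e., $\le k$); \cexmain and its subroutines in \cref{a:cex.main,a:cex.local,a:cex.acq,a:cex.w1,a:cex.notify} iterate over finite subsets of a finite configuration; \createrightbranches touches only the finitely many nodes in~$B$; and \backtrack ascends the tree removing nodes, so it cannot loop more than the depth of the tree.

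\paragraph{Main obstacle.} The delicate point is the McMillan-style bound: unlike the classical prefix construction, here the cutoff relation depends on the dynamically growing set~$U$, and conflicting extensions of cutoff-free configurations can themselves be cutoffs. I would handle this by proving, by simultaneous induction on $|[e]|$, that every event reachable by the algorithm's \cexmain/\expandleft machinery belongs to $U^\infty$, using the fact that \cexmain{C} first saturates~$C$ with enabled cutoffs (\cref{l:cex.main.cutoff}) so that all conflicting extensions discovered there are extensions of a finite cutoff-saturated configuration, and hence themselves either cutoffs or non-cutoffs whose local configurations fit the injectivity-on-states argument above. A secondary subtlety is that the user-selected search order breaks the classical DFS termination argument, but the sweep-bit mechanism (\cref{sec:main.sweep.bit}) only affects \emph{when} backtracking is performed; termination itself relies purely on the finiteness of $N$ and~$W$ established above and is insensitive to the exploration order.
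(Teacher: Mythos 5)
Your argument is correct in outline but takes a genuinely different route from the paper. The paper's proof of termination is a two-line reduction: it observes that \cref{a:main} explores the same execution tree characterized in Appendix~B of the quasi-optimal POR paper~\cite{NRSCP18long} and invokes Theorem~1 there, which already bundles together the finiteness of the cutoff-free prefix, the finiteness of the exploration tree, and the interaction with alternatives. You instead give a self-contained argument: a McMillan-style bound on $<$-chains of non-cutoff events via injectivity of $[e]\mapsto\state{[e]}$ (using $|\reach{M_P}|<\infty$), an embedding of tree nodes into the finite universe $\pows{U^\infty}\times\pows{U^\infty}\times U^\infty$, and per-iteration boundedness. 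What your version buys is transparency --- one sees exactly where finiteness of $\reach P$ enters and why the user-chosen search order and the sweep bit are irrelevant to termination; what the paper's version buys is that the cited theorem already discharges the two points your sketch leaves thin. First, the injectivity of $n\mapsto\tup{C,D,e}$ only bounds $|N|$ at any instant; to bound the total number of iterations of the loop at \cref{l:select2} you also need that a node with a given $\tup{C,D}$ is never re-allocated after \remove{} deletes it, since a bounded range does not by itself bound the number of allocations over time. Second, and more substantively, \alternatives{} returns $[e']$ for an $e'\in U$ with no cutoff check, and \cexmain{} deliberately saturates $C$ with \emph{enabled cutoff} events (\cref{l:cex.main.cutoff}) before computing extensions; hence \makerightbranch{} can splice a cutoff event into a node's $C$-component, and you must show that subsequent calls to \encutoff{} and \cexmain{} on such configurations still cannot escape $U^\infty$ (or else enlarge $U^\infty$ to a still-finite cutoff-saturated closure and redo the induction there). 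You correctly flag this as the main obstacle, but the simultaneous induction you propose is exactly the content the paper outsources to Theorem~1 of~\cite{NRSCP18long}, so it would have to be carried out in full for your proof to stand on its own.
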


\begin{proof}[Sketch]

\cref{a:main}
explores the same execution tree that we have characterized in Appendix B of
\cite{NRSCP18long}.
Theorem 1 in~\cite{NRSCP18long} guarantees that the exploration always finishes.

\end{proof}

\section{Delta-Based Memory Fingerprinting}\label{app:fingerprint}
Our cutoff events rely on a quick and memory-efficient way to identify whether any two events are associated with the same program state.
To this end, we define the \emph{fingerprint} of a program state, which is, in essence, a strong hash of the program state.
We then store and compare only these fingerprints.
To work as intended, the fingerprinting scheme has to satisfy three properties:

\begin{enumerate}
	\item Each fingerprint uniquely (with very high probability) identifies the program state that is equivalent to exactly the local configuration of a given event.
	\item Fingerprints are efficiently computable and comparable.
	\item The fingerprint of an event is computable without explicitly creating a program state that contains exactly the local configuration of the given event, i.e., no events concurrent to it.
\end{enumerate}

Instead of defining the fingerprint directly over the program state, we define it over \emph{fragments} of the program state.
By deconstructing the program state into a fine-grained set of fragments, we can limit the amount of work needed to update the fingerprint to only the fragments that changed.
The only necessary property for the set of fragments associated with a program state is, that the fragments of two program states are equal iff the states are equal as well.

We use a tagged tuple representation to ensure that no structural weaknesses exist that would cause multiple program states to map to the same set of fragments.
For example, a byte of memory on the program's heap, which has the concrete (i.e., not symbolic) value \(v\) and is located at location \(l\), is represented by the fragment \(\tup{\text{\texttt{"concrete heap byte"}}, l, v}\).

\begin{definition}\label{def:fingerprint}
Given a strong cryptographic hash function \(h\) and an arbitrary bit vector \(I\) of the same size as the output of \(h\), we define the \emph{fingerprint} of a program state \(s\) to be \(\text{fingerprint}\left(s\right) \eqdef I \oplus \bigoplus_{f \in \text{fragments}\left(s\right)} h\left(f\right) \), with the \({\oplus}\) symbol denoting the bitwise exclusive or operation.
\end{definition}

By basing the fingerprint on a strong cryptographic hash function (we use BLAKE2b) and ensuring that the fragment representation of the program state itself is collision-free, we can ensure that collisions in the fingerprint will occur with negligible probability and thus that the first of the stated properties holds.

\subsection{Efficiently Computing Fingerprints}
The core insight of our delta-based memory fingerprinting scheme is that we can efficiently update the fingerprint when the program state changes upon executing an instruction.
When an instruction changes the program state such that the new program state differs from the old program state in one or more fragments, we can compute a delta that encompasses this change and that can be used to easily transform the old fingerprint into the new fingerprint.

Given a change that causes a fragment \(f\) to be replaced with a new fragment \(f'\), the delta is the bitwise xor of their hashes.
To update the fingerprint from the old to the new value, this delta can be xor-ed onto the old fingerprint.
To see why this is the case, consider that the old fingerprint is the xor of many different values, one of which is the hash of the old fragment.
The new fingerprint differs from the old fingerprint only in that this element is now the hash of the new fingerprint.
To see why that is the case, consider the following transformation:

\begin{align*}
	\text{fingerprint}_\text{new} &= \textcolor{red!75!black}{\text{fingerprint}_\text{old}} \oplus \textcolor{blue!75!black}{\Delta} \\
	&= \textcolor{red!75!black}{\text{fingerprint}_\text{old}} \oplus \textcolor{blue!75!black}{h\left(f\right) \oplus h\left(f'\right)} \\
	&= \textcolor{red!75!black}{\left( I \oplus \ldots \oplus h\left(f\right) \oplus \ldots \right)} \oplus \textcolor{blue!75!black}{h\left(f\right) \oplus h\left(f'\right)} \\
	&= \textcolor{red!75!black}{I \oplus \ldots \oplus} \left( \textcolor{red!75!black}{h\left(f\right)} \oplus \textcolor{blue!75!black}{h\left(f\right) \oplus h\left(f'\right)} \right) \textcolor{red!75!black}{\oplus \ldots} \\
	&= \textcolor{red!75!black}{I \oplus \ldots \oplus} \left( 0 \oplus \textcolor{blue!75!black}{h\left(f'\right)} \right) \textcolor{red!75!black}{\oplus \ldots} \\
	&= \textcolor{red!75!black}{I \oplus \ldots \oplus \textcolor{blue!75!black}{h\left(f'\right)} \oplus \ldots} \\
	&= \text{fingerprint}_\text{new}
\end{align*}

The delta updates inherit many interesting properties from the xor operation they are based on, such as commutativity, associativity and self-inversion.
For example, multiple delta-updates can be combined into one via bitwise xor.

For our algorithm, we choose the initial value \(I\) to be equal to the xor of the hashes of all fragments of the initial program state, i.e., such that the fingerprint of the initial program state is \(0\).
Choosing \(I\) this way, simplifies further computations by letting us restate the fingerprint as the bitwise xor of all changes made to the initial program state.
This choice is mostly based on convenience, but also gives a tiny runtime improvement by letting us disregard the initial fingerprint when composing fingerprints from deltas.

\begin{figure*}[t]
	\centering
	\begin{center}\newcommand{\delt}[2]{\text{\tikz[baseline=(d.base)]{\node[draw={#1},fill={#1!25!white},rounded corners=0.2ex] (d) {\(\Delta_{#2}\)\hspace*{-0.1em}};}}}\definecolor{thread1}{HTML}{D00000}\definecolor{thread2}{HTML}{0000FF}\begin{tikzpicture}[
	event/.style={
		draw,
		font=\fontsize{7}{8}\selectfont{},
		line width=0.015cm,
		text width=0.3cm,
		align=center,
		inner xsep=0.03cm,
		inner ysep=0.05cm,
	},
	lbl/.style={
		font=\fontsize{7}{8}\selectfont{},
		inner xsep=0.02cm,
		inner ysep=0.03cm,
		outer xsep=0.1cm,
		outer ysep=0.025cm,
	},
	succ/.style={
		draw,
		->,
		>={Triangle[length=0.1cm,width=0.1cm]},
		line width=0.02cm,
	},
	conflict/.style={
		draw,
		-,
		color=red!80!black,
		>={Triangle[length=0.1cm,width=0.1cm]},
		densely dashed,
		line width=0.02cm,
	},
]

	\def\y{-1cm};
	\def\x{1cm};

	\path[draw=thread1,fill=thread1!5!white,line width=1pt] (-0.1cm,1cm) rectangle (-6.0cm,-3.5cm);
	\node[anchor=north east,font=\rmfamily\scshape{}\fontsize{9}{9}\selectfont{},inner xsep=0] at (-0.35cm,1cm) {\textcolor{thread1}{Thread 1}};
	\path[draw=thread2!80!black,fill=thread2!5!white,line width=1pt] (0.1cm,1cm) rectangle (6.0cm,-3.5cm);
	\node[anchor=north west,font=\rmfamily\scshape{}\fontsize{9}{9}\selectfont{},inner xsep=0] at (0.35cm,1cm) {\textcolor{thread2!80!black}{Thread 2}};

	\node[event] (e1) at (-0.5*\x,0) {1};
	\node[lbl,anchor=north east] at (e1.west) {\(\text{fp}=\delt{thread1}{1}\)};
	\node[lbl,anchor=south east] at (e1.west) {\(\text{thread updates}=\delt{thread1}{1}\)};

	\node[event] (e2) at (-0.5*\x,\y) {2};
	\path[succ] (e1.south) to (e2.north);
	\node[lbl,anchor=north east] at (e2.west) {\(\text{fp}=\delt{thread1}{1} \oplus \delt{thread1}{2}\)};
	\node[lbl,anchor=south east] at (e2.west) {\(\text{thread updates}=\delt{thread1}{1} \oplus \delt{thread1}{2}\)};

	\node[event] (e3) at (0.5*\x,\y) {3};
	\path[succ] (e1.south) to (e3.north);
	\node[lbl,anchor=north west] at (e3.east) {\(\text{fp}=\delt{thread1}{1} \oplus \delt{thread2}{3}\)};
	\node[lbl,anchor=south west] at (e3.east) {\(\text{thread updates}=\delt{thread2}{3}\)};

	\node[event] (e4) at (-0.5*\x,2*\y) {4};
	\path[succ] (e2.south) to (e4.north);
	\node[lbl,anchor=north east] at (e4.west) {\(\text{fp}=\delt{thread1}{1} \oplus \delt{thread1}{2} \oplus \delt{thread1}{4}\)};
	\node[lbl,anchor=south east] at (e4.west) {\(\text{thread updates}=\delt{thread1}{1} \oplus \delt{thread1}{2} \oplus \delt{thread1}{4}\)};

	\node[event] (e5) at (0.5*\x,2*\y) {5};
	\path[succ] (e3.south) to (e5.north);
	\node[lbl,anchor=north west] at (e5.east) {\(\text{fp}=\delt{thread1}{1} \oplus \delt{thread2}{3} \oplus \delt{thread2}{5}\)};
	\node[lbl,anchor=south west] at (e5.east) {\(\text{thread updates}=\delt{thread2}{3} \oplus \delt{thread2}{5}\)};

	\node[event] (e6) at (-0.5*\x,3*\y) {6};
	\path[succ] (e4.south) to (e6.north);
	\path[succ] (e5.south) to (e6.north);
	\node[lbl,anchor=north east] at (e6.west) {\(\text{fp}=\delt{thread1}{1} \oplus \delt{thread1}{2} \oplus \delt{thread2}{3} \oplus \delt{thread1}{4} \oplus \delt{thread2}{5} \oplus \delt{thread1}{6}\)};
	\node[lbl,anchor=south east] at (e6.west) {\(\text{thread updates}=\delt{thread1}{1} \oplus \delt{thread1}{2} \oplus \delt{thread1}{4} \oplus \delt{thread1}{6}\)};

	\node[event] (e7) at (0.5*\x,3*\y) {7};
	\path[succ] (e5.south) to (e7.north);
	\node[lbl,anchor=north west] at (e7.east) {\(\text{fp}=\delt{thread1}{1} \oplus \delt{thread2}{3} \oplus \delt{thread2}{5} \oplus \delt{thread2}{7}\)};
	\node[lbl,anchor=south west] at (e7.east) {\(\text{thread updates}=\delt{thread2}{3} \oplus \delt{thread2}{5} \oplus \delt{thread2}{7}\)};
\end{tikzpicture}\end{center}
\end{figure*}

Following this restatement, the fingerprint of an event is the application of all deltas in its history.
Recall that we can freely choose the order in which to apply the delta updates, because delta application is commutative and associative.
\Cref{fig:fingerprint} shows an example partial order with the events annotated with their fingerprints, as well as the per-thread updates.

The per-thread updates are a field associated with an event that contains all delta updates from the history of that event, that were caused by the same thread as the event is executed on.
They can be computed iteratively by keeping score of an accumulating delta for each thread of the program state while exploring.
We can compute the fingerprint for an event \(e\) by xor-ing the per-thread updates from the thread-maximal events in \(\left[e\right]\), i.e., the most recent event for each thread that is in \(\left[e\right]\).
For example, in \cref{fig:fingerprint}, the fingerprint of event 6 is the xor of the thread updates of event 6 and event 5 and the fingerprint of event 7 is the xor of the thread updates of events 1 and 7.
The fingerprint of event 2 is equal to its own thread updates, as \(\left[e_2\right]\) only contains events of one thread. \clearpage
\fi

\bibliography{main}

\end{document}